\numberwithin{equation}{section}
\newtheorem{remark}{Remark}[section]
\renewcommand{\tilde}{\widetilde}
\renewcommand{\bar}{\overline}
\newcommand*{\dd}{\mathop{}\!\mathrm{d}}
\newcommand{\trans}{\mathsf{T}}
\newcommand{\hermi}{\mathsf{H}}
\DeclareMathOperator{\ran}{Ran}
\DeclareMathOperator{\spn}{Span}
\numberwithin{figure}{section}
\newcommand{\Hbulk}{H_{\text{bulk}}}
\newcommand{\Hledge}{H^{\text{L}}_{\text{edge}}}
\newcommand{\Hedge}{H_{\text{edge}}}
\newcommand{\Hcedge}{H^{\text{cont}}_{\text{edge}}}
\newcommand{\Hredge}{H^{\text{R}}_{\text{edge}}}
\newcommand{\Hledged}{{H}^{\text{L}}_{\text{edge}}}
\newcommand{\Hledget}{{H}^{\text{L}}_{\text{trunc}}}
\newcommand{\Hzzedge}{H^{\text{ZZ}}_{\text{edge}}}
\newcommand{\Hzzedged}{{H}^{\text{ZZ}}_{\text{edge}}}
\newcommand{\Gammatrunc}{\Gamma_{\text{trunc}}}
\newcommand{\HGammatrunc}{(\Hledged \Gamma)_{\text{trunc}}}
\newcommand{\psijtrunc}{\psi_{j,\text{trunc}}}
\newcommand{\per}{\text{per}}
\newcommand{\fin}{\text{fin}}
\DeclareMathOperator{\nul}{Nul}
\newcommand{\dash}{-{}- }
\newtheorem{assumption}{Assumption}[section]
\newcommand{\ip}[2]{\ensuremath{ \left< \left. #1 \right| #2 \right> } }
\title{Computing edge states without hard truncation}
\author{Kyle Thicke, Alexander B. Watson, and Jianfeng Lu}
\begin{document} 

%TO DO AW:
%\input{todo_AW}

%\newpage

\maketitle

\begin{abstract}
We present a numerical method which accurately computes the discrete spectrum and associated bound states of Hamiltonians which model electronic ``edge'' states localized at boundaries of one and two-dimensional crystalline materials. The problem is non-trivial since arbitrarily large finite ``hard'' truncations of the Hamiltonian in the infinite bulk direction tend to produce spurious bound states partially supported at the truncation. Our method, which overcomes this difficulty, is to compute the Green's function of the Hamiltonian by imposing an appropriate boundary condition in the bulk direction; then, the spectral data is recovered via Riesz projection. We demonstrate our method's effectiveness by studies of edge states at a graphene zig-zag edge in the presence of defects modeled both by a discrete tight-binding model and a continuum PDE model under finite difference discretization. Our method may also be used to study states localized at domain wall-type edges in one and two-dimensional materials where the edge Hamiltonian is infinite in both directions; we demonstrate this for the case of a tight-binding model of distinct honeycomb structures joined along a zig-zag edge. 
\end{abstract}

\section{Introduction} \label{sec:introduction} \setcounter{page}{1}

When computing the electronic states of crystalline materials it is typical to first treat the material as infinite in all directions of periodicity, leading to an periodic bulk Hamiltonian which is naturally analyzed using the Bloch transform (see \cite{ashcroft_mermin}, for example). The eigenstates of this Hamiltonian, known as bulk states, are quasi-periodic with respect to the crystal lattice and hence extensive throughout the material. Materials may also host electronic states which decay rapidly away from the physical edge of the material known as edge states. Edge states are bound states of the semi-infinite Hamiltonian (known as the edge Hamiltonian) obtained by truncating the infinite bulk Hamiltonian in one direction and imposing a Dirichlet boundary condition at the truncation. Such states play an important role in the theory of the quantum Hall effect and of topological insulators, and are of independent interest for wave-guiding applications because of their robustness to certain classes of local perturbations \cite{1981Halperin,1993Hatsugai,2014LuJoannopoulosSoljacic,2018Bal,2018Ozawaetal}.

In this work we propose a numerical method for computing the discrete spectrum and associated bound states of such semi-infinite edge Hamiltonians. The problem is non-trivial by the fact that arbitrarily large finite truncations of the Hamiltonian produce spurious bound states partially supported at the truncation which do not correspond to bound states of the original Hamiltonian. The key idea of our method, which overcomes this difficulty, is to instead compute the Green's function of the edge Hamiltonian along an appropriate contour by truncating the problem while imposing an appropriate boundary condition at the truncation. The spectrum and eigenstates of the original Hamiltonian may then be recovered using the Riesz projection formula. Our method relies on eventual periodicity of the Hamiltonian which implies that exact boundary conditions for computing the Green's function can be derived from a transfer matrix; similar ideas have been explored for numerical computation of local defects in electronic structures in \cite{2016LiLinLu}, and for analytical calculations e.g.,  \cite{2016DwivediChua,2017PengBaovonOppen}. We provide self-contained proofs of the existence of exact boundary conditions for computing the Green's function for arbitrary finite range discrete Hamiltonians and that all spectral data can be recovered exactly from the Green's function. Hence our method is exact up to rounding errors for discrete models and exact up to discretization error of continuum models. 

The rest of the paper is organized as follows. In Section \ref{sec:SSH_pollution} we motivate our method by demonstrating the failure of a na\"ive method using a one-dimensional SSH-type model which has the essential features of the general case. In Section \ref{sec:GreenFunctionMethod} we introduce our method and show that it accurately computes spectral data of this model. In Section \ref{sec:trans_mat_in_general} we generalize our method to a class of models whose Hamiltonians are ``eventually periodic'', and prove the existence of exact boundary conditions for computing the Green's function: first under a mild simplifying assumption on the Hamiltonian (Section \ref{sec:BCs_in_general}) and then in generality for finite range Hamiltonians (Appendices \ref{sec:method_in_general} and \ref{sec:method_in_general_proofs}). In Section \ref{sec:appToHoneycomb} we present results of computations using our method of states localized at edges of two-dimensional honeycomb structures modeled both in the tight-binding limit and by a continuum Schr\"odinger equation under finite difference discretization.

\medskip

\noindent {\bf Acknowledgements:}  This work is supported in part by National Science Foundation through grants DMS-1454939 and ACI-1450280 and Department of Energy through grant DE-SC0019449. The authors would also like to thank Jacob Shapiro and Michael I. Weinstein for stimulating discussions. Early stages of this project were carried out by Qinyi Zhu as an undergraduate researcher visiting Duke University during the Summer of 2017.

\section{Spurious bound states in a one-dimensional model: the SSH model} \label{sec:SSH_pollution}

In this section, in order to motivate and clearly lay out the ideas behind our method, we restrict attention to a discrete one-dimensional model which has the essential features of the general case: the Su-Schrieffer-Heeger (SSH) model with real, nearest-neighbor hopping \cite{1979SuSchriefferHeeger}. The structure of this section is as follows.

In Section \ref{sec:bulk_SSH}, we review the definition and spectrum (known as the bulk spectrum) of the SSH bulk Hamiltonian. In Section \ref{sec:SSH_with_edge}, we introduce an SSH edge Hamiltonian with defects whose essential spectrum is equal to that of the bulk Hamiltonian but whose discrete spectrum and associated bound states are difficult to compute by hand. In Section \ref{sec:hard_trunc} we then demonstrate the appearance of spurious bound states in na\"ive numerical computations. 

The stage will then be set for us to present in Section \ref{sec:GreenFunctionMethod} our main method which we refer to as the Green's function method. This method accurately computes the discrete spectrum and associated bound states of the SSH edge Hamiltonian with defects within the gap between essential spectrum.

\subsection{SSH bulk Hamiltonian and bulk spectrum} \label{sec:bulk_SSH} 

In this section, we review the definition and spectrum of the SSH bulk Hamiltonian. We consider an electron (we ignore spin) ``hopping'' along a one-dimensional infinite periodic lattice (chain) with hopping amplitudes that alternate between sites along the chain.  A fundamental cell of the chain consists of any two neighboring sites, which we label $A$ and $B$. We introduce the notation
\begin{equation}
	\psi_m = \begin{bmatrix} \psi_m^A \\ \psi_m^B \end{bmatrix} \in \mathbb{C}^2, \quad m \in \mathbb{Z},
\end{equation}
to represent the restriction of the electron wavefunction $\psi \in l^2(\mathbb{Z};\mathbb{C}^2)$ to the $m$th fundamental cell of the chain. Denoting the intra- and inter-cell hopping amplitudes by $t_1$ and $t_2$ (for simplicity, we ignore all hopping other than between nearest-neighbors and assume that $t_1$ and $t_2$ are real and non-zero), the bulk Hamiltonian $\Hbulk$ is 
\begin{equation} \label{eq:ssh_bulk}
\begin{split}
	[ \Hbulk \psi ]_m =  \phantom{.} &A^* \psi_{m-1} + V \psi_m + A \psi_{m+1} \quad m \in \mathbb{Z}, \\
        &V := \begin{bmatrix} 0 & t_1 \\ t_1 & 0 \end{bmatrix} \quad A := \begin{bmatrix} 0 & 0 \\ t_2 & 0 \end{bmatrix}.
\end{split}
\end{equation}
Periodicity of the bulk Hamiltonian \eqref{eq:ssh_bulk} implies, via Bloch's theorem (see \cite{ashcroft_mermin,reed_simon_4}, for example), that all (generalized) eigenfunctions $\Phi$ of $\Hbulk$ are $k$-quasi-periodic: $\Phi_{m+1}(k) = e^{i k} \Phi_m(k) \; m \in \mathbb{Z}$ for $k \in [-\pi,\pi]$. A standard calculation then shows that the associated eigenvalues of these eigenfunctions are given by
\begin{equation} \label{eq:epair_cell}
	E^\pm(k) = \pm \left| t_1 + e^{- i k} t_2 \right| \quad k \in [-\pi,\pi].
\end{equation}

The spectrum of \eqref{eq:ssh_bulk}, $\sigma(\Hbulk)$, is therefore the union of the real intervals (known as spectral bands) swept out by the functions \eqref{eq:epair_cell} as $k$ is varied over the interval $[-\pi,\pi]$. If $\frac{ |t_2| }{ |t_1| } = 1$, then $\sigma(\Hbulk)$ consists of a single interval. If $\frac{ |t_2| }{ |t_1| } \neq 1$, then $\sigma(\Hbulk)$ consists of two intervals, separated by a gap which is symmetric about $0$.

The (generalized) eigenvectors of $\Hbulk$ corresponding to the eigenvalues $E^\pm(k)$ are known as bulk states. These states do not decay as $m \to \pm \infty$.

\subsection{SSH edge Hamiltonian and edge spectrum} \label{sec:SSH_with_edge}

We define a SSH left edge Hamiltonian which acts on $l^2(\mathbb{N};\mathbb{C}^2)$ as follows
\begin{equation} \label{eq:ssh_edge_nontriv}
\begin{split}
	&\left[ \Hledge \psi \right]_m = A^*(m-1) \psi_{m-1} + V(m) \psi_m + A(m) \psi_{m+1}  \quad m \in \mathbb{N}  \\
        &V(m) := \begin{bmatrix} V^A(m) & t_1(m) \\ t_1(m) & V^B(m) \end{bmatrix} \quad A(m) := \begin{bmatrix} 0 & 0 \\ t_2(m) & 0 \end{bmatrix}   \\
	&\psi_0 = 0.
\end{split}
\end{equation}
Here and in what follows we adopt the convention that the natural numbers start at $1$, i.e. $\mathbb{N} = \left\{1,2,...\right\}$. Note that we allow the edge to have defects, modeled by (non-zero) hopping amplitudes whose values depend on the cell index $m$, and for possibly non-zero real onsite potentials $V^\sigma(m), \sigma\in\{A,B\}$. We make the following assumption on the behavior of these functions for large $m$. 
\begin{assumption}[Periodicity of bulk medium] \label{as:bulk_periodicity}
There exists a positive integer $M \geq 1$ such that for all $m \geq M$:
\begin{itemize}
\item $t_1(m) = t^\infty_1$ and $t_2(m) = t^\infty_2$, where $t^\infty_1$ and $t^\infty_2$ are non-zero real numbers
\item $V^\sigma(m) = 0$ for $\sigma \in \{A,B\}$.
\end{itemize}
\end{assumption}
\begin{figure}
\centerline{\includegraphics[trim=0mm 6mm 0mm 5mm, clip, scale=.7]{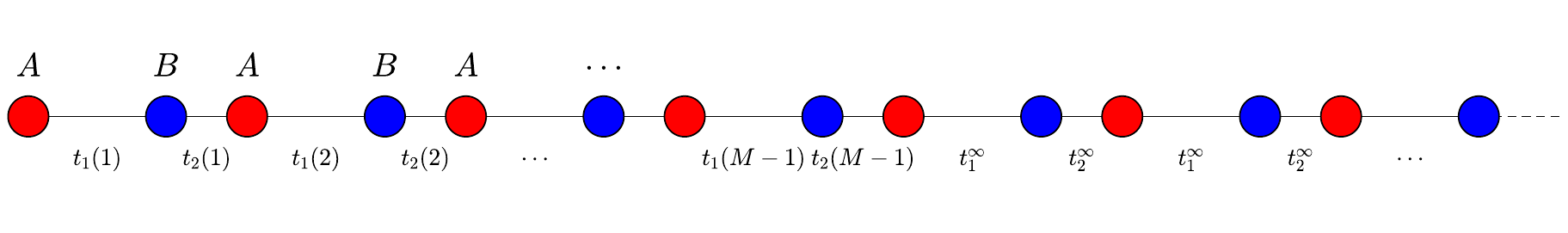}}
\caption{Schematic illustrating the SSH model, in particular Assumption \ref{as:bulk_periodicity}. $A$ and $B$ sites are colored red and blue respectively.  The hopping amplitudes between sites are labeled underneath the corresponding lines connecting the dots.  Assumption \ref{as:bulk_periodicity} states that for $m$ greater than or equal to some positive integer $M$ the hopping amplitudes $t_1(m)$ and $t_2(m)$ are constant and all onsite potentials $V^A(m), V^B(m)$ are zero. %\jl{how is "on-site" potential reflected in the figure?} \aw{It isn't. I'm not sure what would be a good way to do that in fact.}
} 
\label{fig:SSH_schematic}
\end{figure}
For a schematic illustrating Assumption \ref{as:bulk_periodicity}, see Figure \ref{fig:SSH_schematic}.

Under Assumption \ref{as:bulk_periodicity}, for $m \geq M + 1$ the Hamiltonians $\Hbulk$ and $\Hledge$ are identical and hence by the Weyl criterion
\begin{equation}
    \sigma_{ess}(\Hledge) = \sigma(\Hbulk),
\end{equation}
where $\Hbulk$ is defined by \eqref{eq:ssh_bulk} with $t_1$ and $t_2$ replaced by $t_1^\infty$ and $t_2^\infty$.

Regarding the discrete spectrum of \eqref{eq:ssh_edge_nontriv} we have the following standard result on existence of bound states with eigenvalue $E = 0$. %\aw{slight re-write here}
\begin{proposition} \label{prop:dis_zero_mode}
Assume that $V^\sigma(m) = 0$ for $\sigma \in \{A,B\}$ and all $m \in \mathbb{N}$. Then, if $\frac{|t^\infty_1|}{|t^\infty_2|} < 1$, then $E = 0$ is an eigenvalue of $\Hledge$, with associated bound state
\begin{equation} \label{eq:dis_lbstate}
	\psi_m = c \begin{bmatrix} \left( - \frac{ t_1(m-1) }{ t_2(m-1) } \right)^{m-1} \\ 0 \end{bmatrix} \quad m \in \mathbb{N},
\end{equation}
where $c$ is an arbitrary complex constant. If instead $\frac{|t^\infty_1|}{|t^\infty_2|} \geq 1$, $E = 0$ is not an eigenvalue of \emph{$\Hledge$}. 
\end{proposition}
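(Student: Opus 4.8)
The plan is to exploit the chiral (sublattice) structure of the SSH Hamiltonian at zero energy: with all onsite potentials vanishing, the equation $\Hledge\psi = 0$ decouples into two independent first-order recursions for the $A$- and $B$-components. First I would write $\Hledge\psi = 0$ out componentwise. Using the explicit forms of $V(m)$ and $A(m)$ (with $V^A = V^B = 0$), the $A$-row at site $m$ reads $t_2(m-1)\psi_{m-1}^B + t_1(m)\psi_m^B = 0$, which involves only $\psi^B$, while the $B$-row reads $t_1(m)\psi_m^A + t_2(m)\psi_{m+1}^A = 0$, which involves only $\psi^A$. Since all hoppings are nonzero, each of these is a genuine first-order recursion, and the two sublattices are completely uncoupled.

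Next I would use the boundary condition $\psi_0 = 0$ to pin down the solution. At $m=1$ the $A$-row becomes $t_1(1)\psi_1^B = 0$ (the $\psi_0$ term drops by the boundary condition), forcing $\psi_1^B = 0$; the $B$-recursion $\psi_m^B = -\frac{t_2(m-1)}{t_1(m)}\psi_{m-1}^B$ then propagates this to $\psi_m^B \equiv 0$. The $A$-recursion $\psi_{m+1}^A = -\frac{t_1(m)}{t_2(m)}\psi_m^A$ leaves $\psi_1^A =: c$ free and determines $\psi_m^A = c\prod_{j=1}^{m-1}\bigl(-\tfrac{t_1(j)}{t_2(j)}\bigr)$, i.e.\ the state \eqref{eq:dis_lbstate}. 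This shows the space of formal zero-energy solutions is exactly one-dimensional, so the only remaining question is whether it lies in $l^2(\mathbb{N};\mathbb{C}^2)$.

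The final step is the square-summability analysis, which is where the dichotomy enters. By Assumption \ref{as:bulk_periodicity}, for $j \geq M$ the ratio $-t_1(j)/t_2(j)$ equals the constant $-t_1^\infty/t_2^\infty$, so for $m > M$ the tail of $\psi_m^A$ is geometric with modulus $C\,(|t_1^\infty|/|t_2^\infty|)^{m-M}$ for a fixed $C \neq 0$ (the finitely many defect factors contribute only an overall constant). Hence $\sum_m|\psi_m^A|^2$ converges precisely when $|t_1^\infty|/|t_2^\infty| < 1$: in that case \eqref{eq:dis_lbstate} is a genuine $l^2$ eigenfunction and $E = 0$ is an eigenvalue of $\Hledge$, whereas if $|t_1^\infty|/|t_2^\infty| \geq 1$ the terms do not decay to zero, the unique formal solution fails to be square-summable, and $E=0$ is not an eigenvalue.

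I do not anticipate a serious obstacle, since the decoupling at zero energy trivializes the transfer-matrix analysis. The one point requiring care is the uniqueness claim: I must verify that no second independent formal solution hides in the $B$-sublattice, which is exactly what the boundary-condition argument rules out. I would also flag the borderline case $|t_1^\infty| = |t_2^\infty|$, where the tail has constant nonzero modulus and so still fails to be square-summable, ensuring the strict inequality in the statement is the correct threshold.
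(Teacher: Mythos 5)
Your proof is correct and is precisely the standard sublattice-decoupling argument the paper has in mind: the paper states Proposition \ref{prop:dis_zero_mode} as a ``standard result'' and omits the proof, and your chain of steps (the $A$- and $B$-rows decouple at $E=0$, the boundary condition kills the $B$-component and leaves a one-dimensional space of formal solutions, and square-summability of the geometric tail decides the dichotomy, including the borderline case $|t_1^\infty|=|t_2^\infty|$) is exactly that argument. One small point worth flagging: the recursion actually yields $\psi_m^A = c\prod_{j=1}^{m-1}\bigl(-t_1(j)/t_2(j)\bigr)$, which agrees with the displayed expression in \eqref{eq:dis_lbstate} only when the hoppings are constant in $m$, so the formula in the statement should be read as (or corrected to) that product --- your version is the right one in the presence of defects.
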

The bound state \eqref{eq:dis_lbstate} is known as an edge state because it decays away from the edge of the material. %\jl{edge of the material?} \aw{yes, fixed now}
The existence of edge states can be related to topological invariants associated with the associated bulk Hamiltonian in a link known as the bulk-boundary correspondence (see \cite{1993Hatsugai,2013GrafPorta,Asboth,2011DelplaceUllmoMontambaux}, and references within, for example). We will adopt the convention that any bound state of the edge Hamiltonian is an edge state even if it is not associated with any bulk invariant.

An SSH right edge Hamiltonian $\Hredge$ can be defined on $l^2(-\mathbb{N};\mathbb{C}^2)$ analogously to $\Hledge$. Under an analogous assumption to Assumption \ref{as:bulk_periodicity} it is clear that $\sigma_{ess}(\Hredge)$ $= \sigma(\Hbulk)$ and that $\Hredge$ has a bound state with eigenvalue $0$ whenever $\frac{|t_2|}{|t_1|} > 1$ and $V^\sigma(m) = 0$ for $\sigma \in \{A,B\}$ and all $m \in - \mathbb{N}$. While \eqref{eq:dis_lbstate} is entirely supported on $A$ sites, the bound state of $\Hredge$ with eigenvalue $0$ is entirely supported on $B$ sites.

We emphasize that when $V^A(m)$ and $V^B(m)$ are non-zero and $t_1(m)$ and $t_2(m)$ are not constant, $\Hledged$ may have multiple eigenvalues in the gap between essential spectrum and $0$ may or may not be one of them (see Figure \ref{fig:SSH_exp}). In the next section we consider a na\"ive approach to computing all such eigenvalues of \eqref{eq:ssh_edge_nontriv} and their associated bound states.

\subsection{Spurious bound states associated with finite ``hard'' truncation of the semi-infinite Hamiltonian} \label{sec:hard_trunc}

Since bound states of \eqref{eq:ssh_edge_nontriv} must decay as $m \rightarrow \infty$, it is tempting to try to approximate such bound states and their associated eigenvalues by imposing a Dirichlet boundary condition at some large value of the cell index $m$, i.e. by computing solutions of the truncated eigenequation
\begin{equation}
	\Hledget \psi = E \psi \quad \psi \in l^2(\left\{ 1,...,\mathcal{M} \right\};\mathbb{C}^2)
\end{equation}
for some choice of $\mathcal{M} \gg M$, where $M$ is the integer appearing in Assumption \ref{as:bulk_periodicity} and where $\Hledget$ is defined by
\begin{equation} \label{eq:ssh_edge_nontriv_trunc}
\begin{split}
	&[ \Hledget \psi ]_m = A^*(m-1) \psi_{m-1} + V(m) \psi_m + A(m) \psi_{m+1} \quad m \in \{1,...,\mathcal{M}\}  \\
	&\psi_0 = \psi_{\mathcal{M}+1} = 0
\end{split}
\end{equation}
where $V(m)$ and $A(m)$ are as in \eqref{eq:ssh_edge_nontriv}. We call this method the hard truncation method. This approach produces inaccurate results because $\Hledget$ can have bound states partially supported at the truncation for arbitrarily large $\mathcal{M}$; see Figure \ref{fig:hardtrunc_edgestates}.

\begin{figure}
%plot_hardtrunc_edgestates(10,1,2)
\centerline{\includegraphics[scale=.75]{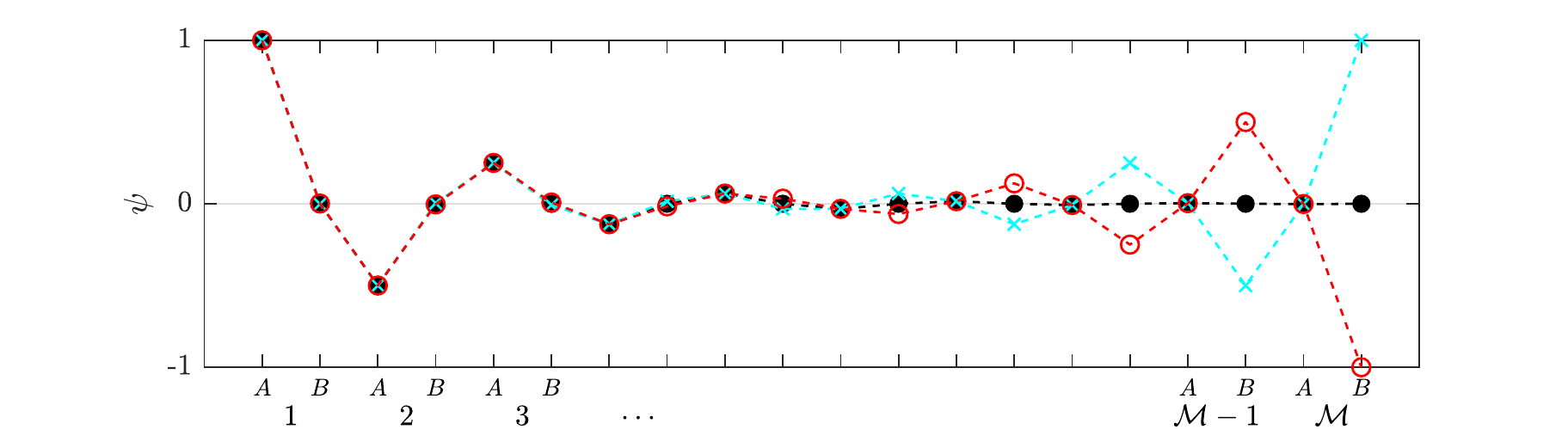}}
\caption{Bound states of the truncated left edge Hamiltonian $\Hledget$ \eqref{eq:ssh_edge_nontriv_trunc} (eigenvalues $\pm 9.16 \times 10^{-5}$, plotted with cyan $\times$s and red circles) when $t_j(m) = t_j^\infty, j \in \left\{1,2\right\}$ and $V^\sigma = 0, \sigma \in \{A,B\}$ for all $m$ and $\frac{|t_2^\infty|}{|t_1^\infty|} > 1$, compared with the unique bound state (eigenvalue $E = 0$) of the semi-infinite left edge Hamiltonian $\Hledge$ (black filled circles). The bound states arise from taking appropriate linear combinations of the zero energy edge states of the left and right edge Hamiltonians truncated to $l^2(\{1,...,\mathcal{M}\};\mathbb{C}^2)$. The situation is analogous to the case of two identical well-separated potential wells in the semi-classical regime \cite{1980Harrell,1984Simon} where one observes splitting (width $O(e^{-c \gamma})$ where $\gamma$ is the distance between wells and $c > 0$ is a constant) of the single potential well eigenvalues.} \label{fig:hardtrunc_edgestates}
\end{figure}

\begin{remark}
The problem described above can be avoided when $\frac{ |t_2^\infty| }{ |t_1^\infty| } > 1$ by truncating the chain in the middle of a cell (i.e. after an $A$ site) rather than at the end of a cell (i.e. after a $B$ site). The right edge is then well described by \emph{$\Hredge$}, the right-edge Hamiltonian, but with $t_1$ replaced by $t_2^\infty$ and $t_2$ replaced by $t_1^\infty$ which has no edge state in this case. The method we describe below has the advantage that it may be applied without modification to all possible choices of $t_1^\infty$ and $t_2^\infty$ equally well and also generalizes naturally to the more general cases we consider below.
\end{remark}

In the next section we introduce a method for computing spectral data of $\Hledged$ which overcomes this problem.

\section{Green's function method} 
\label{sec:GreenFunctionMethod}

In this section we introduce our main method which we refer to as the Green's function method.  Rather than trying to compute eigenvectors of $\Hledged$ directly we compute the associated spectral projection operator. This is done by computing the resolvent (Green's function) of the Hamiltonian and then recovering the projection via the Riesz projection formula. 
\begin{theorem}[Riesz projection] \label{th:Riesz}
Let $H$ be a self-adjoint operator on a Hilbert space $\mathcal{H}$. Let $\mathcal{C}$ denote a positively oriented contour in the complex plane which (1) does not intersect $\sigma(H)$, (2) encloses finitely many eigenvalues of $H$, and (3) does not enclose any other part of $\sigma(H)$. Then,
\begin{equation} \label{eq:Gam}
	\Gamma := \frac{1}{2 \pi i} \int_\mathcal{C} G(z) \dd z, \qquad G(z) := (z - H)^{-1} 
\end{equation}
is an orthogonal projection operator; in particular, it is the spectral projection onto the associated eigenspace of the eigenvalues enclosed in the contour $\mathcal{C}$.
\end{theorem}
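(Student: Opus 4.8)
The plan is to verify three facts in turn: that $\Gamma$ is idempotent ($\Gamma^2 = \Gamma$), that it is self-adjoint ($\Gamma^* = \Gamma$), and that its range is exactly the span of the eigenspaces of the enclosed eigenvalues. The three main tools are the first resolvent identity, the identity $G(z)^* = G(\bar z)$ coming from self-adjointness of $H$, and the spectral theorem. Throughout I would use that $z \mapsto G(z)$ is an analytic operator-valued function on the resolvent set $\mathbb{C}\setminus\sigma(H)$, uniformly bounded on the compact contour $\mathcal{C}$ (which lies a positive distance from $\sigma(H)$ by hypothesis), so that the contour integral is well-defined and may be freely deformed within the resolvent set.

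For idempotency I would pick a second contour $\mathcal{C}'$ satisfying the same hypotheses and lying strictly inside $\mathcal{C}$, enclosing the same eigenvalues; such a contour exists because $\mathcal{C}$ is bounded away from $\sigma(H)$. Writing
\begin{equation*}
  \Gamma^2 = \frac{1}{(2\pi i)^2}\int_{\mathcal{C}}\int_{\mathcal{C}'} G(z)\,G(w)\,\dd w\,\dd z
\end{equation*}
and substituting the first resolvent identity $G(z)G(w) = (w-z)^{-1}\bigl(G(z)-G(w)\bigr)$, the double integral splits into two scalar-weighted pieces. For $z\in\mathcal{C}$ (outside $\mathcal{C}'$) the inner integral $\int_{\mathcal{C}'}(w-z)^{-1}\dd w$ vanishes, while for $w\in\mathcal{C}'$ (inside $\mathcal{C}$) the inner integral $\int_{\mathcal{C}}(w-z)^{-1}\dd z$ equals $-2\pi i$. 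Only the second piece survives and collapses to $\frac{1}{2\pi i}\int_{\mathcal{C}'}G(w)\,\dd w = \Gamma$, giving $\Gamma^2 = \Gamma$.

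For self-adjointness I would use that $H=H^*$ implies $(z-H)^* = \bar z - H$ and hence $G(z)^* = G(\bar z)$. Parametrizing $\mathcal{C}$ and taking the adjoint inside the integral turns $\Gamma^*$ into $-\frac{1}{2\pi i}\int_{\bar{\mathcal{C}}}G(w)\,\dd w$, where $\bar{\mathcal{C}}$ is the complex-conjugate contour. Since $H$ is self-adjoint, $\sigma(H)\subset\mathbb{R}$ and in particular the enclosed eigenvalues are real, so $\bar{\mathcal{C}}$ encloses exactly the same eigenvalues as $\mathcal{C}$; reflection across the real axis, however, reverses orientation, which cancels the leading minus sign. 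After deforming $\bar{\mathcal{C}}$ back to $\mathcal{C}$ one obtains $\Gamma^*=\Gamma$, so $\Gamma$ is an orthogonal projection.

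The last and most delicate step is identifying the range. Here I would invoke the spectral theorem to write $H = \int_{\sigma(H)}\lambda\,\dd E(\lambda)$ with $E$ the projection-valued spectral measure, so that $G(z) = \int (z-\lambda)^{-1}\,\dd E(\lambda)$. Interchanging the contour and spectral integrals gives
\begin{equation*}
  \Gamma = \int_{\sigma(H)}\left(\frac{1}{2\pi i}\int_{\mathcal{C}}\frac{\dd z}{z-\lambda}\right)\dd E(\lambda) = \int_{\sigma(H)}\chi(\lambda)\,\dd E(\lambda) = E(\Sigma),
\end{equation*}
where $\chi(\lambda)$ is $1$ for $\lambda$ enclosed by $\mathcal{C}$ and $0$ otherwise, and $\Sigma$ is the enclosed portion of $\sigma(H)$. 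The hard part will be the spectral-theoretic bookkeeping: one must justify the interchange of integrals and, crucially, argue that because $\Sigma$ consists of finitely many points isolated from the rest of $\sigma(H)$, each such point is a genuine eigenvalue (isolated points of the spectrum of a self-adjoint operator lie in the point spectrum), so that $E(\{\lambda_j\})$ is precisely the orthogonal projection onto $\ker(H-\lambda_j)$. Summing over the enclosed eigenvalues then identifies $\Gamma = E(\Sigma)$ as the spectral projection onto their combined eigenspace, completing the proof.
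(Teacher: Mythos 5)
The paper does not prove this theorem: it is stated as a classical result (the Riesz projection formula) and used as a black box, so there is no in-paper argument to compare against. Your proposal is the standard textbook proof (as in Kato or Reed--Simon) and is correct in all three steps: the idempotency argument via the first resolvent identity and the two nested contours is right, including the sign bookkeeping ($\int_{\mathcal{C}'}(w-z)^{-1}\dd w = 0$ for $z$ outside $\mathcal{C}'$ and $\int_{\mathcal{C}}(w-z)^{-1}\dd z = -2\pi i$ for $w$ inside $\mathcal{C}$); the self-adjointness step correctly trades the conjugated prefactor against the orientation reversal of the reflected contour and uses $\sigma(H)\subset\mathbb{R}$ to deform back; and the identification $\Gamma = E(\Sigma)$ via the spectral theorem, together with the fact that $\ran E(\{\lambda_j\}) = \ker(H-\lambda_j)$ for an isolated eigenvalue, gives exactly the stated conclusion. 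Two small points to tighten in a full write-up: the interchange of the contour integral with the spectral integral is most cleanly justified weakly, by applying Fubini to $\braket{\phi | G(z)\psi} = \int(z-\lambda)^{-1}\dd\braket{\phi|E(\lambda)\psi}$, using that $|z-\lambda|$ is bounded below on $\mathcal{C}\times\sigma(H)$; and the remark that isolated spectral points are eigenvalues, while true, is not needed here since hypothesis (2) already declares the enclosed spectrum to consist of eigenvalues.
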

Note that our method only yields information about eigenvalues within the contour $\mathcal{C}$. In practice this is not a problem: typically one is only interested in eigenvalues within the gap between essential spectrum (recall that for the SSH model $\sigma_{ess}(\Hledged)$ has a gap containing $0$ whenever $\frac{|t_2^\infty|}{|t_1^\infty|} \neq 1$). One therefore chooses the contour $\mathcal{C}$ in such a way as to encircle a large subinterval of the gap without intersecting with the essential spectrum.

In Section \ref{sec:GreenFunctionMethod_1} we explain how to compute the upper left block of the Green's function of $\Hledge$ for any $z \notin \sigma(\Hledge)$ and hence the upper left block of the spectral projection. In Section \ref{sec:getStatesFromGamma} we show how all relevant spectral information can be recovered using this method. In Figure \ref{fig:SSH_exp} we present results computed using our method.

\begin{figure} 
\centering
\begin{subfigure}[b]{.45\textwidth}
\includegraphics[scale=.41]{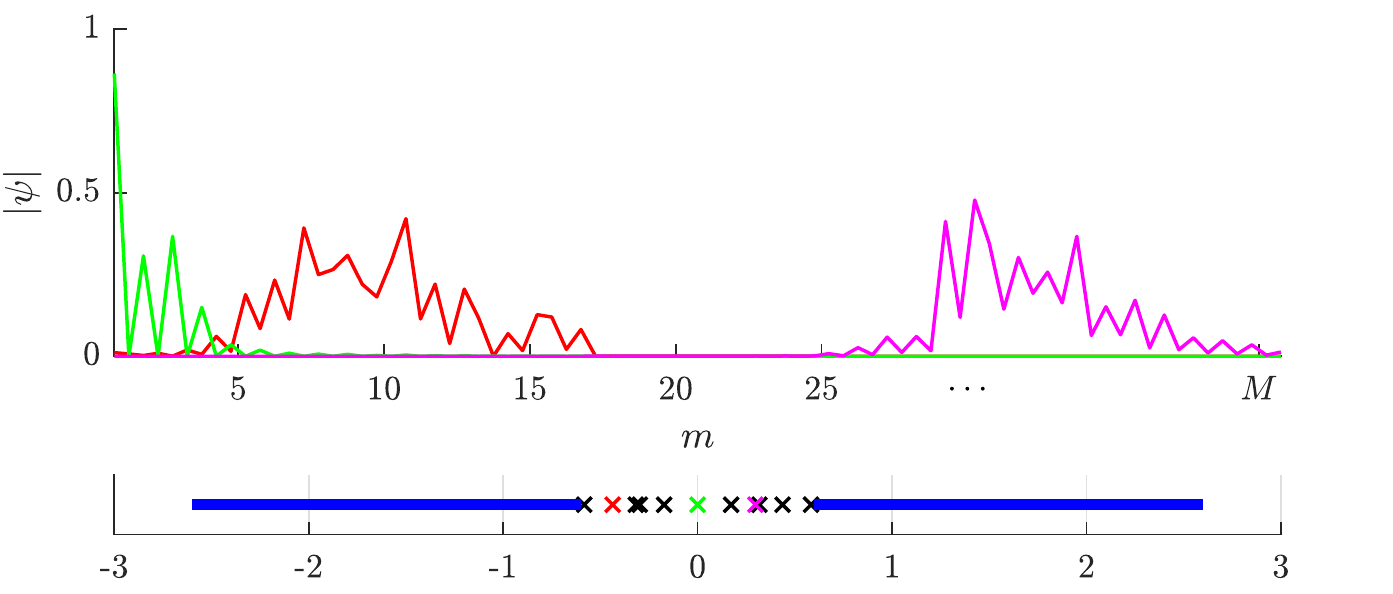}
\caption{\label{fig:SSH_exp_1}}
\end{subfigure}
\begin{subfigure}[b]{.45\textwidth}
\includegraphics[scale=.41]{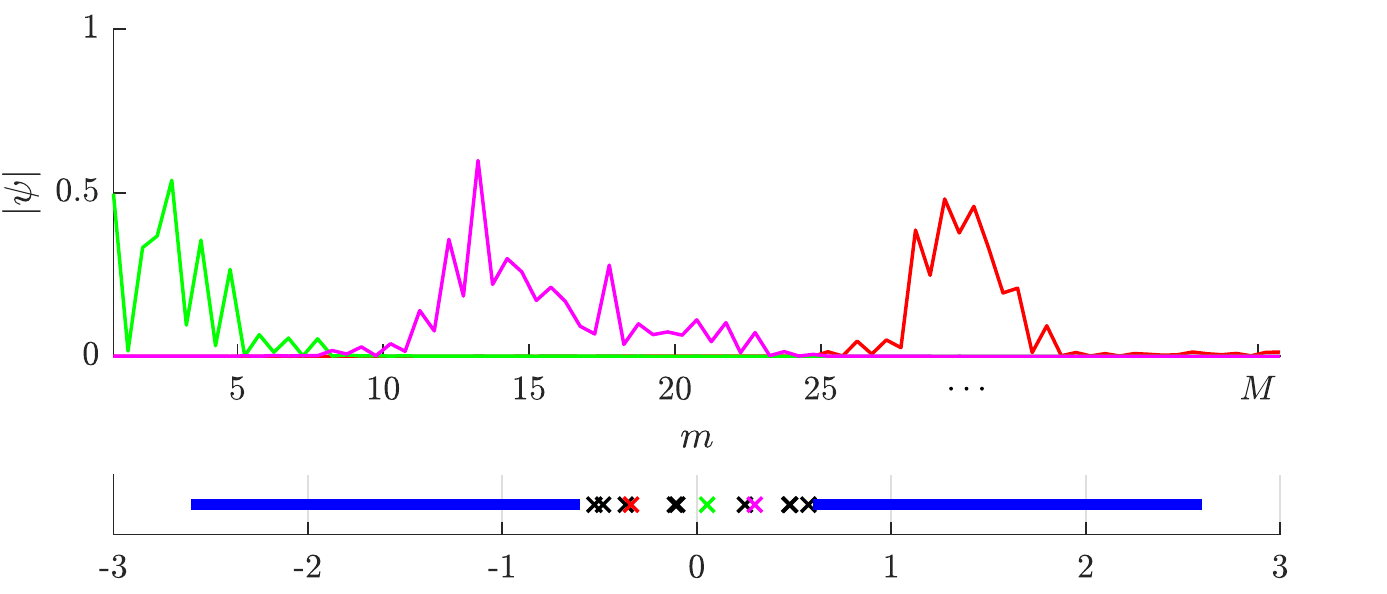} 
\caption{\label{fig:SSH_exp_2}}
\end{subfigure}
\caption{Example bound states (above, absolute value plotted) and spectra (below) of two realizations of the SSH edge Hamiltonian with defects \eqref{eq:ssh_edge_nontriv} computed with the Green's function method. In spectral plots, essential spectrum (computed using Bloch theory) is shown as a blue line, while eigenvalues are labeled by $\times$s. The associated eigenvalue of each bound state which is plotted in the above figures is plotted with the same color in the figures below. In both cases, we drew the values of the hopping amplitudes $t_1(m)$ and $t_2(m)$ for $m \leq 40$ from normal distributions with standard deviation $0.5$ and means equal to $t_1^\infty$ and $t_2^\infty$, with $t_1^\infty = 1$ and $t_2^\infty = 1.6$, respectively. To generate (a), we set all onsite potentials $V^\sigma(m) = 0$. In this case, we find that \eqref{eq:ssh_edge_nontriv} has an eigenvalue which is precisely $0$ (confirming Proposition \ref{prop:dis_zero_mode}). In addition, we see bound states with non-zero eigenvalues whose associated eigenfunctions are localized away from the edge. To generate (b), we drew the values of the onsite potentials $V^A(m)$ and $V^B(m)$ for $m < M$ from a normal distribution with standard deviation $0.5$ and mean $0$. In this case, we find again one bound state which decays rapidly away from the edge and multiple bound states whose associated eigenvectors are localized away from the edge. Note that the eigenvalue of the eigenvector which decays rapidly away from the edge is no longer $0$ in this case.}
%Using the Green's function method described in Sections \ref{sec:GreenFunctionMethod_1} and \ref{sec:getStatesFromGamma}, we computed two examples of discrete spectrum of the edge Hamiltonian \eqref{eq:ssh_edge_nontriv} lying in the gap between bands of essential spectrum when $t_2^\infty = 1.6$ and $t_1^\infty = 1$ with defects up to the $M = 40$th cell. Our results are shown in Figure \ref{fig:SSH_exp}.
\label{fig:SSH_exp} 
\end{figure}

%\begin{remark}
%Physically speaking, $\Gamma$ can be interpreted as the density matrix of a system for which the states corresponding to the eigenvalues enclosed by $\mathcal{C}$ are all filled and all other states are empty. 
%\end{remark}

\subsection{Computation of upper left $2M \times 2M$ block of Green's function of SSH edge Hamiltonian with defects} \label{sec:GreenFunctionMethod_1}

For $z \in \mathbb{C} \setminus \sigma(\Hledge)$, the Green's function $G(z)$ of $\Hledged$ is defined as the bounded inverse of the operator $(z I - \Hledged)$ on $l^2(\mathbb{N};\mathbb{C}^2)$,
\begin{equation} \label{eq:def_G}
	( z I - \Hledged )G(z) = I.
\end{equation}
It is natural to label the entries of $G(z)$ similarly to those of the wavefunction $\psi$ so that the $m$th 2$\times$2 block of the part of $G(z)$ which acts on the $m'$th fundamental cell of $\psi$ is written
\begin{equation}
	\begin{bmatrix} G^{A,A}_{m,m'}(z) & G^{A,B}_{m,m'}(z) \\ G^{B,A}_{m,m'}(z) & G^{B,B}_{m,m'}(z) \end{bmatrix}.
\end{equation}
We now express \eqref{eq:def_G} column by column as follows. For notational clarity, we suppress $z$-dependence of entries of $G(z)$ and use the vector notation 
%so that we replace $G^{\sigma,\sigma'}_{m,m'}(z)$ by $G^\sigma_m$. We then further abbreviate by writing:
\begin{equation}
    G^{\sigma'}_{m,m'} = \begin{bmatrix} G_{m,m'}^{A,\sigma'} \\ G_{m,m'}^{B,\sigma'} \end{bmatrix}, \quad m \in \mathbb{N}.
\end{equation}
The defining equation of the Green's function \eqref{eq:def_G} then implies that the $\sigma',m'$th column of $G(z)$ satisfies 
\begin{equation} \label{eq:Greensfunc_sys}
\begin{split}
    &- A^*(m-1) G^{\sigma'}_{m-1,m'} + \left( z - V(m) \right) G^{\sigma'}_{m,m'} - A(m) G^{\sigma'}_{m+1,m'} = \delta^{\sigma,\sigma'}_{m,m'} \quad m \in \mathbb{N} \\
    &G^{\sigma'}_{0,m'} = 0,
\end{split}
\end{equation}
and the condition that $G^{\sigma'}_{m,m'} \rightarrow 0$ as $m \rightarrow \infty$ sufficiently fast that $\{ G^{\sigma'}_{m,m'} \}_{m \in \mathbb{N}} \in l^2(\mathbb{N},\mathbb{C}^2)$. Here the matrices $A(m)$ and $V(m)$ are as in \eqref{eq:ssh_edge_nontriv}, and $\delta^{\sigma,\sigma'}_{m,m'}$ denotes the vector with $1$ in its $m', \sigma'$th entry and $0$ in all its other entries. %\aw{notation changed here - column indices restored}

For $m' \leq M$ and $m \geq M + 2$, the system \eqref{eq:Greensfunc_sys} can be solved via a transfer matrix which depends on $z$
\begin{equation} \label{eq:G_trans}
	G^{\sigma'}_{m,m'} = T^\infty(z) G^{\sigma'}_{m-1,m'}, \quad T^\infty(z) := \begin{bmatrix} - \frac{t^\infty_1}{t^\infty_2} & \frac{z}{t^\infty_2} \\ - \frac{ z }{ t^\infty_2 } & \frac{ z^2 }{ t^\infty_1 t^\infty_2 } - \frac{t^\infty_2}{t^\infty_1} \end{bmatrix} \quad m \geq M + 2.
\end{equation}
Since the determinant of $T^\infty(z)$ is $1$ for all $z$, its spectrum is constrained as follows.
\begin{lemma} \label{lem:T_possibilities} Let $A$ be any $2 \times 2$ matrix with entries in $\mathbb{C}$. If the determinant of $A$ is $1$, the eigenvalues of $A$ satisfy precisely one of the following possibilities: 
\begin{enumerate}[label=(A\arabic*)]
\item $A$ has non-degenerate eigenvalues $\lambda$ and $\lambda^{-1}$ such that $|\lambda| < 1$ and $|\lambda^{-1}| > 1$
\item $A$ has non-degenerate eigenvalues $\lambda$ and $\lambda^{-1}$ such that $|\lambda| = |\lambda^{-1}| = 1$
\item $A$ has one degenerate eigenvalue equal to $1$ or $-1$.
\end{enumerate}
\end{lemma}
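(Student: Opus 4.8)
The plan is to reduce the entire classification to a single observation: for a $2 \times 2$ matrix the determinant equals the product of the two eigenvalues, so $\det A = 1$ forces the eigenvalues to be reciprocals of one another. Concretely, I would begin by writing the characteristic polynomial $p(\mu) = \mu^2 - (\tr A)\mu + \det A = \mu^2 - (\tr A)\mu + 1$, whose two roots $\lambda_1, \lambda_2$ (counted with multiplicity) satisfy $\lambda_1 \lambda_2 = 1$. In particular neither root is zero, and denoting one of them by $\lambda$, the other is exactly $\lambda^{-1}$.

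With this in hand I would split into two cases according to whether the eigenvalues are distinct. If $\lambda \neq \lambda^{-1}$, equivalently $\lambda^2 \neq 1$, i.e. $\lambda \neq \pm 1$, the eigenvalues are non-degenerate, and I use the identity $|\lambda^{-1}| = |\lambda|^{-1}$. When $|\lambda| = 1$ this gives $|\lambda^{-1}| = 1$, which is exactly (A2); when $|\lambda| \neq 1$, after relabelling so that $|\lambda| < 1$ (possible since the two eigenvalues play symmetric roles), we get $|\lambda^{-1}| = |\lambda|^{-1} > 1$, which is (A1). If instead $\lambda = \lambda^{-1}$, the eigenvalue is repeated; then $\lambda^2 = 1$, so $\lambda = 1$ or $\lambda = -1$, placing us in case (A3).

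Finally I would justify the word \emph{precisely}: the three cases are distinguished by mutually exclusive and exhaustive conditions---non-degenerate with $|\lambda| \neq 1$, non-degenerate with $|\lambda| = 1$, and degenerate---so exactly one holds. As a sanity check that keeps the cases disjoint, it is worth recording that $\pm 1$ can only occur as a repeated eigenvalue: if $1$ (resp. $-1$) is an eigenvalue, then by the reciprocal relation the other eigenvalue is also $1$ (resp. $-1$), so a genuine $\pm 1$ eigenvalue never appears in the non-degenerate cases (A1) or (A2).

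I do not anticipate a real obstacle here; the only point requiring care is that the argument must not secretly invoke self-adjointness or real entries. The reciprocal relation $\lambda_2 = \lambda_1^{-1}$ and the multiplicativity of modulus $|\lambda^{-1}| = |\lambda|^{-1}$ hold for arbitrary nonzero complex $\lambda$, so no such hypothesis is needed; this is the one spot where I would be careful not to conflate $\lambda^{-1}$ with $\bar\lambda$, which coincide only on the unit circle.
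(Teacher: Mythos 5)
Your proof is correct: the reciprocal relation $\lambda_1\lambda_2 = \det A = 1$, the modulus identity $|\lambda^{-1}| = |\lambda|^{-1}$, and the observation that a repeated root forces $\lambda = \lambda^{-1}$, hence $\lambda = \pm 1$, give exactly the stated trichotomy, and you correctly avoid any appeal to real entries or self-adjointness. The paper itself offers no proof of this lemma --- it is stated as a standard fact, with only a remark pointing to the analogous trichotomy for monodromy matrices in Floquet theory --- so your argument is the natural elementary one that fills in the omitted details.
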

\begin{remark}
In the theory of periodic difference and differential equations (Hill's equations), the transfer matrix is known as the monodromy matrix and trichotomy (A1)-(A3) underlies Floquet's theorem \cite{MagnusWinkler,Eastham,CoddingtonLevinson}.
\end{remark}
Note that whenever $z$ is such that $T^\infty(z)$ has an eigenvalue of norm $1$ (i.e. either possibility (A2) or (A3) of Lemma \ref{lem:T_possibilities} is realized) we can produce a Weyl sequence and hence $z \in \sigma_{\text{ess}}(\Hledged)$. Hence the following Lemma.
\begin{lemma} \label{lem:eval_of_T_lem}
Whenever $z \notin \sigma(\Hbulk)$, where $\Hbulk$ is defined by \eqref{eq:ssh_bulk} with $t_1$ and $t_2$ replaced by $t_1^\infty$ and $t_2^\infty$, $T^\infty(z)$ realizes possibility (A1) of Lemma \ref{lem:T_possibilities}, i.e. has precisely one eigenvalue $\lambda$ such that $|\lambda| < 1$.
\end{lemma}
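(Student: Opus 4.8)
The plan is to argue by contraposition, using the determinant constraint together with the trichotomy of Lemma~\ref{lem:T_possibilities}. Since $\det T^\infty(z)=1$, exactly one of (A1)--(A3) holds; possibilities (A2) and (A3) are precisely those in which $T^\infty(z)$ has an eigenvalue on the unit circle, whereas (A1) is the only case with no eigenvalue of modulus $1$ and hence, by $\lambda\cdot\lambda^{-1}=1$, with exactly one eigenvalue of modulus strictly less than $1$. It therefore suffices to prove the contrapositive of the desired implication: if $T^\infty(z)$ has an eigenvalue $\lambda$ with $|\lambda|=1$, then $z\in\sigma(\Hbulk)$.

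The key step is to recognize $T^\infty(z)$ as the propagator of the homogeneous bulk recurrence. In the periodic region, the source-free form of \eqref{eq:Greensfunc_sys} is exactly the equation $(\Hbulk-z)\Phi=0$ for $\Hbulk$ built from $t_1^\infty,t_2^\infty$, and its solutions satisfy $\Phi_m=T^\infty(z)\Phi_{m-1}$. Thus if $v$ is an eigenvector of $T^\infty(z)$ with eigenvalue $\lambda=e^{ik}$, $k\in\mathbb{R}$, then $\Phi_m:=\lambda^m v$ is a bounded, nondecaying, $k$-quasi-periodic solution of the full-line bulk equation---a Bloch wave at energy $z$. I would turn this bounded solution into a (singular) Weyl sequence for $\Hledge$: letting $\chi_n$ be the indicator of a window of $n$ cells lying entirely in the region $m\ge M+1$ where $\Hledge$ and $\Hbulk$ coincide, the vector $(\Hledge-z)(\chi_n\Phi)$ is supported only within hopping range of the two window edges and is $O(1)$ in norm there, while $\|\chi_n\Phi\|=O(\sqrt n)$ because $\Phi$ does not decay. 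Normalizing and sending the windows off to infinity along disjoint supports yields $\|(\Hledge-z)\psi^{(n)}\|\to 0$ with $\|\psi^{(n)}\|=1$ and $\psi^{(n)}\rightharpoonup 0$, so $z\in\sigma_{ess}(\Hledge)=\sigma(\Hbulk)$, the last equality being already established.

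The main obstacle is to make this construction airtight in the degenerate case (A3), where $\lambda=\pm1$ is a repeated eigenvalue: the recurrence then also admits a solution growing linearly in $m$, but the genuine eigenvector still produces the bounded solution $\lambda^m v$, and it is this bounded solution that feeds the cutoff, so (A3) likewise forces $z\in\sigma(\Hbulk)$. As a cross-check that also gives a purely algebraic alternative route, I would compute $\tr T^\infty(z)=\big(z^2-(t_1^\infty)^2-(t_2^\infty)^2\big)/(t_1^\infty t_2^\infty)$ and note that, because the eigenvalues are $\lambda,\lambda^{-1}$, an eigenvalue of modulus $1$ occurs if and only if this trace is real and lies in $[-2,2]$. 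Writing $\tr T^\infty(z)=2\cos k$ gives $z^2=(t_1^\infty)^2+(t_2^\infty)^2+2t_1^\infty t_2^\infty\cos k=|t_1^\infty+e^{-ik}t_2^\infty|^2$, i.e.\ $z=E^\pm(k)$ from \eqref{eq:epair_cell}; hence the set of $z$ admitting a modulus-$1$ eigenvalue is exactly $\sigma(\Hbulk)$, and its complement realizes (A1), as claimed.
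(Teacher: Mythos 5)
Your proposal is correct, and its main line of argument is exactly the one the paper uses: the paper disposes of this lemma in a single sentence preceding its statement, observing that if $T^\infty(z)$ had an eigenvalue of modulus $1$ one could produce a Weyl sequence, forcing $z \in \sigma_{ess}(\Hledge) = \sigma(\Hbulk)$, so that for $z \notin \sigma(\Hbulk)$ the trichotomy of Lemma \ref{lem:T_possibilities} leaves only (A1). You have simply filled in the details of that sketch (the cutoff construction, the $O(1)$ versus $O(\sqrt{n})$ scaling, and the remark that the degenerate case (A3) still supplies a bounded Bloch solution), all of which is sound. Your algebraic cross-check is a genuine addition not in the paper: computing $\tr T^\infty(z) = \bigl(z^2 - (t_1^\infty)^2 - (t_2^\infty)^2\bigr)/(t_1^\infty t_2^\infty)$ and using $\det T^\infty(z)=1$ to reduce ``eigenvalue on the unit circle'' to ``trace real in $[-2,2]$'' recovers $z = E^\pm(k)$ from \eqref{eq:epair_cell} exactly, which identifies the bad set of $z$ as precisely $\sigma(\Hbulk)$ without invoking Weyl sequences or the previously established identity $\sigma_{ess}(\Hledge)=\sigma(\Hbulk)$. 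That route is more elementary and self-contained for this specific $2\times 2$ model, whereas the paper's Weyl-sequence argument is the one that generalizes to the higher-dimensional transfer matrices of Section \ref{sec:trans_mat_in_general}, where no closed-form trace computation is available.
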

For the solution of \eqref{eq:Greensfunc_sys} to decay as $m \rightarrow \infty$ it must be that $G^{\sigma'}_{M+1,m'}$ is proportional to the associated eigenvector of this eigenvalue. Hence the following.
\begin{lemma}\label{lem:G_func_lem}
For $z \notin \sigma(\Hledge)$, $G^{\sigma'}_{M+1,m'}$, the $M+1$th entry of the $\sigma', m'$th column of the Green's function is proportional to the associated eigenvector of the eigenvalue of $T^\infty(z)$ with norm less than $1$ whose existence is guaranteed by Lemma \ref{lem:eval_of_T_lem}.
\end{lemma}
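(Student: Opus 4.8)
The plan is to combine the iterated transfer-matrix relation \eqref{eq:G_trans} with the $l^2$ (decay) requirement built into the defining system \eqref{eq:Greensfunc_sys}. First I would note that, since $\sigma(\Hbulk) = \sigma_{ess}(\Hledge) \subseteq \sigma(\Hledge)$, the hypothesis $z \notin \sigma(\Hledge)$ gives $z \notin \sigma(\Hbulk)$, so Lemma \ref{lem:eval_of_T_lem} places us in case (A1) of Lemma \ref{lem:T_possibilities}: $T^\infty(z)$ has two non-degenerate eigenvalues $\lambda$ and $\lambda^{-1}$ with $|\lambda| < 1 < |\lambda^{-1}|$. Writing $v_\lambda$ and $v_{\lambda^{-1}}$ for the corresponding eigenvectors, distinctness of the eigenvalues makes $\{v_\lambda, v_{\lambda^{-1}}\}$ a basis of $\mathbb{C}^2$.

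Next I would iterate \eqref{eq:G_trans}, which for $m \geq M+1$ yields
\[
G^{\sigma'}_{m,m'} = \left( T^\infty(z) \right)^{m - M - 1} G^{\sigma'}_{M+1,m'}.
\]
Expanding the $(M+1)$th block in the eigenbasis, $G^{\sigma'}_{M+1,m'} = \alpha\, v_\lambda + \beta\, v_{\lambda^{-1}}$, and applying the matrix power gives
\[
G^{\sigma'}_{m,m'} = \alpha\, \lambda^{\,m - M - 1}\, v_\lambda + \beta\, \lambda^{-(m - M - 1)}\, v_{\lambda^{-1}}.
\]

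The decisive step is to invoke the decay condition: the column must lie in $l^2(\mathbb{N};\mathbb{C}^2)$, so $G^{\sigma'}_{m,m'} \to 0$ as $m \to \infty$. The second term has norm $|\beta|\, |\lambda^{-1}|^{\,m - M - 1} \norm{v_{\lambda^{-1}}}$, which diverges unless $\beta = 0$, whereas the first term decays since $|\lambda| < 1$. Hence $\beta = 0$ and $G^{\sigma'}_{M+1,m'} = \alpha\, v_\lambda$, proportional to the eigenvector of the eigenvalue of norm less than $1$, as claimed.

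I expect no serious obstacle: this is the discrete analogue of selecting the decaying solution of a second-order recurrence, and the strict spectral gap $|\lambda| < 1 < |\lambda^{-1}|$ makes the selection unambiguous. The one point needing care is the eigenvector decomposition, which is legitimate precisely because Lemma \ref{lem:eval_of_T_lem} excludes the unimodular case (A2) and the degenerate case (A3) and thereby guarantees a genuine eigenbasis. If one wished to avoid \emph{assuming} $l^2$ membership and instead derive it, the subtler task would be to verify square-summability of the resolvent's column for every $z \notin \sigma(\Hledge)$, but this is already encoded in the existence of the bounded inverse \eqref{eq:def_G}.
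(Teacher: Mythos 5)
Your proof is correct and follows exactly the paper's (only sketched) argument: iterate the transfer relation \eqref{eq:G_trans}, expand $G^{\sigma'}_{M+1,m'}$ in the eigenbasis guaranteed by case (A1) of Lemma \ref{lem:T_possibilities}, and use the $l^2$ decay requirement to kill the coefficient of the growing mode. Your explicit note that $z \notin \sigma(\Hledge)$ implies $z \notin \sigma(\Hbulk)$ via $\sigma(\Hbulk) = \sigma_{ess}(\Hledge)$ is a detail the paper leaves implicit but is exactly the right justification for invoking Lemma \ref{lem:eval_of_T_lem}.
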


From Lemma \ref{lem:G_func_lem} we conclude that we can compute the first $M+1$ entries of the solution of \eqref{eq:Greensfunc_sys} (for $m' \leq M$) by truncating the system and imposing the boundary condition
\begin{equation} \label{eq:GF_BC}
	G^{\sigma'}_{M+1,m'}(z) = c \begin{bmatrix} \xi^A(z) \\ \xi^B(z) \end{bmatrix} \quad c \in \mathbb{C},
\end{equation}
where $[ \xi^A(z), \xi^B(z) ]^\trans$ denotes the eigenvector of $T^\infty(z)$ whose associated eigenvalue has norm less than one (whose existence is guaranteed by Lemma \ref{lem:eval_of_T_lem}) and $c$ denotes an arbitrary constant. In this way we can compute the first $M+1$ entries of the $(m',\sigma')$th column of $G(z)$ for each $m' \in \{1,...,M\}$ and $\sigma' \in \{A,B\}$ for any $z \notin \sigma(\Hledge)$. 

The choice to impose the boundary condition \eqref{eq:GF_BC} in the $M+1$th cell is to some extent arbitrary; the results of our method do not change if this boundary condition is imposed in any cell whose index is greater than $M$. However, choosing $M+1$ clearly minimizes the size of the matrix problem to solve and is hence optimal. For a numerical verification that our results do not depend on this choice see Figure \ref{fig:GF_verification}. 
\begin{figure}
\centerline{\includegraphics[scale=.75]{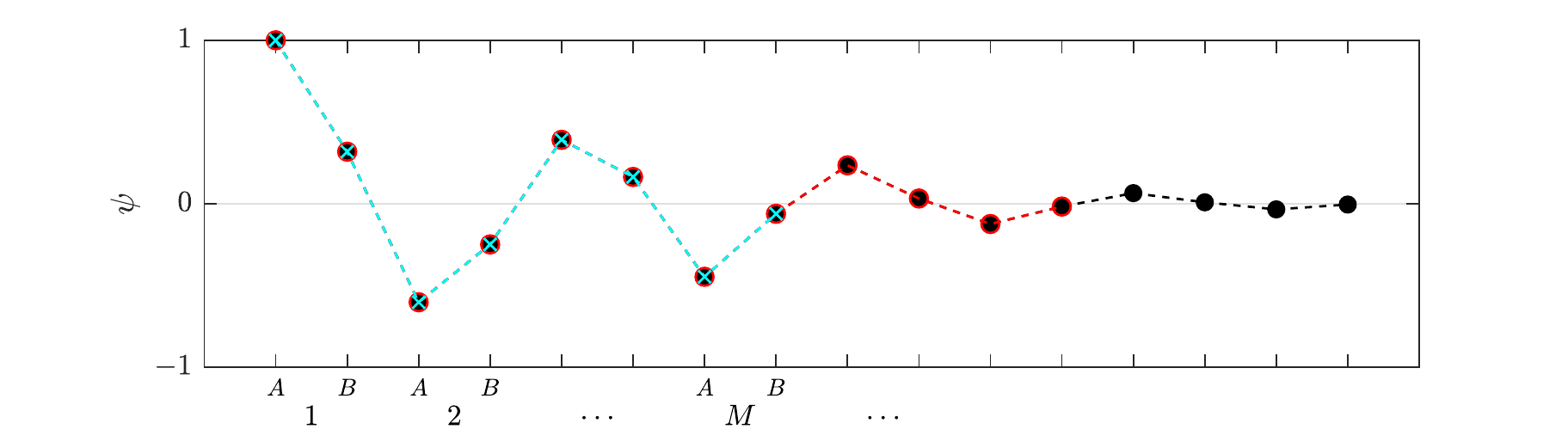}}
\caption{Plot of a bound state (eigenvalue $E = -.3837$) of the SSH edge Hamiltonian with defects $\Hledged$ \eqref{eq:ssh_edge_nontriv} (edge state) computed with the Green's function method for 3 different choices of truncation cell when $M = 4$. We plot the result of truncating in cells $M+1 = 5$ (cyan $\times$s), $7$ (red circles), and $9$ (black filled circles). For $m = 1$ to $4$, the three computations agree exactly. In this example $t_1^\infty = 1$ and $t_2^\infty = 2$ and we model disorder by drawing values of the hopping amplitudes $t_j(m), j \in \{1,2\}$, and onsite potentials $V^\sigma(m), \sigma \in \{A,B\}$ from normal distributions with mean equal to their values for $m \geq M$ and standard deviation $.5$.} 
\label{fig:GF_verification}
\end{figure}
\begin{remark}
Under Assumption \ref{as:bulk_periodicity}, the eigenequation $\Hledge \psi = E \psi$ can be solved directly using a transfer matrix similarly to the system \eqref{eq:Greensfunc_sys}. $E$ is then an eigenvalue of $\Hledge$ whenever the solution of this system is such that the $M$th entry of $\psi$ is proportional to an eigenvector of the transfer matrix whose associated eigenvalue has norm $< 1$. However this approach leads to a eigenvalue problem whose boundary condition (the analog of \eqref{eq:GF_BC}) depends on $E$ and is hence nonlinear. Such problems can be solved by iteration. We do not consider this idea further since it is inferior to the Green's function method.
\end{remark}

It remains to recover the eigenvalues and eigenfunctions of $\Hledged$ from the upper left $2 (M + 1) \times 2 M$ block of $G(z)$; we discuss this in Section \ref{sec:getStatesFromGamma}. In fact there is no loss in at this point discarding the $2 (M+1)$th row of this block and in what follows we will consider only the upper left $2 M \times 2 M$ block of $G(z)$.

\subsection{Recovery of spectral data from the truncated Green's function} \label{sec:getStatesFromGamma}

We now discuss how to recover the eigenvalues and eigenstates of $\Hledged$ from the truncated Green's function computed in the previous section. The upper left $2 M \times 2 M$ block of the spectral projection $\Gamma$ may be found by numerically evaluating the contour integral \eqref{eq:Gam} using the upper-left $2 M \times 2 M$ block of $G(z)$ computed in the previous section (see \cite{hale2008computing} for methods which dramatically speed up the evaluation of this integral). For notational simplicity, let us define
\begin{equation}
	\Gammatrunc := \text{ upper-left $2 M \times 2 M$ block of $\Gamma$. }
\end{equation}
The truncated spectral projection, $\Gammatrunc$, is not sufficient for recovering all eigenvalues and eigenfunctions of $\Hledged$. The missing information can be recovered by also using the upper-left $2 M \times 2 M$ block of the  operator $\Hledged \Gamma$, which may be computed via a contour integral, similarly to $\Gamma$.
\begin{equation} \label{eq:H_Gam}
	\Hledged \Gamma = \frac{1}{2 \pi i} \int_\mathcal{C} \Hledged G(z) \dd z = \frac{1}{2 \pi i} \int_\mathcal{C} z G(z) \dd z, \quad G(z) = (z - \Hledged)^{-1},
\end{equation}
where the second equality is clear by analyticity of the operator $(z - \Hledged ) G(z) = I$. The upper left $2M \times 2M$ block of $\Hledged \Gamma$ may now be computed similarly to that of $\Gamma$. We define
\begin{equation}
	\HGammatrunc := \text{ upper left $2 M \times 2 M$ block of $\Hledged \Gamma$}.
\end{equation}
Let $\psi_j$, where $j \in \{1,...,K\}$, denote the bound states of $\Hledge$ with associated eigenvalues $E_j$ within the contour $\gamma$ and $\psijtrunc$ the truncations of these bound states to the first $M$ cells. Then, using Dirac notation for simplicity, we have that  
\begin{equation}
\begin{split}
    &\Gamma = \sum_{j = 1}^K \ket{\psi_j}\bra{\psi_j} \implies \Gammatrunc = \sum_{j = 1}^K \ket{\psijtrunc}\bra{\psijtrunc}, \\
    &H \Gamma = H \sum_{j = 1}^K \ket{\psi_j}\bra{\psi_j} = \sum_{j = 1}^K E_j \ket{\psi_j}\bra{\psi_j} \\
    &\quad \quad \quad \quad \quad \quad \quad \implies \HGammatrunc = \sum_{j = 1}^K E_j \ket{\psijtrunc}\bra{\psijtrunc}.
\end{split}
\end{equation}
Our goal now is to recover each  $E_j$ and corresponding $\psijtrunc$ from the matrices $\Gammatrunc$ and $\HGammatrunc$. We consider in detail the case $K = 2$; the result for general $K$ is an easy generalization.

Let $v_1, v_2$ denote any basis (not necessarily orthogonal) of the range of $\Gammatrunc$. Since the matrices $\Gammatrunc$ and $\HGammatrunc$ map this space to itself, we can represent their action on vectors in the space $v = c_1 v_1 + c_2 v_2$ by matrices as follows
\begin{equation}
    \Gammatrunc \begin{bmatrix} c_1 \\ c_2 \end{bmatrix} = \begin{bmatrix} \alpha_{11} & \alpha_{12} \\ \alpha_{21} & \alpha_{22} \end{bmatrix} \begin{bmatrix} c_1 \\ c_2 \end{bmatrix}  \quad \HGammatrunc \begin{bmatrix} c_1 \\ c_2 \end{bmatrix} = \begin{bmatrix} \beta_{11} & \beta_{12} \\ \beta_{21} & \beta_{22} \end{bmatrix} \begin{bmatrix} c_1 \\ c_2 \end{bmatrix}.
\end{equation}
When we choose the basis $v_j = \psijtrunc$ then using the definitions of $\Gammatrunc$ and $\HGammatrunc$ we have that \begin{equation} \label{eq:psi_basis}
    \alpha_{i j} = \ip{\psi_i}{\psi_j}, \quad \beta_{i j} = E_i \ip{\psi_i}{\psi_j} \quad i,j \in \{1,2\}.
\end{equation}
Equation \eqref{eq:psi_basis} implies that the eigenvalues $E_j$ are exactly the eigenvalues of the matrix $M$ which relates the coefficients $\alpha_{ij}$ and $\beta_{ij}$, which satisfies
\begin{equation}
    M \begin{bmatrix} \alpha_{11} & \alpha_{12} \\ \alpha_{21} & \alpha_{22} \end{bmatrix}^\trans = \begin{bmatrix} \beta_{11} & \beta_{12} \\ \beta_{21} & \beta_{22} \end{bmatrix}^\trans,
\end{equation}
where $\phantom{}^\trans$ denotes matrix transpose. The existence of such a matrix is guaranteed by the fact that $\Gammatrunc$ is non-singular on its range, so that $M$ can be defined by
\begin{equation}
    M = \begin{bmatrix} \beta_{11} & \beta_{12} \\ \beta_{21} & \beta_{22} \end{bmatrix}^\trans \begin{bmatrix} \alpha_{11} & \alpha_{12} \\ \alpha_{21} & \alpha_{22} \end{bmatrix}^{-\trans},
\end{equation}
where $\phantom{}^{-\trans}$ denotes the inverse of the matrix transpose. With the eigenvalues $E_j$ in hand, the $\psijtrunc$ can then be found by computing each of the null spaces of the matrices $\HGammatrunc - E_j$. %\aw{fixed method here}

\section{Green's function method for arbitrary eventually periodic finite range Hamiltonians} \label{sec:trans_mat_in_general}

In this section we describe how the basic idea introduced in Section \ref{sec:GreenFunctionMethod} generalizes in a natural way to a class of one-dimensional discrete Hamiltonians which are finite range and eventually periodic. This generalization is necessary for us to consider the models of honeycomb structure edges we consider in Section \ref{sec:appToHoneycomb}. We start with the following definition.
\begin{definition} \label{def:ev_per}
We call any operator $H$ which is self-adjoint on $l^2(\mathbb{N};\mathbb{C}^N)$ a \ul{semi-infinite one-dimensional discrete Hamiltonian}. In addition: 
\begin{itemize}
\item We say $H$ is \ul{finite range} if there exists a non-negative integer $R$ such that the action of $H$ on elements $\psi$ of $l^2(\mathbb{N};\mathbb{C}^N)$ can be written in terms of the components $\psi_m \in \mathbb{C}^N$ of $\psi$ as
\begin{equation} \label{eq:general_Hed}
\begin{split}
    \left[ \Hedge \psi \right]_m &= A^*_R(m-R) \psi_{m-R} + ... + A^*_1(m-1) \psi_{m-1}  \\
    &\phantom{ = 20 } + V(m) \psi_m + A_1(m) \psi_{m+1} + ... + A_{R}(m) \psi_{m+R}
\end{split}
\end{equation}
for complex $N \times N$ matrices $A_j(m)$ where $j \in \{1,...,R\}$, and Hermitian $N \times N$ matrices $V(m)$ where $m \in \mathbb{N}$. Here $A^* := \overline{ A^\trans }$ denotes the conjugate transpose of the matrix $A$.
\item We say $H$ is \ul{eventually periodic} if there exists a non-negative integer $M$ such that
\begin{equation} \label{eq:even_periodic}
    V(m + 1) = V(m) \text{ and } A_j(m + 1) = A_j(m), \text{ $j \in \{1,...,R\}$ }
\end{equation}
for all $m \in \mathbb{N}$ such that $m \geq M$.
\end{itemize}
If $H$ is both finite range and eventually periodic we will usually abbreviate and call such a Hamiltonian a \ul{semi-infinite eventually periodic Hamiltonian}.
\end{definition}
Note that the SSH edge Hamiltonian with nearest neighbor hopping which we considered in Sections \ref{sec:SSH_pollution} and \ref{sec:GreenFunctionMethod} is a semi-infinite eventually periodic Hamiltonian with $R = 1$ and $N = 2$. 

\begin{remark}
By enlarging the fundamental cell sufficiently we can always take $R = 1$ in \eqref{eq:general_Hed}. However, we prefer not to do this so that we keep  $1$ as the minimal period of the models we consider. 
\end{remark}

It is important to note that our method can be generalized further and applied to discrete Hamiltonians defined on $\mathbb{Z}$ under appropriate conditions. %We consider an example of this kind in Section \ref{sec:domainWall}.

\begin{definition} \label{def:ev_per_2} We call any operator $H$ which is self-adjoint on $l^2(\mathbb{Z};\mathbb{C}^N)$ an \ul{infinite one-dimensional discrete Hamiltonian}. In addition: 
\begin{itemize}
\item We call $H$ \ul{finite range} if its action on elements $\psi$ of $l^2(\mathbb{Z};\mathbb{C}^N)$ can be written in terms of the components $\psi_m \in \mathbb{C}^N$ of $\psi$ as
\begin{equation} \label{eq:general_Hed_dw}
\begin{split}
    \left[ \Hedge \psi \right]_m &= A^*_R(m-R) \psi_{m-R} + ... + A^*_1(m-1) \psi_{m-1}  \\
    &\phantom{ = 20 } + V(m) \psi_m + A_1(m) \psi_{m+1} + ... + A_{R}(m) \psi_{m+R}
\end{split}
\end{equation}
for complex $N \times N$ matrices $A_j(m)$ where $j \in \{1,...,R\}$, and Hermitian $N \times N$ matrices $V(m)$ where $m \in \mathbb{Z}$. Here $A^* := \overline{ A^\trans }$ denotes the conjugate transpose of the matrix $A$.
\item We call $H$ \ul{eventually periodic} if there exist integers $M_+$ and $M_-$ such that $M_+ \geq M_-$ and 
\begin{equation} \label{eq:even_periodic_dw}
    V(m + 1) = V(m) \text{ and } A_j(m + 1) = A_j(m), \text{ $j \in \{1,...,R\}$ }
\end{equation}
for all $m \in \mathbb{Z}$ such that $m \geq M_+$, and
\begin{equation}
    V(m - 1) = V(m) \text{ and } A_j(m - 1) = A_j(m), \text{ $j \in \{1,...,R\}$ }
\end{equation}
for all $m \in \mathbb{Z}$ such that $m \leq M_-$.
\end{itemize}
If $H$ is both finite range and eventually periodic we will usually abbreviate and call such a Hamiltonian an \ul{infinite eventually periodic Hamiltonian}.
\end{definition}
In this case the method is almost identical except that when computing the Green's function the problem must be truncated in both directions with boundary conditions imposed at each end. We consider an example of this type in Section \ref{sec:domainWall}.

Since the method of recovering spectral data from the Green's function in general is identical to that given in Section \ref{sec:getStatesFromGamma} and the method is similar for infinite eventually periodic Hamiltonians we focus in this section on the problem of computing the upper left block of the Green's function of a general semi-infinite eventually periodic Hamiltonian.

%The case which we considered in Section \ref{sec:GreenFunctionMethod} was relatively simple because the difference equation satisfied by the Green's function could be solved using a $2 \times 2$ transfer matrix, and hence the correct boundary condition to impose at the truncation was clear.

%Recall that in Section \ref{sec:GreenFunctionMethod} we showed that the discrete spectrum and associated bound states of a semi-infinite Hamiltonian $H$ could be accurately recovered as long as the Green's function of $H$, i.e. the solution of
%\begin{equation} \label{eq:GF}
    %(z - H) G(z) = I
%\end{equation}
%could be accurately computed for $z$ along any contour enclosing the desired subset of spectrum. To solve \eqref{eq:GF}, we restricted attention to computing the first $M$ entries of $G(z)$, knowing that the remainder of $G(z)$ could be found by application of a transfer matrix, using the transfer matrix to define an appropriate boundary condition on the computation in the $M$th cell. 

\subsection{Constructing boundary conditions for  Green's functions computation for semi-infinite eventually periodic Hamiltonian} \label{sec:BCs_in_general}

For values of $z \in \mathbb{C}$ in the resolvent set, the Green's function $G(z)$ of an arbitrary semi-infinite eventually periodic Hamiltonian $H$ is defined as the bounded inverse of $z I - H$ on $l^2(\mathbb{N};\mathbb{C}^N)$,
\begin{equation} \label{eq:general_GF}
    (z I - H) G(z) = I.
\end{equation}
Just as in Section \ref{sec:GreenFunctionMethod}, we represent the $m$th $N\times N$ block of $G(z)$ which acts on the $m'$th fundamental cell as
\begin{equation}
    \begin{bmatrix} G^{A,A}_{m,m'}(z) & G^{A,B}_{m,m'}(z) & G^{A,C}_{m,m'}(z) & ... \\ G^{B,A}_{m,m'}(z) & G^{B,B}_{m,m'}(z) & G^{B,C}_{m,m'}(z) & ... \\ G^{C,A}_{m,m'}(z) & G^{C,B}_{m,m'}(z) & G^{C,C}_{m,m'}(z) & ... \\ ... & ... & ... & ... \end{bmatrix}.
\end{equation}
Equation \eqref{eq:general_GF} can then be written column-by-column as follows. We again suppress $z$-dependence of entries of $G(z)$ and then assemble vectors 
\begin{equation}
    G^{\sigma'}_{m,m'} = \begin{bmatrix} G^{A,\sigma'}_{m,m'} , G^{B,\sigma'}_{m,m'} , ... \end{bmatrix}^\trans
\end{equation}
for each $m \in \mathbb{Z}$. We have then that the entries of the $m', \sigma'$th column of the Green's function satisfy the system 
\begin{equation} \label{eq:G_generall}
\begin{split}
    &- A^*_R(m-R) G^{\sigma'}_{m-R,m'} - ... - A^*_1(m-1) G^{\sigma'}_{m-1,m'}   \\
    &\phantom{ + bla bla bla bla }+ \left( z - V(m) \right) G^{\sigma'}_{m,m'} - A_1(m) G^{\sigma'}_{m+1,m'} - ... - A_R(m) G^{\sigma'}_{m+R,m'} = \delta_{m,m'}^{\sigma,\sigma'}    \\
    &G^{\sigma'}_{m,m'} = 0 \quad m \leq 0,
\end{split}
\end{equation}
and the condition that $G^{\sigma'}_{m,m'} \rightarrow 0$ as $m \rightarrow \infty$ sufficiently fast that $\{ G^{\sigma'}_{m,m'} \}_{m \in \mathbb{N}} \in l^2(\mathbb{N};\mathbb{C}^2)$. Here the matrices $A^*_J(m)$ are those which appear in \eqref{eq:general_Hed}, and $\delta^{\sigma,\sigma'}_{m,m'}$ denotes the vector whose $\sigma', m'$th entry is one and all others are zero. Since we require only the first $M$ columns of the Green's function we need only solve \eqref{eq:G_generall} for $m' \leq M$. 

For clarity we at this point restrict to the case $R = 1$ (see Remark \ref{rem:other_R} for the case where $R > 1$), and drop the subscript on $A_1 \mapsto A$ so that \eqref{eq:G_generall} reduces to %\jl{perhaps in this case we can drop the subscript $1$ in $A$} \aw{fixed}
\begin{equation} \label{eq:G_generall_2}
\begin{split}
    &- A^*(m-1) G^{\sigma'}_{m-1,m'} + \left( z - V(m) \right) G^{\sigma'}_{m,m'} - A(m) G^{\sigma'}_{m+1,m'} = \delta_{m,m'}^{\sigma,\sigma'} \quad m \in \mathbb{N},   \\
    &G^{\sigma'}_{0,m'} = 0,
\end{split}
\end{equation}
together with the condition that $G^{\sigma'}_{m,m'} \rightarrow 0$ as $m \rightarrow \infty$. Rather than study \eqref{eq:G_generall_2} directly, we will study the model difference equation for a general right hand side 
\begin{equation} \label{eq:G_generall_3}
\begin{split}
    &- A^*(m-1) g_{m-1} + \left( z - V(m) \right) g_m - A(m) g_{m+1} = f_m \quad m \in \mathbb{N}, \\
    &g_0 = 0,
\end{split}
\end{equation}
where our only assumption on the $f_m$ is that $f_m = 0$ for $m > M$ and we impose $g_m \rightarrow 0$ as $m \rightarrow \infty$. Our goal is to find appropriate boundary conditions which will allow us to truncate \eqref{eq:G_generall_3} to a finite problem whose solution can be computed.

Under the eventual periodicity assumption \eqref{eq:even_periodic}, for $m \geq M + 2$ the system \eqref{eq:G_generall_3} can be written redundantly as
\begin{equation} \label{eq:genrlized}
    S^*(\overline{z}) \begin{bmatrix} g_{m-2} \\ g_{m-1} \end{bmatrix} + S(z) \begin{bmatrix} g_{m} \\ g_{m+1} \end{bmatrix} = 0 \quad m \geq M + 2,
\end{equation}
%\jl{check the range of $m$, $m\geq M+3$ or $m\geq M+2$?} \aw{fixed}
where
\begin{equation} \label{eq:S_of_z}
    S(z) := \begin{bmatrix} - A^\infty & 0 \\ z - V^\infty & - A^\infty \end{bmatrix}.
\end{equation}
Here $V^\infty := V(M)$ and $A^\infty := A(M)$. We will prove the following theorem, which generalizes the basic ideas developed in Section \ref{sec:GreenFunctionMethod}.
\begin{theorem} \label{th:BCs_th}
Let $z \notin \sigma(H)$ and let Assumption \ref{as:M_invertible} (see below) on the matrix $S(z)$ \eqref{eq:S_of_z} hold. Then there exists an $N$-dimensional subspace of $\mathbb{C}^{2 N}$, denoted in what follows by $V_{\text{decaying}}$, such that entries $g_1,g_2,...,g_{M+2}$ of the solution of the system \eqref{eq:G_generall_3} subject to $\left\{ g_m \right\}_{m \in \mathbb{N}} \in l^2(\mathbb{N};\mathbb{C}^N)$ can be found by truncating the system after $M + 2$ cells and imposing the boundary condition that $[g_{M+1},g_{M+2}]^\trans \in V_{\text{decaying}}$.
\end{theorem}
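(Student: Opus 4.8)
The plan is to lift the transfer-matrix argument of Section~\ref{sec:GreenFunctionMethod} from the scalar $2\times 2$ setting to the $N\times N$ block setting. First I would invoke Assumption~\ref{as:M_invertible} (invertibility of $S(z)$, equivalently of $A^\infty$) to define, for $m \geq M+1$, a transfer matrix $T(z)$ propagating the state $[g_{m},g_{m+1}]^\trans \in \mathbb{C}^{2N}$ along the constant-coefficient tail recurrence \eqref{eq:genrlized}. Since $f_m = 0$ for $m > M$, every solution of \eqref{eq:G_generall_3} restricted to $m \geq M+1$ is an orbit of $T(z)$, and it lies in $l^2$ precisely when its initial state $[g_{M+1},g_{M+2}]^\trans$ belongs to the sum of generalized eigenspaces of $T(z)$ associated with eigenvalues of modulus strictly less than one. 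I would take $V_{\text{decaying}}$ to be exactly this invariant subspace; the theorem then reduces to (i) $\dim V_{\text{decaying}} = N$ and (ii) that imposing $[g_{M+1},g_{M+2}]^\trans \in V_{\text{decaying}}$ on the truncated finite system reproduces the genuine $l^2$ solution.

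The crux is the $N$/$N$ split of the transfer-matrix spectrum across the unit circle, the block analogue of Lemma~\ref{lem:eval_of_T_lem}, which I would establish in three steps. The eigenvalues $\lambda$ of $T(z)$ are the roots of $p(\lambda,z) := \det\!\big(A^\infty \lambda^2 - (z - V^\infty)\lambda + (A^\infty)^*\big)$; invertibility of $A^\infty$ makes $p$ a polynomial of exact degree $2N$ with nonzero constant term, so there are $2N$ roots, none at $0$ or $\infty$. Next, using that $V^\infty$ is Hermitian, I would record the para-Hermitian symmetry $\overline{p(\lambda,z)} = \bar\lambda^{2N}\, p(1/\bar\lambda,\bar z)$, so that for real $z$ the roots occur in reciprocal pairs $\{\lambda, 1/\bar\lambda\}$ — the structural replacement for ``$\det T^\infty = 1$'' in Lemma~\ref{lem:T_possibilities}. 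Finally, I would exclude roots on the unit circle whenever $z \notin \sigma_{\text{ess}}(H)$: a root $e^{i\theta}$ produces a bounded Bloch solution $g_m = e^{im\theta}v$ of the tail recurrence, from which one builds a Weyl sequence forcing $z \in \sigma_{\text{ess}}(H)$, exactly as in the remark preceding Lemma~\ref{lem:eval_of_T_lem}. Reciprocal pairing together with the absence of unit-modulus roots gives the split $N$/$N$ at every real $z$ in the spectral gap; since $\sigma_{\text{ess}}(H)$ is a union of finitely many real intervals its complement in $\mathbb{C}$ is connected, and the roots vary continuously with $z$ without ever crossing the unit circle there, so the interior count is constant on $\mathbb{C} \setminus \sigma_{\text{ess}}(H)$ and hence equal to $N$ for all admissible $z$. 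This counting step — in particular handling non-semisimple eigenvalues through generalized eigenspaces and fixing the base-point count — is where I expect the main difficulty.

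With $\dim V_{\text{decaying}} = N$ in hand, the reduction to the truncated problem is a short existence-and-uniqueness argument. For existence, since $z \notin \sigma(H)$ the genuine $l^2$ solution $g = G(z)f$ of \eqref{eq:G_generall_3} exists, its tail decays, so $[g_{M+1},g_{M+2}]^\trans \in V_{\text{decaying}}$, and its first $M+2$ entries therefore solve the finite system consisting of \eqref{eq:G_generall_3} for $m = 1,\dots,M+1$, the condition $g_0 = 0$, and the boundary condition $[g_{M+1},g_{M+2}]^\trans \in V_{\text{decaying}}$ (a square $(M+2)N \times (M+2)N$ system). For uniqueness, the difference of two solutions of this finite system satisfies the homogeneous recurrence with $g_0 = 0$ and tail data in $V_{\text{decaying}}$; extending it to all $m > M+2$ by $T(z)$ keeps it exponentially decaying, yielding an $l^2$ solution of $(zI - H)g = 0$, which must vanish because $z$ is not an eigenvalue of $H$. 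Hence the truncated system is uniquely solvable and its solution necessarily agrees with the restriction of $G(z)f$, which proves the claim.
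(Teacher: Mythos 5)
Your argument covers only the easy half of the theorem and misses its main content, because you have misread Assumption \ref{as:M_invertible}. That assumption does \emph{not} assert that $S(z)$ is invertible; it prescribes the block structure that $S(z)$ must have \emph{when it is singular} (zero first block row, zero last block column, with $M_{21}$ and the Schur complement $M_{12}-M_{11}(M_{21})^{-1}M_{22}$ invertible). Since $S(z)$ in \eqref{eq:S_of_z} is block lower triangular with $-A^\infty$ on the diagonal, it is invertible iff $A^\infty$ is, and for every model in the paper --- including the SSH chain, where $A^\infty = \begin{bmatrix} 0 & 0 \\ t_2^\infty & 0 \end{bmatrix}$ is nilpotent, see \eqref{eq:SSH_S} --- it is singular. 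In that case your construction collapses: $T(z) = S^{-1}(z)S^*(\bar z)$ does not exist, the polynomial $p(\lambda,z) = \det\left(A^\infty\lambda^2 - (z-V^\infty)\lambda + (A^\infty)^*\right)$ has leading coefficient $\det A^\infty = 0$ so its degree drops below $2N$ and it acquires roots at $0$ and $\infty$, and the reciprocal-pairing count no longer produces an $N$-dimensional decaying subspace of $\mathbb{C}^{2N}$ by itself.

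The paper's proof spends most of its effort precisely on this singular case: it decomposes $[g_m, g_{m+1}]^\trans$ along $\mathrm{Null}\,S^*(z)$, $\mathrm{Null}\,S(z)$ and their common orthogonal complement, uses invertibility of $M_{21}$ to eliminate the null-space components, obtains a \emph{reduced} transfer matrix $T = -( M_{12} - M_{11} (M_{21})^{-1} M_{22} )^{-1} ( M_{12}^* - M_{22}^* (M_{21}^*)^{-1} M_{11}^* )$ acting only on the complement, and then defines $V_{\text{decaying}}$ as the set of vectors satisfying both a solvability constraint (tying the $\mathrm{Null}\,S$ component to the complement component) and the condition that the complement component lie in the decaying spectral subspace of this reduced $T$; Lemma \ref{lem:symp} applied to the Schur complement then supplies the dimension count $N$. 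For the nonsingular case your route --- transfer matrix, reciprocal symmetry of the characteristic polynomial, Weyl-sequence exclusion of unit-modulus eigenvalues, continuation to complex $z$ along a path avoiding $\sigma(H)$, and the truncated existence-and-uniqueness argument --- agrees with the paper's. But as written the proposal does not prove the theorem under the hypothesis it actually carries, and in particular does not apply to any of the Hamiltonians for which the theorem is used.
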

The proof makes use of the following Lemma
\begin{lemma} \label{lem:symp}
Let $\mathcal{S}$ denote an invertible $2 \mathcal{N} \times 2 \mathcal{N}$ complex matrix and let $\mathcal{T} := \mathcal{S}^{-1} \mathcal{S}^*$. Then the characteristic polynomial of $\mathcal{T}$ satisfies the symmetry 
\begin{equation}
    \det[ \mathcal{T} - \lambda I ] = \lambda^{2 \mathcal{N}} \det\left[ \mathcal{T} - \frac{1}{\overline{\lambda}} I \right] \quad \lambda \in \mathbb{C} \setminus {0},
\end{equation}
and hence whenever $\lambda \in \mathbb{C} \setminus {0}$ is an eigenvalue of $\mathcal{T}$ with multiplicity $\mu$ so is $\frac{1}{\overline{\lambda}}$.
\end{lemma}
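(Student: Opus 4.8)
The plan is to reduce the whole statement to two elementary facts — how the determinant interacts with inversion and with the adjoint — and then read off both the polynomial symmetry and the eigenvalue pairing. The cleanest and fully rigorous core is a similarity relation, which I would establish first. Since $\mathcal{S}$ is invertible, so is $\mathcal{T} = \mathcal{S}^{-1}\mathcal{S}^*$, and I compute $\mathcal{T}^* = \mathcal{S}(\mathcal{S}^*)^{-1}$ together with $\mathcal{T}^{-1} = (\mathcal{S}^*)^{-1}\mathcal{S}$. Observing that
\[
    \mathcal{T}^{-1} = (\mathcal{S}^*)^{-1}\mathcal{S} = \mathcal{S}^{-1}\bigl[\mathcal{S}(\mathcal{S}^*)^{-1}\bigr]\mathcal{S} = \mathcal{S}^{-1}\mathcal{T}^*\mathcal{S},
\]
I conclude that conjugation by $\mathcal{S}$ carries $\mathcal{T}^*$ to $\mathcal{T}^{-1}$, so the two matrices share a characteristic polynomial. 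The eigenvalues of $\mathcal{T}^*$ are the complex conjugates of those of $\mathcal{T}$, and the eigenvalues of $\mathcal{T}^{-1}$ are their reciprocals, in both cases with algebraic multiplicities preserved; hence the multiset of reciprocals of the eigenvalues of $\mathcal{T}$ coincides with the multiset of their conjugates. This is exactly the assertion that $\lambda$ is an eigenvalue of $\mathcal{T}$ of multiplicity $\mu$ if and only if $1/\overline{\lambda}$ is, with invertibility guaranteeing $0 \notin \sigma(\mathcal{T})$ so that $\lambda \mapsto 1/\overline{\lambda}$ is well defined on the spectrum.

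To exhibit the explicit determinant symmetry I would clear the inverse in the characteristic polynomial,
\[
    \det[\mathcal{T} - \lambda I] = \det\!\bigl[\mathcal{S}^{-1}(\mathcal{S}^* - \lambda\mathcal{S})\bigr] = \frac{1}{\det\mathcal{S}}\,\det[\mathcal{S}^* - \lambda\mathcal{S}],
\]
reducing everything to a symmetry of $f(\lambda) := \det[\mathcal{S}^* - \lambda\mathcal{S}]$. Here I use that the determinant is invariant under transposition and conjugates under entrywise conjugation; recalling $\mathcal{S}^* = \overline{\mathcal{S}^\trans}$, this gives
\[
    \overline{f(\lambda)} = \det[\mathcal{S}^\trans - \overline{\lambda}\,\overline{\mathcal{S}}] = \det[\mathcal{S} - \overline{\lambda}\,\mathcal{S}^*],
\]
the last step a transposition. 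Extracting the scalar $-\overline{\lambda}$ from the $2\mathcal{N}\times 2\mathcal{N}$ determinant produces the even power $\overline{\lambda}^{\,2\mathcal{N}}$ (the sign being absorbed because the dimension is even), turning $\det[\mathcal{S} - \overline{\lambda}\,\mathcal{S}^*]$ into $\overline{\lambda}^{\,2\mathcal{N}} f(1/\overline{\lambda})$. Unwinding the conjugation and the $\det\mathcal{S}$ prefactor then yields the claimed relation between $\det[\mathcal{T}-\lambda I]$ and $\det[\mathcal{T}-\tfrac{1}{\overline{\lambda}}I]$; the only residual scalar is the unimodular factor $\overline{\det\mathcal{S}}/\det\mathcal{S}$, which is immaterial for locating roots.

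I expect the main obstacle to be bookkeeping rather than depth: one must keep straight the three operations (inverse, transpose, entrywise conjugate), track exactly how each acts on a determinant, and verify that pulling the scalar out of the $2\mathcal{N}$-dimensional determinant leaves precisely $\lambda^{2\mathcal{N}}$ with no stray sign — this is where evenness of the dimension enters. The more subtle point is upgrading the determinant identity from a statement about the \emph{location} of roots to one about their \emph{multiplicities}, which is awkward directly because $\lambda \mapsto 1/\overline{\lambda}$ is antiholomorphic. For this reason I would take the multiplicity claim from the similarity $\mathcal{T}^{-1} \sim \mathcal{T}^*$, where preservation of algebraic multiplicity is automatic, and use the determinant computation only to display the explicit characteristic-polynomial symmetry.
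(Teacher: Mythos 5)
Your argument is correct, and it is worth comparing with the paper's, which consists of a single two-line determinant computation: $\det[\mathcal{S}^{-1}\mathcal{S}^*-\lambda I]=\lambda^{2\mathcal{N}}\det[\mathcal{S}^{-1}]\det[\tfrac{1}{\lambda}I-\mathcal{S}(\mathcal{S}^*)^{-1}]\det[\mathcal{S}^*]$, followed by the assertion that this equals $\lambda^{2\mathcal{N}}\det[\tfrac{1}{\overline{\lambda}}I-\mathcal{T}]$. Your more careful bookkeeping exposes what that second step really is: since $\mathcal{S}(\mathcal{S}^*)^{-1}=\mathcal{T}^*$, one has $\det[\tfrac{1}{\lambda}I-\mathcal{T}^*]=\overline{\det[\tfrac{1}{\overline{\lambda}}I-\mathcal{T}]}$ and $\det[\mathcal{S}^{-1}]\det[\mathcal{S}^*]=\overline{\det\mathcal{S}}/\det\mathcal{S}$, so the identity as literally printed holds only after conjugating the right-hand determinant and up to the unimodular constant you isolate. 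Both discrepancies are immaterial for the conclusion actually used (the pairing $\lambda\leftrightarrow 1/\overline{\lambda}$ of spectra), but your version is the accurate statement. The second genuine difference is your handling of multiplicities: the paper leaves the passage from the determinant symmetry to equality of multiplicities implicit, which is slightly delicate because $\lambda\mapsto 1/\overline{\lambda}$ is antiholomorphic (though one can note that $\lambda\mapsto\overline{\det[\mathcal{T}-\tfrac{1}{\overline{\lambda}}I]}$ is again holomorphic away from $0$, so orders of vanishing can still be read off). Your similarity $\mathcal{T}^{-1}=\mathcal{S}^{-1}\mathcal{T}^*\mathcal{S}$ settles this cleanly, since similarity preserves the characteristic polynomial and hence algebraic multiplicities, and the multiset identity $\{1/\lambda_i\}=\{\overline{\lambda_i}\}$ follows at once. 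In short: same underlying mechanism (the interplay of $\det$ with inversion and adjoint), but your route buys a rigorous multiplicity statement and a corrected form of the identity, at the cost of a few extra lines.
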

\begin{proof}[Proof of Lemma \ref{lem:symp}]
The proof is by direct computation using standard identities satisfied by the determinant:
\begin{equation}
\begin{split}
    \det[ \mathcal{S}^{-1} \mathcal{S}^* - \lambda I ] &= \lambda^{2 \mathcal{N}} \det[ \mathcal{S}^{-1} ] \det\left[ \frac{1}{ \lambda } I - \mathcal{S} ( \mathcal{S}^* )^{-1} \right] \det[ \mathcal{S}^* ]  \\
    &= \lambda^{2 \mathcal{N}} \det\left[ \frac{1}{\overline{\lambda}} I - \mathcal{S}^{-1} \mathcal{S}^* \right].
\end{split}
\end{equation}
\end{proof}
We now prove Theorem \ref{th:BCs_th}. 
\begin{proof}[Proof of Theorem \ref{th:BCs_th}]
In the simplest case, $S(z)$ is non-singular and we can obtain a transfer matrix relating $g_{m-2}, g_{m-1}$ with $g_{m}, g_{m+1}$
\begin{equation} \label{eq:traans}
    \begin{bmatrix} g_{m} \\ g_{m+1} \end{bmatrix} = T^\infty(z) \begin{bmatrix} g_{m-2} \\ g_{m-1} \end{bmatrix} \quad T^\infty(z) := S^{-1}(z) S^*(\overline{z}).
\end{equation}
By assumption, $z \notin \sigma(H)$. If $T^\infty(z)$ has an eigenvalue of magnitude $1$ we can generate a Weyl sequence which is a contradiction. On the other hand, when $z$ is real we may apply Lemma \ref{lem:symp} to the matrix $T^\infty(z)$ to see that $T^\infty(z)$ must have the same number of (possibly generalized) eigenvectors corresponding to eigenvalues with norm less than $1$ as corresponding to eigenvalues with norm greater than $1$, and hence must have precisely $N$ (possibly generalized) eigenvectors corresponding to eigenvalues with norm less than $1$. The statement of the theorem now follows for such $z$ by setting $V_{\text{decaying}}$ equal to the span of these vectors.

When $z \notin \sigma(H)$ is complex, the conclusion still holds by the following argument. Consider a smooth path between any real $z_0 \notin \sigma(H)$ and $z$ which does not intersect with $\sigma(H)$. Then the number of eigenvalues of $T^\infty(z)$ with norm less than $1$ cannot change along this path since if it did at the point of change an eigenvalue would have to lie on the unit circle in the complex plane. But we have a contradiction since this point must then be in the spectrum of $H$ since we can produce a Weyl sequence.

Often, however, the matrix $S(z)$ is singular. For the SSH model with nearest-neighbor hopping for example, $S(z)$ has the form
\begin{equation} \label{eq:SSH_S}
    \begin{bmatrix} 0 & 0 & 0 & 0 \\ - t_2^\infty & 0 & 0 & 0 & \\ z & - t^\infty_1 & 0 & 0 \\ - t^\infty_1 & z & - t_2^\infty & 0 \end{bmatrix},
\end{equation}
which has an obvious null space spanned by $[0,0,0,1]^\trans$. In such a case the system can still be solved using a transfer matrix, but the derivation is more involved. We require the following assumption which is clearly satisfied by \eqref{eq:SSH_S}.
\begin{assumption} \label{as:M_invertible}
We assume that if $S(z)$ is not invertible, it has the form 
\begin{equation}
    S(z) = \begin{bmatrix} 0 & 0 & 0 \\ M_{11} & M_{12} & 0 \\ M_{21} & M_{22} & 0 \end{bmatrix}
\end{equation}
where the complex matrices $M_{11}, M_{12}, M_{21}, M_{22}$ have sizes $2 m \times n$, $2 m \times 2 m$, $n \times n$, $n \times 2 m$, where $m := N - n > 0$. The matrices $M_{21}$ and $M_{12} - M_{11} (M_{21})^{-1} M_{22}$, which have sizes $n \times n$ and $2 m \times 2 m$ respectively, are invertible.
\end{assumption}
Under Assumption \ref{as:M_invertible}, the matrix
\begin{equation}
    M = \begin{bmatrix} M_{11} & M_{12} \\ M_{21} & M_{22} \end{bmatrix}
\end{equation}
is invertible by Schur complement and hence $n$ is precisely the dimension of $\text{Null } S(z)$. Assuming again that $z$ is real, the conjugate transpose $S^*(z)$ has the form
\begin{equation}
    S^*(z) = \begin{bmatrix} 0 & M_{11}^* & M_{21}^* \\ 0 & M_{12}^* & M_{22}^* \\ 0 & 0 & 0 \end{bmatrix},
\end{equation}
and Null $S^*(z) \perp $ Null $S(z)$. We may therefore introduce orthogonal projections onto these spaces and their orthogonal complement
\begin{equation}
    P_{\text{ Null }S^*(z)}, \quad P_{\text{ Null }S(z)}, \quad P^\perp(z) := 1 - P_{\text{ Null }S^*(z)} - P_{\text{ Null }S(z)}.
\end{equation}
For simplicity we now suppress dependence on $z$ and introduce the notation
\begin{equation}
    A_n := P_{\text{ Null }S^*} \begin{bmatrix} g_{n} \\ g_{n+1} \end{bmatrix}, B_n := P^\perp \begin{bmatrix} g_{n} \\ g_{n+1} \end{bmatrix}, C_n := P_{\text{ Null }S} \begin{bmatrix} g_{n} \\ g_{n+1} \end{bmatrix}.
\end{equation}
The system \eqref{eq:genrlized} can then be written row by row as
\begin{equation} \label{eq:r_b_r}
\begin{split}
    &M_{11}^* B_{m-2} + M_{21}^* C_{m-2} = 0    \\
    &M_{12}^* B_{m-2} + M_{22}^* C_{m-2} + M_{11} A_m + M_{12} B_m = 0   \quad m \geq M + 2 \\
    &M_{21} A_m + M_{22} B_m = 0.
\end{split}
\end{equation}
Using invertibility of $M_{21}$ (Assumption \ref{as:M_invertible}) the first and third equations of the system can be written 
\begin{equation} \label{eq:solvability}
    C_{m-2} = - (M_{21}^*)^{-1} M_{11}^* B_{m-2} 
\end{equation}
\begin{equation} \label{eq:A_m_eq}
    A_m = - (M_{21})^{-1} M_{22} B_m.
\end{equation}
Substituting these relations into the second equation yields
\begin{equation}
    ( M_{12}^* - M_{22}^* (M_{21}^*)^{-1} M_{11}^* ) B_{m-2} + ( M_{12} - M_{11} (M_{21})^{-1} M_{22} ) B_m = 0.
\end{equation}
By Assumption \ref{as:M_invertible} we now see that
\begin{equation} \label{eq:B_m_eq}
    B_m = T B_{m-2},
\end{equation}
where
\begin{equation} \label{eq:new_T}
    T := - ( M_{12} - M_{11} (M_{21})^{-1} M_{22} )^{-1} ( M_{12}^* - M_{22}^* (M_{21}^*)^{-1} M_{11}^* ).
\end{equation}
Assuming that $B_{m-2}$ and $C_{m-2}$ satisfy the solvability condition \eqref{eq:solvability}, we see that the system \eqref{eq:r_b_r} uniquely specifies $B_m$ through \eqref{eq:B_m_eq}, which in turn specifies $A_m$ through \eqref{eq:A_m_eq}. In fact, $C_m$ is also uniquely specified by $B_m$ through the condition that the system \eqref{eq:r_b_r} be solvable when $m$ is replaced by $m+2$, i.e.
\begin{equation}
    C_m = - (M_{21}^*)^{-1} M_{11}^* B_{m}.
\end{equation}
It now follows that if $B_{m-2}$ lies in the associated eigenspace of the eigenvalues of \eqref{eq:new_T} with norm less than $1$, then the system \eqref{eq:r_b_r} has a solution which decays as $m \rightarrow \infty$. In terms of the original notation, we have that as long as (1) $\begin{bmatrix} g_{m-2}, g_{m-1} \end{bmatrix}^\trans$ satisfies the condition that
\begin{equation}
    P_{ \text{ Null }S} \begin{bmatrix} g_{m-2} \\ g_{m-1} \end{bmatrix} = - (M^*_{21})^{-1} M_{11}^* P^\perp \begin{bmatrix} g_{m-2} \\ g_{m-1} \end{bmatrix},
\end{equation}
and (2) $P^\perp \begin{bmatrix} g_{m-2}, g_{m-1} \end{bmatrix}^\trans$ lies in the image of the projection onto the space of associated eigenvectors of eigenvalues of $T$ \eqref{eq:new_T} with norm less than one, then the system \eqref{eq:genrlized} has a unique solution which lies in $l^2(\mathbb{N};\mathbb{C}^N)$. We therefore define $V_{\text{decaying}}$ as the space of vectors satisfying both of these conditions. By Lemma \ref{lem:symp} this space has dimension precisely $N$. The generalization to complex $z \notin \sigma(H)$ follows from the same argument given for the case where $S(z)$ is non-singular. 
\end{proof}

\begin{remark} \label{rem:other_R}
When $R > 1$, the discussion is the same except that the matrix $S$ must be taken larger. For example when $R = 2$, we take
\begin{equation}
    S = \begin{bmatrix} - A_2 & 0 & 0 & 0 \\ - A_1 & - A_2 & 0 & 0 \\ z - V & - A_1 & - A_2 & 0 \\ - A_1^* & z - V & - A_1 & - A_2 \end{bmatrix},
\end{equation}
which now acts on 4 cells of the column $\{ g_m \}_{m \in \mathbb{N}}$. %\jl{not sure if we need to discuss this as we said before that we can just assume $R = 1$} \aw{if we agree to not simply take $R = 1$ then it might help understanding}
\end{remark}

%It is illuminating to compute $T$ defined by \eqref{eq:new_T} for the SSH model. In this case we have
%\begin{equation}
    %M_{11} = \begin{bmatrix} - t_2^\infty \\ z \end{bmatrix}, M_{12} = \begin{bmatrix} 0 & 0 \\ - t_1^\infty & 0 \end{bmatrix}, M_{21} = - t_1^\infty, M_{22} = \begin{bmatrix} z & - t_2^\infty \end{bmatrix}
%\end{equation}
%and hence
%\begin{equation}
%\begin{split}
    %T &= - \begin{bmatrix} - \frac{ t_2^\infty z }{ t_1^\infty } & - \frac{ (t_2^\infty)^2 }{ t_1^\infty } \\ \frac{ z^2 }{ t_1^\infty } - t_1^\infty & \frac{z^2}{t_1^\infty} \end{bmatrix}^{-1} \begin{bmatrix} - \frac{ t_2^\infty z }{ t_1^\infty } &  \frac{ z^2 }{ t_1^\infty } - t_1^\infty  \\ - \frac{ (t_2^\infty)^2 }{ t_1^\infty } & \frac{z^2}{t_1^\infty} \end{bmatrix}   \\
    %&= \frac{1}{ - \frac{(t_2^\infty)^2}{t_1^\infty} +  }
%\end{split}
%\end{equation}

Assumption \ref{as:M_invertible} holds for all the models we consider in this paper; so, Theorem \ref{th:BCs_th} gives a relatively simple and direct proof of our method for these cases.  However, we note that Assumption \ref{as:M_invertible} is not necessary.  In Appendices \ref{sec:method_in_general} and \ref{sec:method_in_general_proofs}, we generalize our boundary condition method so that only two assumptions are required \dash the Hamiltonian must be (1) of finite range and (2) eventually periodic.  The more general algorithm is slightly different than the one given above due to some special situations that can occur in the more general setting.  See the appendices for the detailed general algorithm, as well as the full proof that the more general algorithm works in this general setting.

\section{Application to edge states of honeycomb structures} \label{sec:appToHoneycomb}

In this section, we apply our method to Hamiltonians which model electronic states at edges of two-dimensional honeycomb structures with defects. The structure of this section is as follows.

In Section \ref{sec:honeyc_bulk} we define the tight-binding bulk Hamiltonian of a general honeycomb structure with real, nearest-neighbor hopping. In Section \ref{sec:honey_zz_edge} we consider Hamiltonians describing states at a zig-zag edge of a graphene-like structure when the edge has defects. We choose to focus on the zig-zag edge because this edge (as opposed to, for example, an armchair edge) is known to support edge states in the absence of disorder. In Section \ref{sec:domainWall} we consider an edge Hamiltonian describing states localized along the interface of two distinct such structures. We assume in this case again that the interface occurs along a zig-zag edge. %\jl{perhaps we shall explain why we focus on zig-zag edges and how about other edges} \aw{fixed}
In both cases we employ a supercell-type approximation to reduce a two-dimensional problem to a one-dimensional problem. In Section \ref{sec:continuum} we show that our method can also be applied to a continuum Schr\"odinger equation model of the boundary of a honeycomb structure under finite difference discretization.

\subsection{Tight-binding honeycomb structure with nearest-neighbor hopping bulk Hamiltonian} \label{sec:honeyc_bulk}
We consider a single electron without spin hopping on a two-dimensional honeycomb lattice. Each fundamental cell, labeled by integers $m$ and $n$, of such a lattice hosts two atoms, which we label $A$ and $B$. We introduce the notation
\begin{equation}
	\psi_{m,n} = \begin{bmatrix} \psi^A_{m,n} \\ \psi^B_{m,n} \end{bmatrix},
\end{equation}
to denote the restriction of the electron wavefunction $\psi \in l^2(\mathbb{Z}^2;\mathbb{C}^2)$ to the $(m, n)$th fundamental cell of the lattice. 
The bulk Hamiltonian of a general honeycomb structure with nearest-neighbor hopping is:
\begin{equation} \label{eq:struc}
\begin{split}
	\left[ \Hbulk \psi \right]_{m,n} = \phantom{=} &A^* \psi_{m-1,n} + B^* \psi_{m,n-1}     \\
        &\phantom{blablabla}+ V \psi_{m,n} + A \psi_{m+1,n} + B \psi_{m,n+1} \quad m,n \in \mathbb{Z} \times \mathbb{Z}   \\
\end{split}
\end{equation}
\begin{equation}
        A := \begin{bmatrix} 0 & 0 \\ -t_1 & 0 \end{bmatrix}, \quad B := \begin{bmatrix} 0 & 0 \\ -t_2 & 0 \end{bmatrix}, \quad V := \begin{bmatrix} V^A & -t_0 \\ -t_0 & V^B \end{bmatrix},
        %- \begin{bmatrix} t_0 \psi^B_{m,n} + t_1 \psi^B_{m-1,n} + t_2 \psi^B_{m,n-1} \\ t_0 \psi^A_{m,n} + t_1 \psi^A_{m+1,n} + t_2 \psi^A_{m,n+1} \end{bmatrix} + \begin{bmatrix} V^A \psi^A_{m,n} \\ V^B \psi^B_{m,n} \end{bmatrix} \quad m,n \in \mathbb{Z} \times \mathbb{Z}.
\end{equation}
where the onsite potentials $V^A, V^B$ are assumed real, and the hopping amplitudes $t_0, t_1, t_2$ are assumed real and non-zero. Bloch's theorem in this context implies that bounded eigenfunctions of $\Hbulk$ satisfy: 
\begin{equation}
    \Phi_{m+1,n}(k_1,k_2) = e^{i k_1} \Phi_{m,n}, \quad \Phi_{m,n+1}(k_1,k_2) = e^{i k_2} \Phi_{m,n}
\end{equation}
for $k_j \in [-\pi,\pi], j \in \{1,2\}$. It is then a standard calculation to find the Bloch bands $E^\pm(k_1,k_2$) of $\Hbulk$. Depending on the values of the $t_j$ and $V^\sigma$ the Hamiltonian \eqref{eq:struc} may or may not have a spectral gap.

When $t_0 = t_1 = t_2$ and $V^\sigma = 0$, $\sigma \in \{A,B\}$ the Hamiltonian \eqref{eq:struc} models graphene whose Bloch bands are non-degenerate other than at $K := \left(\frac{2 \pi}{3},-\frac{2 \pi}{3}\right)$ and $- K$, known as the ``Dirac points'', where the bands touch at $E = 0$ \cite{2009Castro-NetoGuineaPeresNovoselovGeim}. 

\subsection{Tight-binding honeycomb structure edge Hamiltonians with defects} \label{sec:honey_zz_edge}
We now model the edge of a honeycomb structure possibly subject to defects as follows. We allow for inter-atom hopping amplitudes $t_j(m,n), j \in \{0,1,2\}$ and for onsite potentials $V^\sigma(m,n), \sigma \in \{A,B\}$ which depend on the cell indices $m,n$, making the following assumptions.
\begin{assumption}[Periodicity of bulk medium] \label{as:eventually_nodefects}
There exists a non-negative integer $M$ such that for all $m \geq M$, $V^{\sigma}(m,n) = V^\sigma$ for some real $V^\sigma$ and $t_j(m,n) = t_j$ for some real $t_j$, for all $n \in \mathbb{Z}$, $\sigma \in \{A, B\}$, and $j \in \{0,1,2\}$.
\end{assumption}
\begin{assumption}[Periodicity of disorder along edge] \label{as:edge_periodicity}
There exists a positive integer $N$ such that $V^\sigma(m,n+N) = V^{\sigma}(m,n)$ and $t_j(m,n+N) = t_j(m,n)$ for all $m \in \mathbb{N}$, $n \in \mathbb{Z}$,  $\sigma \in \{A, B\}$, and $j \in \{0,1,2\}$.
\end{assumption}
\begin{remark}
Our method may be used to study the effect of more general disorder not satisfying Assumption \ref{as:edge_periodicity} for any $N$. This leads to an approximate numerical method known as the supercell method (see for example \cite{AllenTildesley}). Under Assumption \ref{as:edge_periodicity} our method is exact. 
\end{remark}
In summary, we consider the Hamiltonian
\begin{equation*}
\begin{split}
        &\left[ \Hzzedged \psi \right]_{m, n} = \phantom{=} A^*(m-1,n) \psi_{m-1,n} + B^*(m,n-1) \psi_{m,n-1} \\
        & \phantom{blablablablablabla} + V(m,n) \psi_{m,n} + A(m,n) \psi_{m+1,n} + B(m,n) \psi_{m,n+1} \quad m, n \in \mathbb{N} \times \mathbb{Z}  \\
	&\psi_{0,n} = 0 \text{ for } n \in \mathbb{Z}.    \\
\end{split}
\end{equation*}
%	- \begin{bmatrix} t_0(m,n) \psi^B_{m,n} + t_1(m,n) \psi^B_{m-1,n} + t_2(m,n) \psi^B_{m,n-1} \\ t_0(m,n) \psi^A_{m,n} + t_1(m+1,n) \psi^A_{m+1,n} + t_2(m,n+1) \psi^A_{m,n+1} \end{bmatrix} + \begin{bmatrix} V^A(m,n) \psi^A_{m,n} \\ V^B(m,n) \psi^B_{m,n} \end{bmatrix} 	\\
%	&\qquad \qquad \qquad \qquad \; \qquad \qquad \qquad \qquad \qquad \qquad \qquad \qquad \qquad \qquad \qquad \qquad \qquad  m,n \in \mathbb{N}\times\mathbb{Z}	\\
where the matrices $A(m,n), B(m,n), V(m,n)$ are defined by 
\begin{equation} \label{eq:hop_matrices}
\begin{aligned}
        & A(m,n) = \begin{bmatrix} 0 & 0 \\ t_1(m,n) & 0 \end{bmatrix} \quad B(m,n) = \begin{bmatrix} 0 & 0 \\ 0 & t_2(m,n) \end{bmatrix} \\
        & V(m,n) = \begin{bmatrix} V^A(m,n) & t_0(m,n) \\ t_0(m,n) & V^B(m,n) \end{bmatrix} 
        \end{aligned}
\end{equation}
and the $V^\sigma(m,n)$, $\sigma \in \{A,B\}$, and $t_j(m,n)$, $j \in \{0,1,2\}$ are as in Assumptions \ref{as:eventually_nodefects} and \ref{as:edge_periodicity}.
\begin{remark}
A considerable literature related with edge states of graphene exists, mostly focusing on graphene ``ribbons'', see e.g., \cite{1996NakadaFujitaDresselhausDresselhaus,2009Castro-NetoGuineaPeresNovoselovGeim,2011DelplaceUllmoMontambaux}.
\end{remark}
Under Assumption \ref{as:edge_periodicity}, there is no loss of generality in imposing the quasi-periodic boundary condition (Bloch transforming) with respect to $n$ %\jl{this sentence seems a bit confusing; I thought you meant to say that we can do a Bloch transform in the $n$ direction, perhaps we shall state this more explicitly} \aw{added clarification}
\begin{equation} \label{eq:edge_quasi_per}
	\psi_{m,n+N} = e^{i k_\parallel N} \psi_{m,n}, \quad m,n \in \mathbb{N} \times \mathbb{Z},
\end{equation}
where $k_\parallel \in \left[-\frac{\pi}{N},\frac{\pi}{N}\right]$, which leads to the $k_\parallel$-dependent eigenvalue problem defined on a single supercell
\begin{equation} \label{eq:supercell_edge_H}
\begin{split}
        &\left[ \Hzzedged \psi(k_\parallel) \right]_{m, n} = \phantom{=} A^*(m-1,n) \psi_{m-1,n} + B^*(m,n-1) \psi_{m,n-1} + V(m,n) \psi_{m,n}  \\
        & \phantom{blablablablablablablabla} + A(m,n) \psi_{m+1,n} + B(m,n) \psi_{m,n+1} \quad m, n \in \mathbb{N} \times \left\{1,...,N\right\}  \\
%	&\psi_{0,n} = 0 \text{ for } n \in \left\{1,...,N\right\}.    \\
%	&\left[ \Hzzedged(k_\parallel) \psi \right]_{m, n} = - \begin{bmatrix} t_0(m,n) \psi^B_{m,n} + t_1(m-1,n) \psi^B_{m-1,n} + t_2(m,n-1) \psi^B_{m,n-1} \\ t_0(m,n) \psi^A_{m,n} + t_1(m,n) \psi^A_{m+1,n} + t_2(m,n) \psi^A_{m,n+1} \end{bmatrix} + \begin{bmatrix} V^A(m,n) \psi^A_{m,n} \\ V^B(m,n) \psi^B_{m,n} \end{bmatrix} \\
%	&\; \quad \qquad \qquad \qquad \qquad \qquad \qquad \qquad \qquad \qquad \qquad \qquad \qquad \qquad \qquad \qquad  m,n \in \mathbb{N}\times\{1,...,N\}	\\
%	&V^{\sigma}(m,n) = 0 \text{ and } t_j(m,n) = t \text{ for } m \in \mathbb{N} \text{ and } m > M, n \in \{1,...,N\}, j \in \{0,1,2\}, \sigma \in \{A,B\} \\
%	&t_2(m,0) = t_2(m,N) \text{ for } m \in \mathbb{N}	\\
	&\psi_{0,n} = 0 \text{ for } n \in \{1,...,N\} \quad \psi_{m,0} = e^{- i k_\parallel N} \psi_{m,N} \quad \psi_{m,N+1} = e^{i k_\parallel N} \psi_{m,1}, \text{ for } m \in \mathbb{N}.
\end{split}
\end{equation}
The system \eqref{eq:supercell_edge_H} is now an eventually periodic semi-infinite Hamiltonian in the sense of Definition \ref{def:ev_per} defined on $l^2(\mathbb{N};\mathbb{C}^{2 N})$ by defining
\begin{equation} \label{eq:honeycomb_notation}
    \psi_m^\sigma := \left[ \psi^\sigma_{m,1} , \psi^\sigma_{m,2} , ... \right]^\trans \quad \sigma \in \{A,B\},
\end{equation}
and is hence amenable to our method. An illustration of this notation is given in Figure \ref{fig:wavefunction_notation}, while results of computations using our method are shown in Figure \ref{fig:ZZ_remove_atom}.

\begin{figure}
\centerline{\includegraphics[scale=.45]{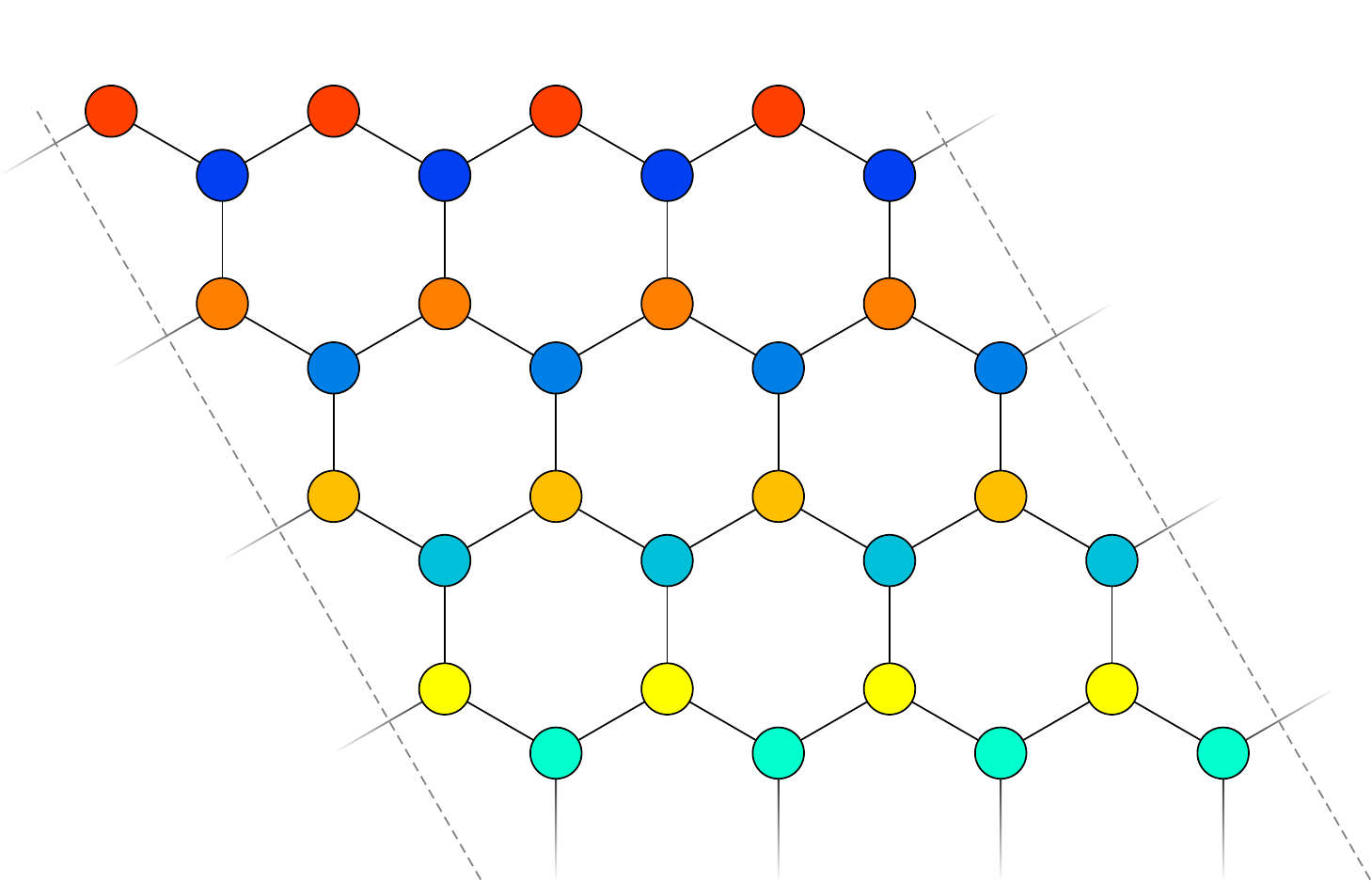}}
\caption{This figure illustrates the notation defined in \eqref{eq:honeycomb_notation}.  The coloring defines the different $\vec{\psi}^\sigma_m$ as follows: \textcolor[rgb]{1,0,0}{$\vec{\psi}_1^A$}, \textcolor[rgb]{1,.25,0}{$\vec{\psi}_2^A$}, \textcolor[rgb]{1,.5,0}{$\vec{\psi}_3^A$}, \textcolor[rgb]{1,1,0}{$\vec{\psi}_4^A$}, \textcolor[rgb]{0,0,.95}{$\vec{\psi}_1^B$}, \textcolor[rgb]{0,.25,.90}{$\vec{\psi}_2^B$}, \textcolor[rgb]{0,.5,.85}{$\vec{\psi}_3^B$}, \textcolor[rgb]{0,1,.8}{$\vec{\psi}_4^B$}.}
\label{fig:wavefunction_notation}
\end{figure}

%\subsection{Results: disordered zig-zag edge of a graphene-like structure} \label{sec:graph_results}

\begin{figure}
% ZZ/exp_greensfunc_missingatoms.m
% Generated on Kyle's laptop
\begin{subfigure}{.5\textwidth}
  \centering
  \includegraphics[trim={5.5cm 9.5cm 5cm 8.6cm}, clip, width=\textwidth]{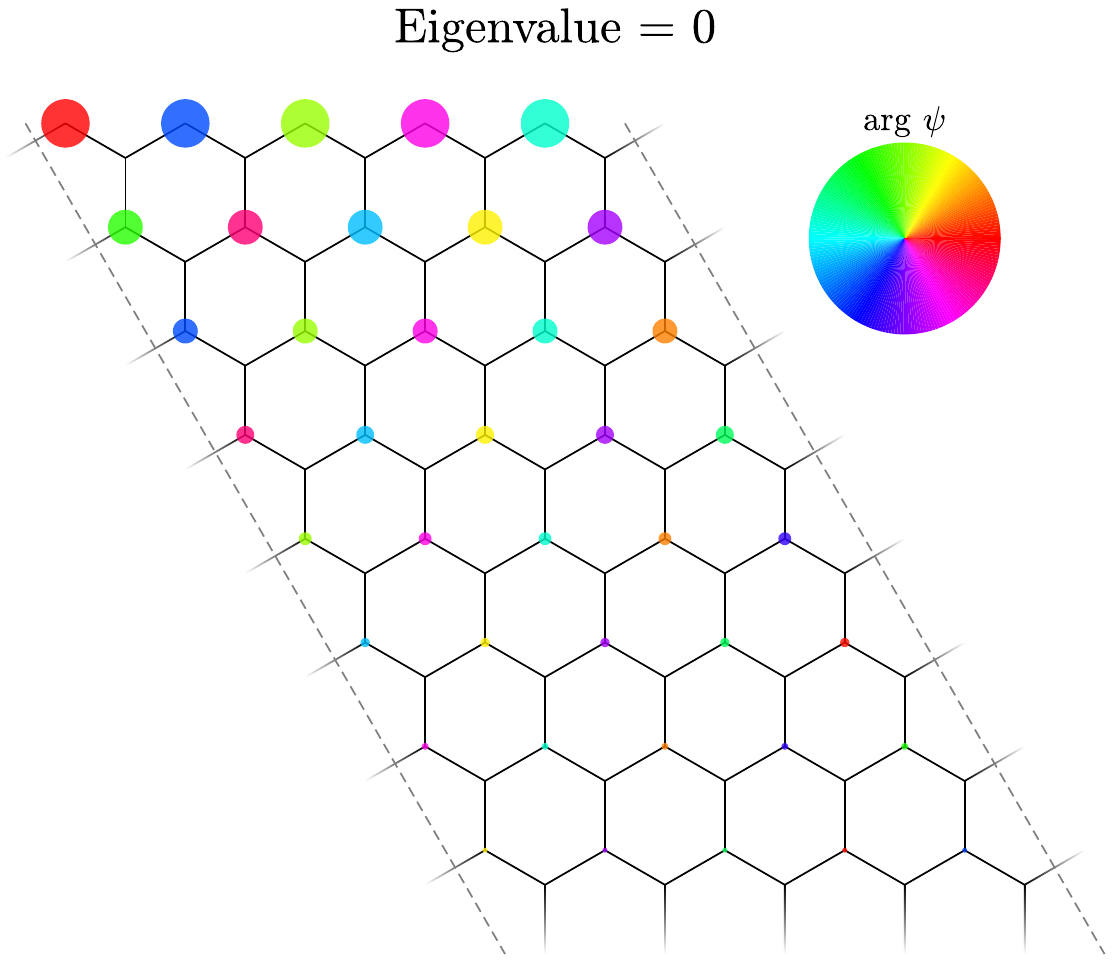}
  \caption{No defects}
  \label{fig:ZZedge_nodefects}
\end{subfigure}%
\begin{subfigure}{.5\textwidth}
  \centering
  \includegraphics[trim={5.5cm 9.5cm 5cm 8.6cm}, clip, width=\textwidth]{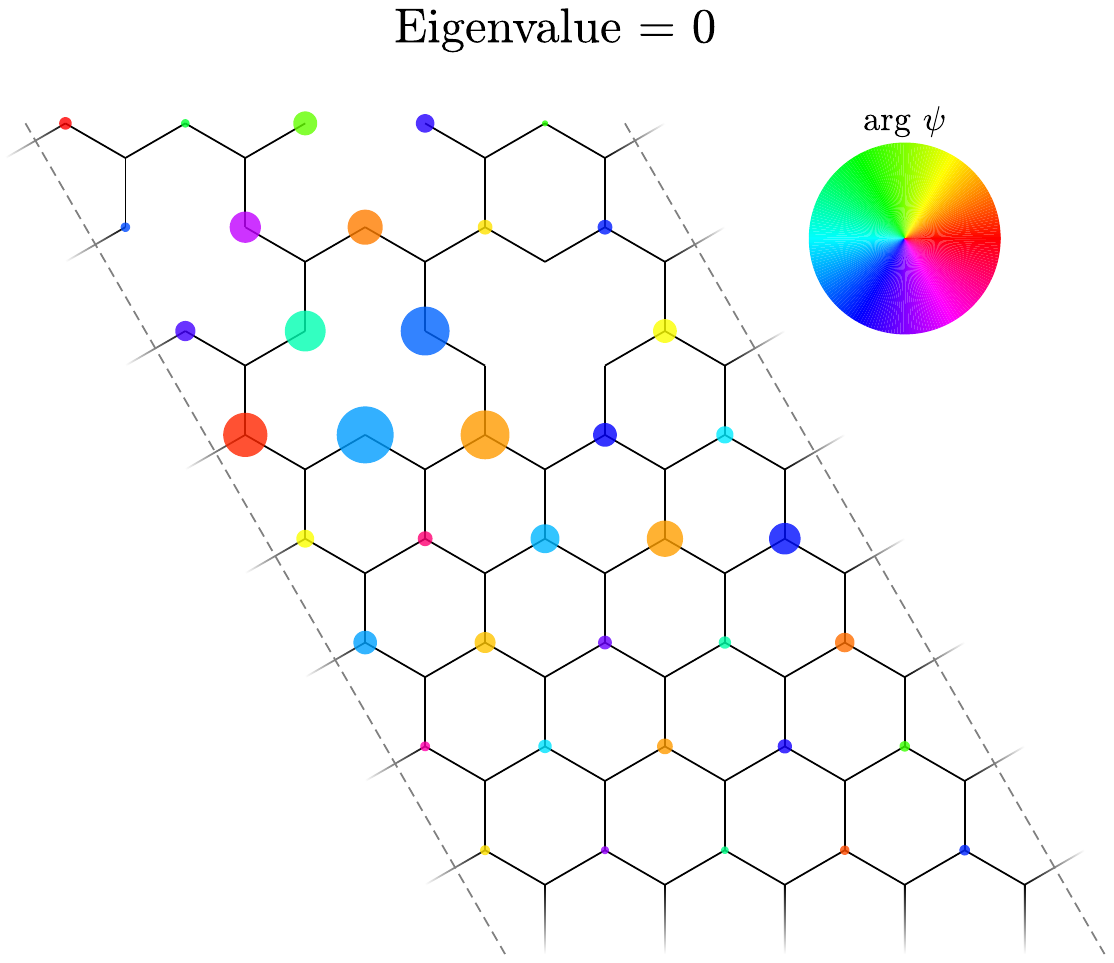}
  \caption{Atoms removed}
  \label{fig:ZZedge_removedatoms}
\end{subfigure}
\caption{Persistence of an edge state of a graphene-like structure under perturbation. In ({a}), a bound state of the zig-zag edge Hamiltonian \eqref{eq:supercell_edge_H} (edge state) when $N = 5$, $k_\parallel = \frac{1}{5} \left( - \frac{23 \pi}{30} \right)$, $V^\sigma(m,n) = 0$, $\sigma \in \{A,B\}$, and $t_j(m,n) = t$, $j \in \{0,1,2\}$ for all $m, n \in \mathbb{N} \times \mathbb{Z}$ is plotted. We find that this bound state has eigenvalue precisely $0$. The absolute value $|\psi|$ and the argument $\arg \psi$ of the wave function at each lattice point is represented by the radius and color, respectively, of the circle at that point. In ({b}), an edge state of the zig-zag edge Hamiltonian \eqref{eq:supercell_edge_H} with multiple atomic vacancies, modelled by a large onsite potential at each vacancy, is plotted. We observe that the edge state survives the perturbation: the Hamiltonian retains a bound state with eigenvalue $0$. 
%use our method to investigate the edge states of the ZZ-edge Hamiltonian under perturbations.  We consider the ZZ-edge Hamiltonian with period five in the direction parallel to the edge.  We first compute an edge state when the Hamiltonian is unperturbed.  Then, we perturb the Hamiltonian by removing four atoms from the honeycomb structure.  
Note that although the spectrum of the full two-dimensional bulk Hamiltonian is not gapped, for fixed values of $k_\parallel \notin \left\{\frac{2 \pi}{3},-\frac{2 \pi}{3}\right\}$ the essential spectrum of $\Hzzedged(k_\parallel)$ has a gap which is symmetric about $0$.}
\label{fig:ZZ_remove_atom}
\end{figure}

\subsection{Honeycomb structure tight-binding domain wall edge Hamiltonian} \label{sec:domainWall}

We now consider a second example: a ``domain wall'' interface between distinct honeycomb structures with defects. For simplicity we again assume that the interface occurs along a ``zig-zag'' edge. We make analogous assumptions to Assumptions \ref{as:eventually_nodefects}-\ref{as:edge_periodicity}. Specifically, we allow for inter-atom hopping amplitudes $t_j(m,n), j \in \{0,1,2\}$ and onsite potentials $V^\sigma(m,n), \sigma \in \{A,B\}$ which are non-trivial functions of the cell indices $m,n$ for $|m|$ sufficiently small. The restrictions we make are summed up in the following assumptions.
\begin{assumption}[Periodicity of bulk medium] \label{as:eventually_nodefects_dw}
There exist a non-negative integer $M_+$ such that for all $m \geq M_+$, $V^{\sigma}(m,n) = V^{\sigma,\infty}$ and $t_j(m,n) = t_j^\infty$ for all $n \in \mathbb{Z}$, $\sigma = A, B$, and $j \in \{0,1,2\}$ and a negative integer $M_-$ such that for all $m \leq M_-$, $V^{\sigma}(m,n) = V^{\sigma,-\infty}$ and $t_j(m,n) = t_j^{-\infty}$ for all $n \in \mathbb{Z}$, $\sigma = A, B$, and $j \in \{0,1,2\}$.
\end{assumption}
\begin{assumption}[Periodicity of disorder along edge] \label{as:edge_periodicity_dw}
There exists a positive integer $N$ such that $V^\sigma(m,n+N) = V^{\sigma}(m,n)$ and $t_j(m,n+N) = t_j(m,n)$ for all $m \in \mathbb{Z}$, $n \in \mathbb{Z}$, $j \in \{0,1,2\}$, and $\sigma = A, B$.
\end{assumption}
Under Assumptions \ref{as:eventually_nodefects_dw}-\ref{as:edge_periodicity_dw}, the problem is reduced to the study of the following Hamiltonian on a single supercell,
\begin{equation} \label{eq:supercell_edge_H_dw}
\begin{split}
	&\left[ \Hzzedged(k_\parallel) \psi \right]_{m, n} = \phantom{=} A^*(m-1,n) \psi_{m-1,n} + B^*(m,n-1) \psi_{m,n-1} + V(m,n) \psi_{m,n}  \\
        &\phantom{blablablablablablablabla} + A(m,n) \psi_{m+1,n} + B(m,n) \psi_{m,n+1}, \quad m,n \in \mathbb{Z}\times\{1,...,N\},	\\
        %- \begin{bmatrix} t_0(m,n) \psi^B_{m,n} + t_1(m-1,n) \psi^B_{m-1,n} + t_2(m,n-1) \psi^B_{m,n-1} \\ t_0(m,n) \psi^A_{m,n} + t_1(m,n) \psi^A_{m+1,n} + t_2(m,n) \psi^A_{m,n+1} \end{bmatrix} + \begin{bmatrix} V^A(m,n) \psi^A_{m,n} \\ V^B(m,n) \psi^B_{m,n} \end{bmatrix} \\ 
\end{split}
\end{equation}
where the matrices $A(m,n), B(m,n)$, and $V(m,n)$ are defined by \eqref{eq:hop_matrices}, the $V^\sigma(m,n)$, $\sigma \in \{A,B\}$ and $t_j(m,n)$, $j \in \{0,1,2\}$ are as in Assumptions \ref{as:eventually_nodefects_dw} and \ref{as:edge_periodicity_dw} and
\begin{equation}
%	&t_2(m,N+1) = t_2(m,1) \text{ for } m \in \mathbb{Z} \text{ and } n \in \{1,...,N\}	\\
%	&V^{\sigma}(m,n) = V^{\sigma,\infty} \text{ and } t_j(m,n) = t_j^\infty \text{ for } m \in \mathbb{Z}, m > M_+, n \in \{1,...,N\}, j \in \{0,1,2\}, \sigma \in \{A,B\} \\
%	&V^{\sigma}(m,n) = V^{\sigma,-\infty} \text{ and } t_j(m,n) = t_j^{-\infty} \text{ for } m \in \mathbb{Z}, m < M_-, n \in \{1,...,N\}, j \in \{0,1,2\}, \sigma \in \{A,B\} \\
	\psi_{0,n} = 0 \text{ for } n \in \{1,...,N\} \quad \psi_{m,0} = e^{- i k_\parallel N} \psi_{m,N} \quad \psi_{m,N+1} = e^{i k_\parallel N} \psi_{m,1} \text{ for } m \in \mathbb{Z}.
\end{equation}
Under Assumptions \ref{as:eventually_nodefects_dw} and \ref{as:edge_periodicity_dw}, $\Hzzedge(k_\parallel)$ is an infinite eventually periodic Hamiltonian in the sense of Definition \ref{def:ev_per_2} and hence amenable to our method.

In Figures \ref{fig:DW_dispersion} and \ref{fig:DW_missingatoms} we present results of numerical computations of spectral data of the Hamiltonian \eqref{eq:supercell_edge_H_dw} for the case
\begin{equation} \label{eq:interesting_case}
	V^{\sigma,\pm \infty} = 0, \sigma \in \{A,B\}, \text{ and } t_0^{-\infty} = t_1^{-\infty}, \text{ } \frac{ |t_2^{-\infty}| }{ |t_1^{-\infty}| } > 1, \text{  and  } t_0^\infty = t_1^\infty, \text{ } \frac{ |t_2^\infty| }{ |t_1^\infty| } < 1.
\end{equation}
This case, studied for example in \cite{2000Chamon}, is of particular interest to us because hard truncation yields inaccurate results in this case (see Lee-Thorp \cite{2016Lee-Thorp}, in particular Figure 26.7). In Figure \ref{fig:DW_dispersion} we demonstrate this by comparing the edge dispersion curves computed with hard truncation and the Green's function method. In Figure \ref{fig:DW_missingatoms} we study a bound state of \eqref{eq:supercell_edge_H_dw} in the presence of defects. 

%\subsection{Results: zig-zag domain wall edge in a dimerized honeycomb structure} \label{sec:dw_results}

\begin{figure}
\centering
\begin{subfigure}[t]{.45\textwidth}
\includegraphics[width=\textwidth]{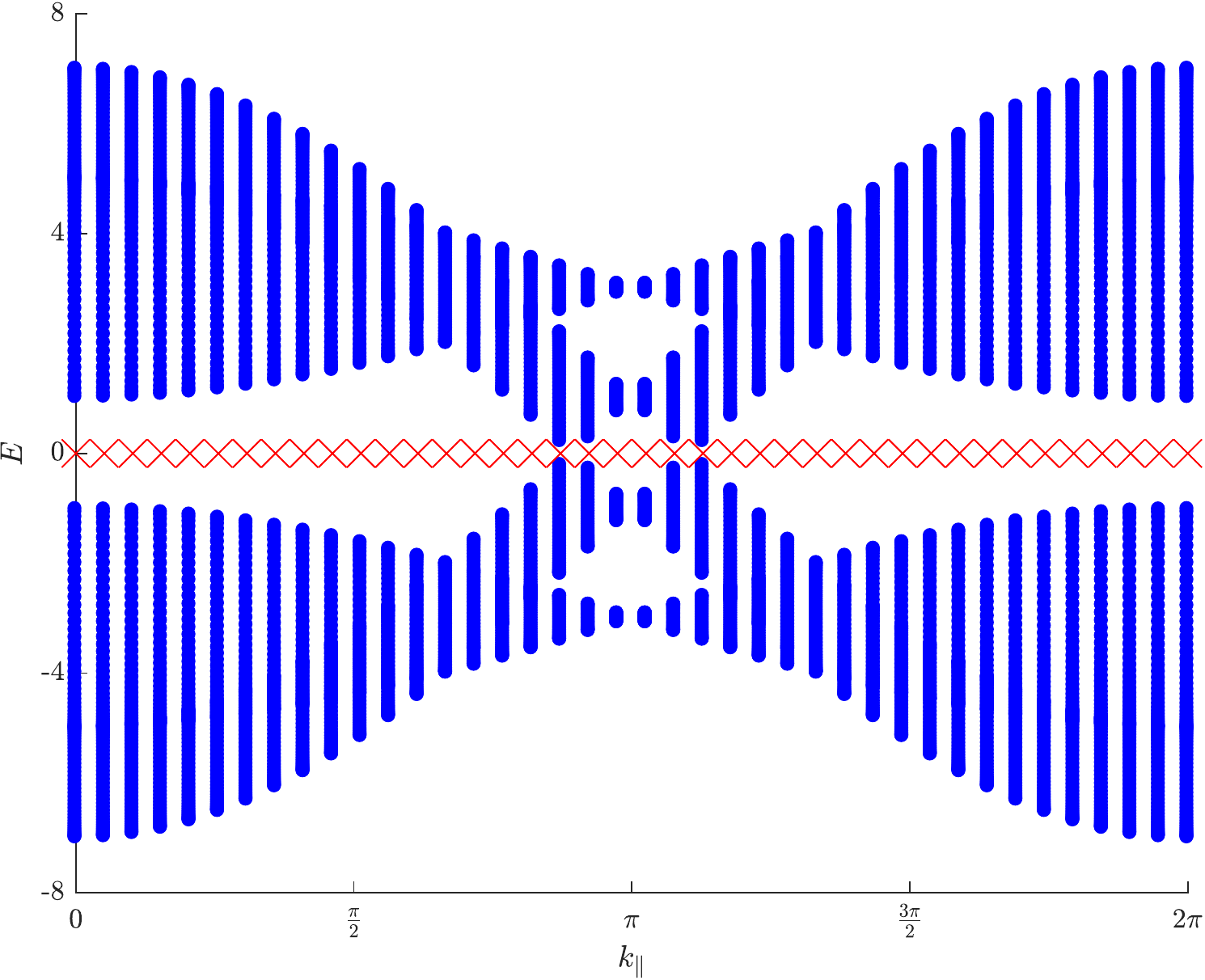}
\caption{}
\label{fig:HT_dispersion}
\end{subfigure} \quad %
\begin{subfigure}[t]{.45\textwidth}
\includegraphics[width=\textwidth]{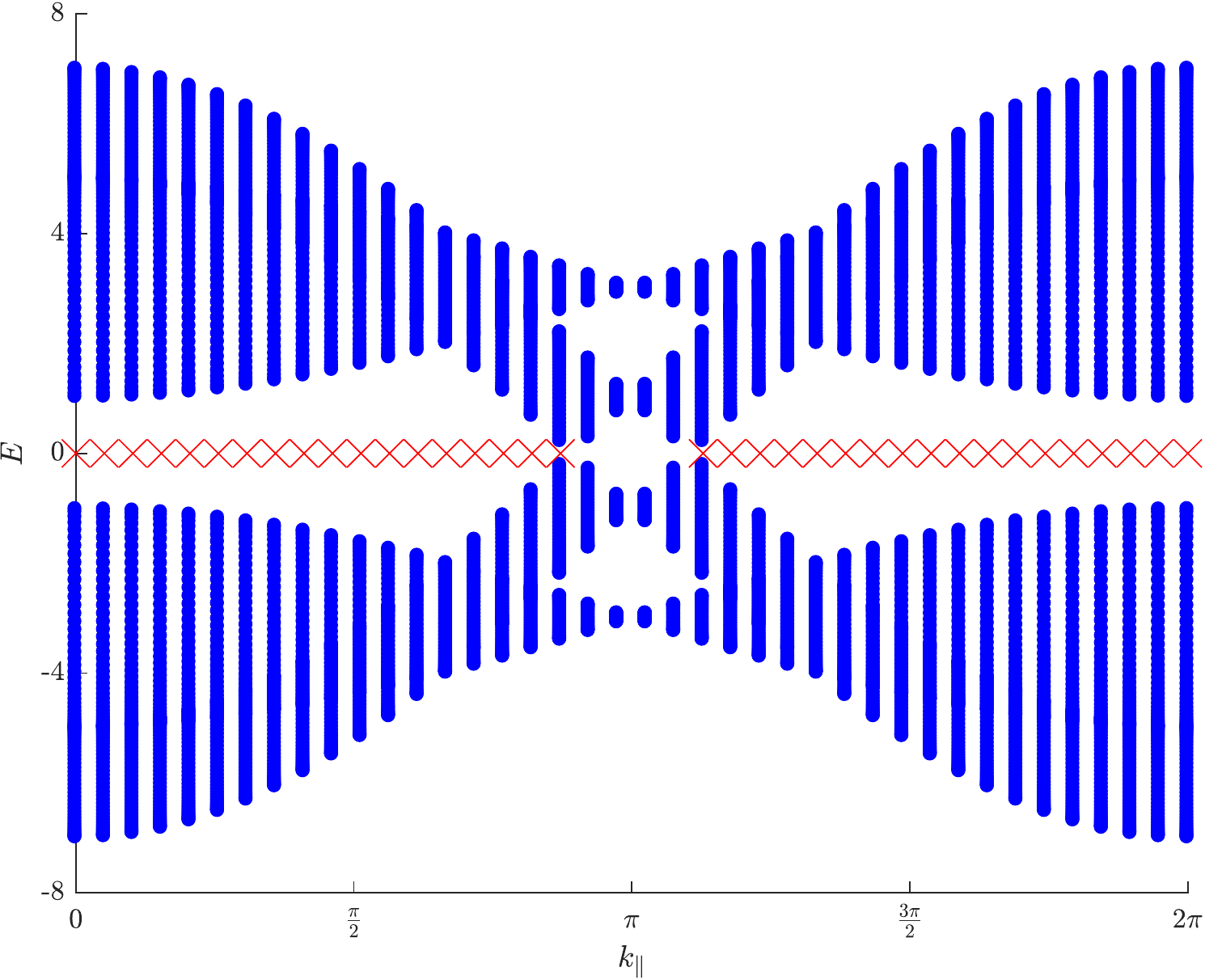}
\caption{}
\label{fig:GF_dispersion}
\end{subfigure}
\caption{({a}) Dispersion curves $E(k_\parallel)$ of the edge Hamiltonian \eqref{eq:supercell_edge_H_dw} computed with the hard truncation method, and ({b}) dispersion curves of the edge Hamiltonian \eqref{eq:supercell_edge_H_dw} computed using the Green's function method. Zero eigenvalues are plotted with red $\times$s, while all other eigenvalues are plotted in blue. For $k_\parallel$ in a neighborhood of $\pi$, hard truncation of the Hamiltonian yields spurious zero modes which are eliminated when the proper boundary conditions are applied.}
\label{fig:DW_dispersion}
\end{figure}

\begin{figure}
\centering
\begin{subfigure}[t]{.49\textwidth}
\includegraphics[trim={5cm 8cm 4.25cm 7.5cm}, clip, width=\textwidth]{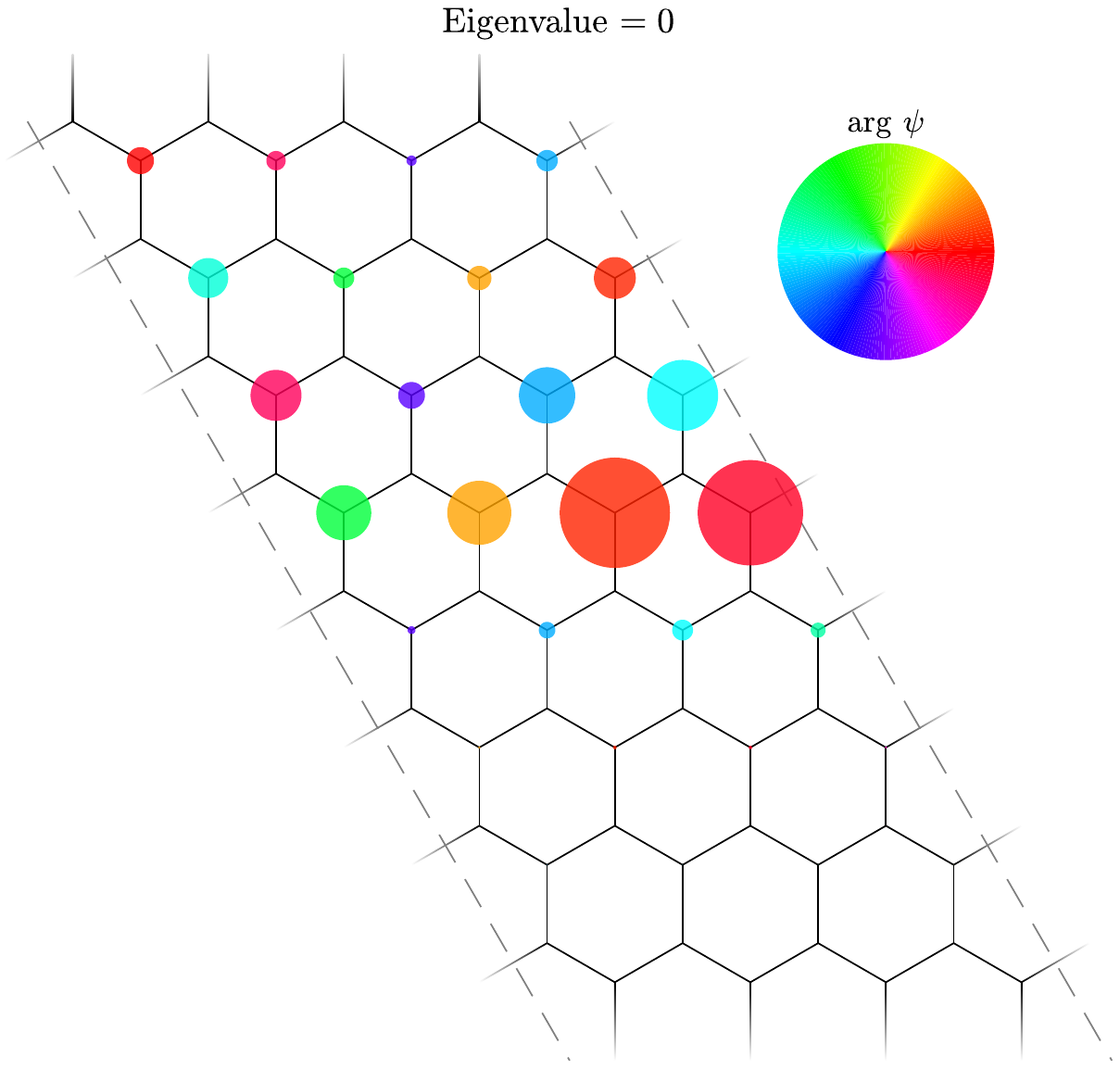}
\caption{No defects}
\label{fig:DW_SC_1}
\end{subfigure} %\quad %
\begin{subfigure}[t]{.49\textwidth}
\includegraphics[trim={5cm 8cm 4.25cm 7.5cm}, clip, width=\textwidth]{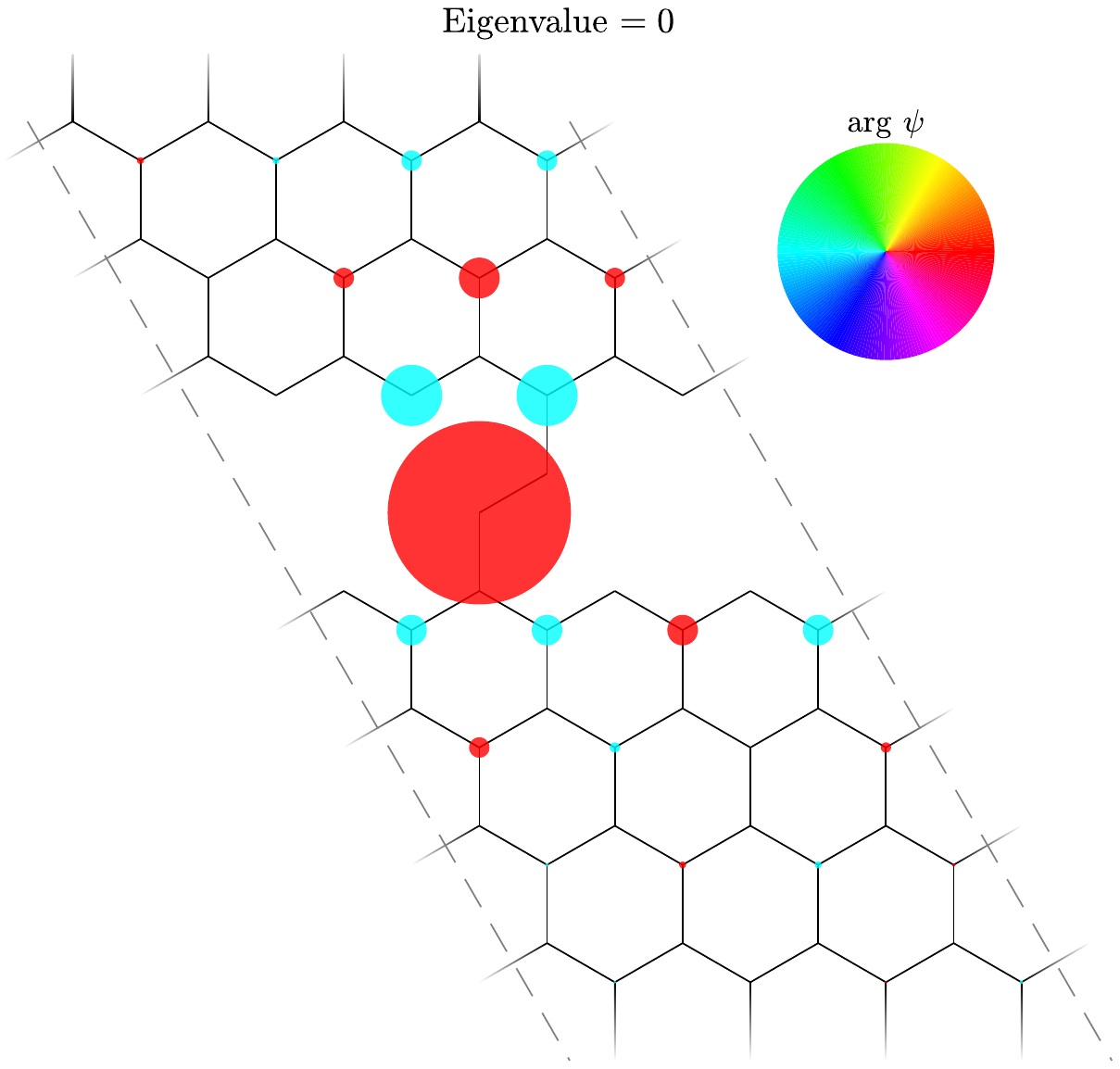}
\caption{Atoms removed}
\label{fig:DW_SC_2}
\end{subfigure}
%\includegraphics[width=\textwidth]{figures/DW_supercell_edgestate_2}
%\begin{subfigure}[t]{.45\textwidth}
%\includegraphics[width=\textwidth]{figures/DW_supercell_edgestate_3}
%\caption{}
%\label{fig:DW_SC_3}
%\end{subfigure} \quad %
%\begin{subfigure}[t]{.45\textwidth}
%\includegraphics[width=\textwidth]{figures/DW_supercell_edgestate_4}
%\caption{}
%\label{fig:DW_SC_4}
%\end{subfigure}
\caption{({a}) Plot of one of the four-fold degenerate zero eigenvalue bound states of the Hamiltonian \eqref{eq:supercell_edge_H_dw} when $k_\parallel = \frac{\pi}{4}$ when the bulk structures far from the domain wall edge are defined by \eqref{eq:interesting_case} without disorder. The absolute value $|\psi|$ and argument $\arg \psi$ of the wavefunction at each point on the lattice is represented by the radius and color, respectively, of the circle at each point. ({b}) An edge state of the Hamiltonian \eqref{eq:supercell_edge_H_dw} when $k_\parallel = \frac{ \pi }{ 4 }$ under Assumption \eqref{eq:interesting_case} with atomic vacancies at the edge modeled by large onsite potentials. We observe that in this case the Hamiltonian has a single non-degenerate bound state with eigenvalue $0$.} 
\label{fig:DW_missingatoms}
\end{figure}

\subsection{Honeycomb structure continuum edge Hamiltonian} \label{sec:continuum} 
%\jl{I suggest change the terminology "discretized continuum" (which is really weird)} \aw{OK} 

\begin{figure}
\centering
\begin{subfigure}{.5\textwidth}
  \centering
  \includegraphics[trim = 3in 0in 3in 0in, clip, scale=.25]{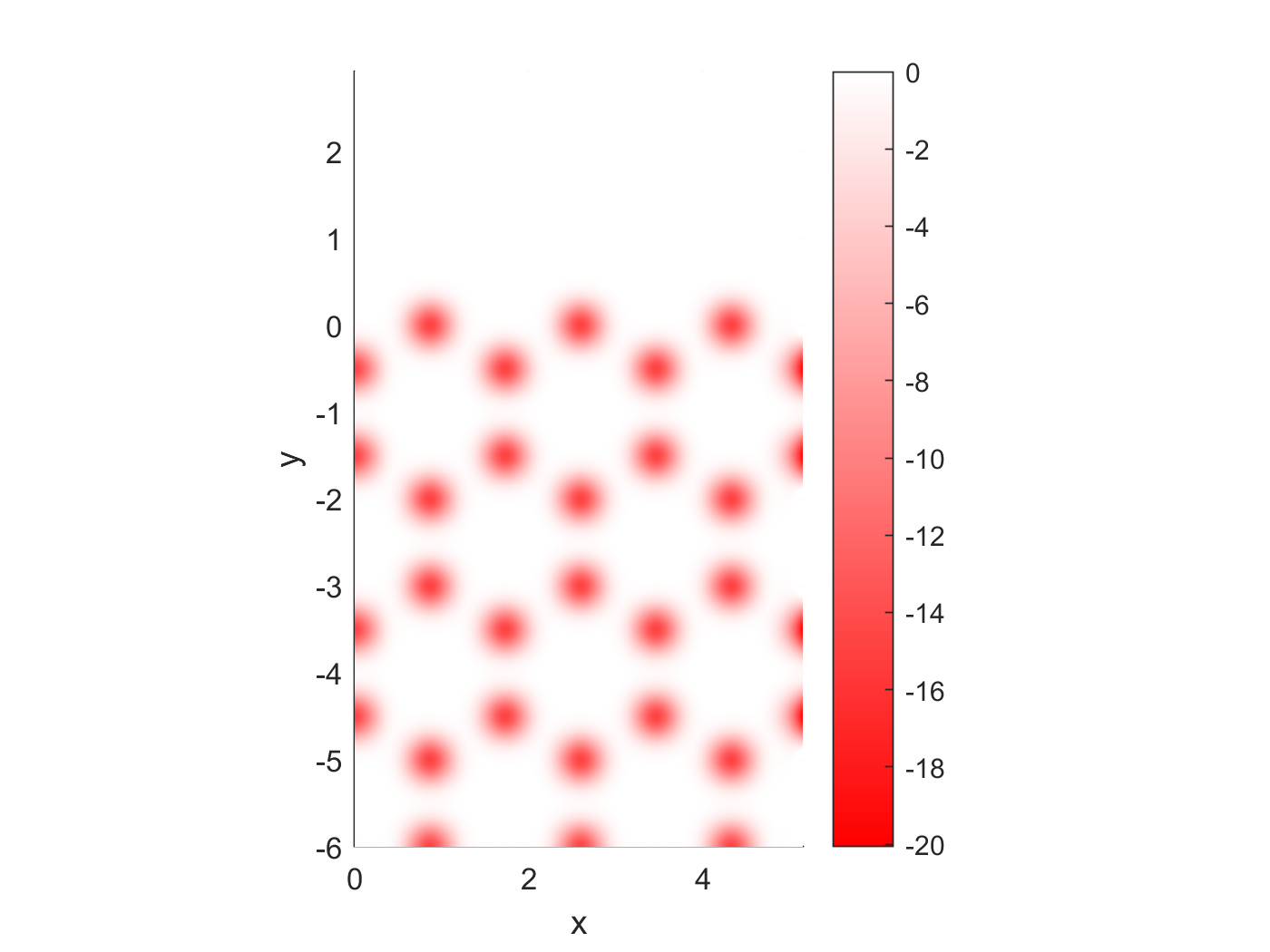}
  \caption{Potential, no defects.}
  \label{fig:cont_ext_nodefects}
\end{subfigure}%
\begin{subfigure}{.5\textwidth}
  \centering
  \includegraphics[trim = 3.5in 0in 3.5in 0in, clip, scale=.25]{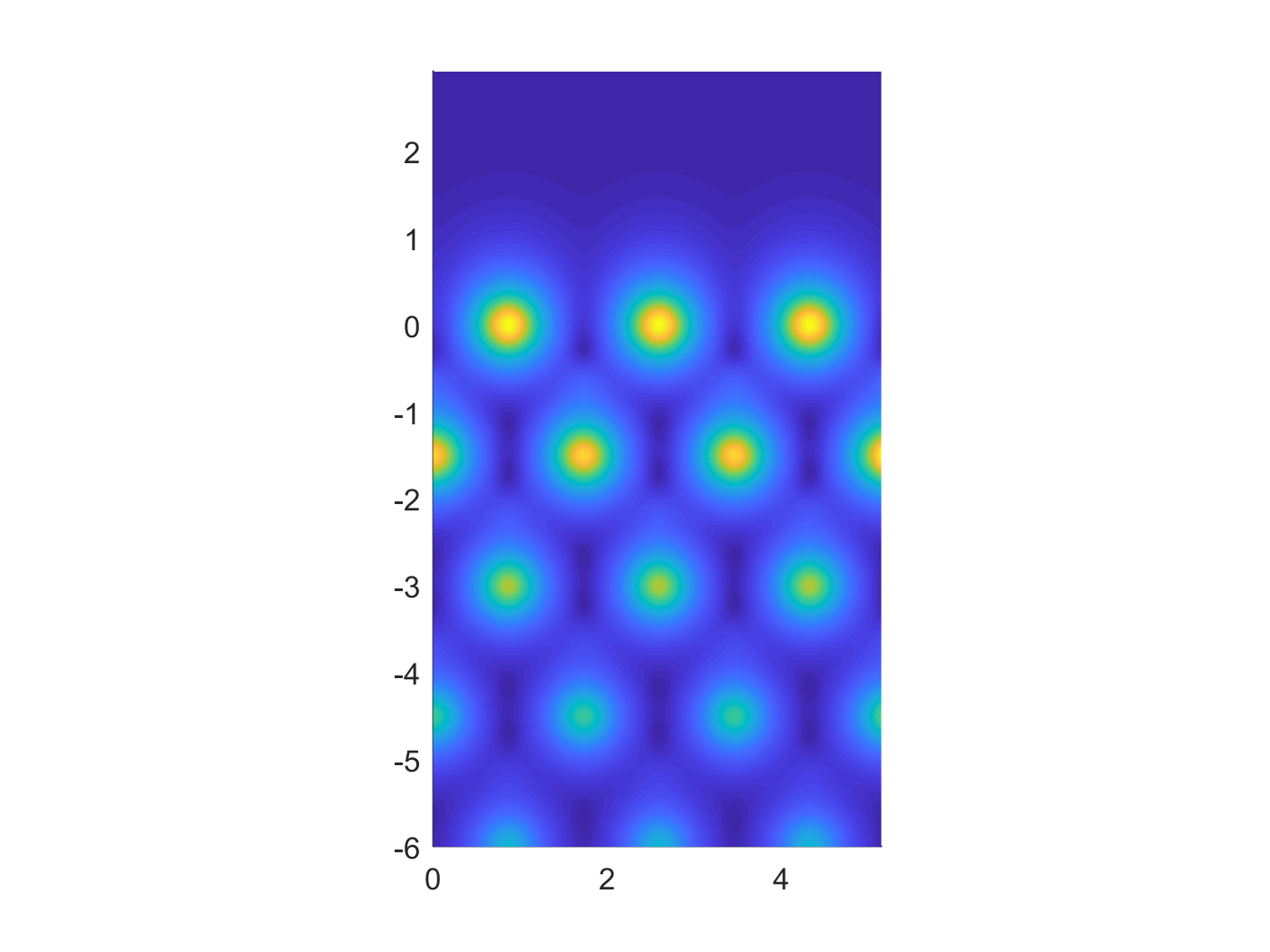}
  \caption{Edge state, no defects.}
  \label{fig:cont_wf_nodefects}
\end{subfigure}
\begin{subfigure}{.5\textwidth}
  \centering
  \includegraphics[trim = 3in 0in 3in 0in, clip, scale=.25]{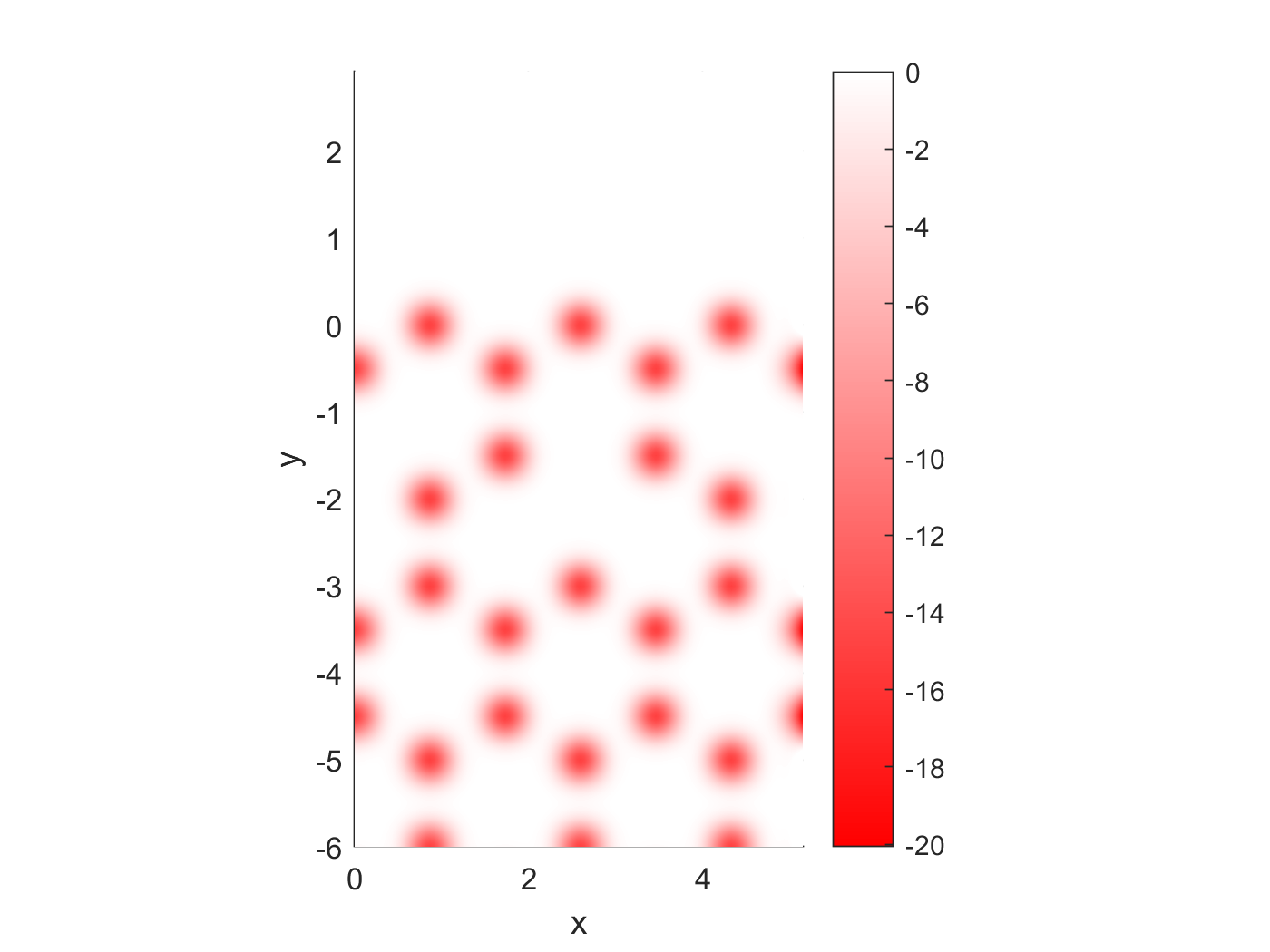}
  \caption{Potential, 2 atoms missing.}
  \label{fig:cont_ext_defects}
\end{subfigure}%
\begin{subfigure}{.5\textwidth}
  \centering
  \includegraphics[trim = 3.5in 0in 3.5in 0in, clip, scale=.25]{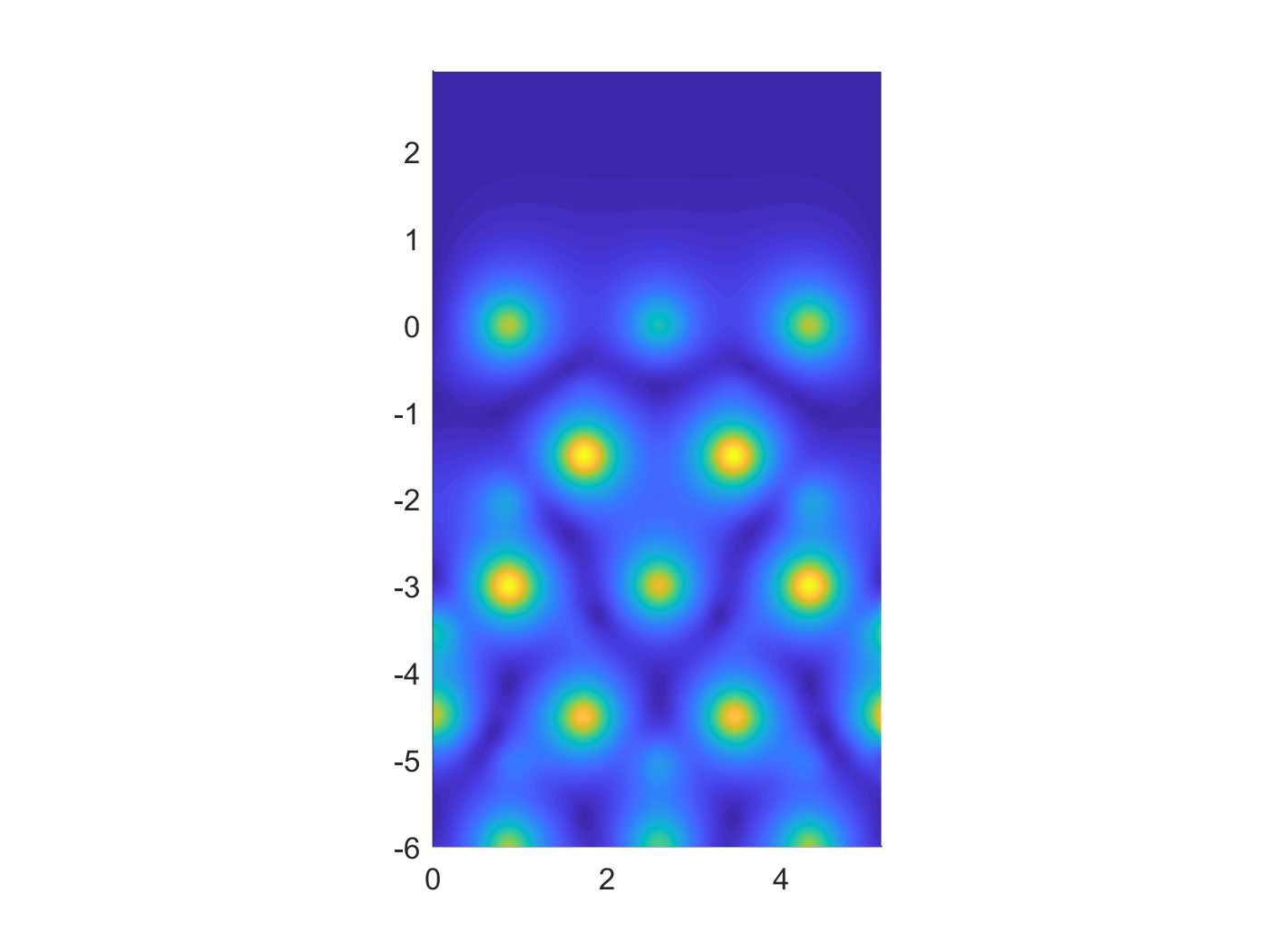}
  \caption{Edge state, 2 atoms missing.}
  \label{fig:cont_wf_defects}
\end{subfigure}
\caption{In figures (a) and (b), we plot the potential and corresponding edge state of \eqref{eq:hcedge} with $k_\parallel = \frac{ 3.8 \pi }{ 3 }$ when the edge does not have defects.  Note that we choose co-ordinates such that the edge is at $y = 0$.  To generate figures (c) and (d), we have removed two atoms as can be seen in the plot of the external potential (c). An edge state with $k_\parallel = \frac{3.8 \pi}{3}$ is also plotted. The edge state plotted in (b) has eigenvalue -1.382, while the edge state plotted in (d) has eigenvalue -1.411.}
\label{fig:cts_figure}
\end{figure}

\begin{figure}
    \centering
    \includegraphics[scale = .3]{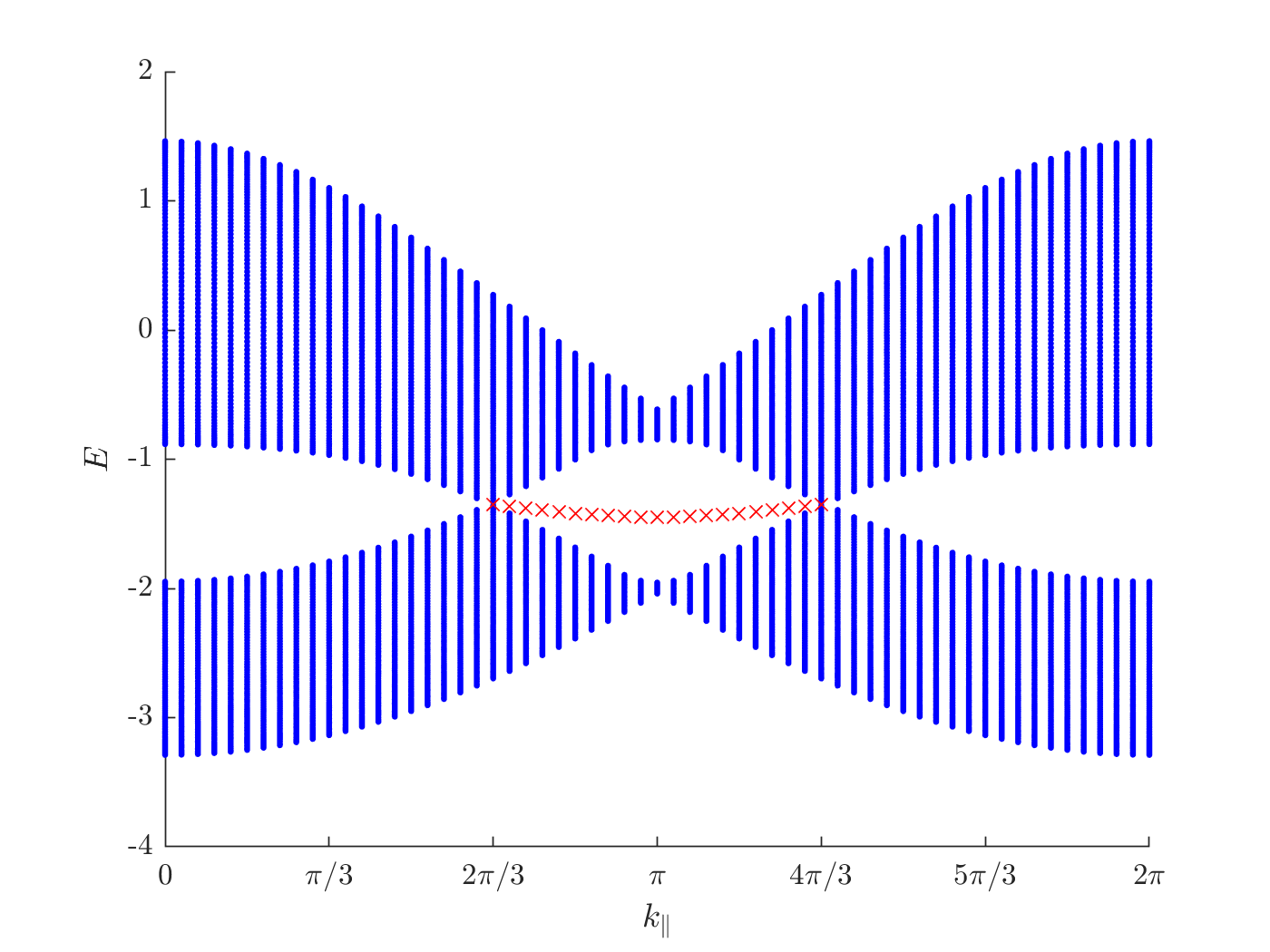}
    \caption{Dispersion curves $E(k_\parallel)$ of the edge Hamiltonian \eqref{eq:hcedge} without defects. Essential spectrum (computed via two-dimensional Bloch theory) is shown in blue while edge states computed via the Green's function method are marked with red $\times$s. Note that the curve of edge state eigenvalues is not flat, in contrast to the prediction of the analogous tight-binding model.}
    \label{fig:cont_dispersion_curve}
\end{figure}

Our method can also be used to compute spectral data of continuum Schr\"odinger edge models discretized by a finite difference scheme. 
Following \cite{2018FeffermanWeinstein}, we consider the eigenvalue problem $\Hcedge \psi = E \psi$ for the Schr\"odinger operator
\begin{equation} \label{eq:hcedge}
    \Hcedge := - \Delta + V({x})
\end{equation}
acting on $L^2(\mathbb{R}^2)$, where the real potential function $V(x)$ models a ``half-plane'' honeycomb structure. More precisely, introduce the lattice vectors
\begin{equation}
    v_1 = \frac{1}{2}\begin{bmatrix} 3 \\ \sqrt{3} \end{bmatrix}, \quad v_2 = \frac{1}{2} \begin{bmatrix} 3 \\ - \sqrt{3} \end{bmatrix}.
\end{equation}
Then we assume
\begin{equation} \label{eq:sum_V}
    V({x}) = \sum_{{v} \in \Lambda_+} V_0({x} - {v})
\end{equation}
where $\Lambda_+$ is the ``half-plane'' honeycomb lattice $\Lambda_+ = \left\{ m v_1 + n v_2 : m \in \mathbb{Z}, n \in \mathbb{N} \right\}$. For simplicity we take the potential at each lattice vertex $V_0({x})$ %\jl{why switch to $z$ here?} \aw{good point, fixed}
to be a Gaussian well (see Figure \ref{fig:cts_figure}). Such a Hamiltonian models electronic states at a ``zig-zag'' edge of a graphene-like structure.

We model atomic vacancy defects at the edge by omitting terms in the sum \eqref{eq:sum_V}. Under assumptions analogous to Assumptions \ref{as:eventually_nodefects_dw} and \ref{as:edge_periodicity_dw} we obtain a Hamiltonian which is periodic under translation by $N v_1$ (parallel to the edge) for sufficiently large $N > 1$ and eventually periodic (see Remark \ref{rem:not_per}) with respect to translation by $v_2$ for $x$ sufficiently far from the edge, both into the bulk of the material and into free space in the opposite direction. Imposing Bloch periodicity in the edge direction $\Phi(x + N v_1;k_\parallel) = e^{i N k_\parallel} \Phi(x;k_\parallel)$, we derive a family of eigenvalue problems on $L^2([0,N] \times \mathbb{R})$ %\jl{$[0, N]$ or $[1, N]$?} \aw{you're right, it should start at $0$, fixed}
parameterized by $k_\parallel$. Discretizing any one of these eigenvalue problems by a five-point stencil we obtain an infinite eventually periodic Hamiltonian in the sense of Definition \ref{def:ev_per_2} whose bound states can be computed using the Green's function method. Our results are shown in Figures \ref{fig:cts_figure} and \ref{fig:cont_dispersion_curve}.

\begin{remark} \label{rem:not_per}
In fact the Hamiltonian cannot be exactly eventually periodic even in the absence of defects because of the exponential tail of the potential at each vertex. Since the error accumulated by assuming that it is in fact eventually periodic is exponentially small (indeed can be made arbitrarily small by truncating further away from any defects) we ignore this issue. 
\end{remark}

\section{Conclusion} In this work we have introduced a numerical method for computing spectral data of a class of discrete Hamiltonians which model edge states: electronic states localized at edges of one and two-dimensional materials. Such states have generated considerable interest because of their potential as robust wave-guides. Our method is exact up to rounding error for discrete models and exact up to discretization error for continuum models and is hence a significant improvement on the na\"ive ``hard truncation'' method which may yield spurious states localized at the truncation rather than the edge of the structure. We have verified the effectiveness of our method through studies of the robustness of edge states to defects in various one and two-dimensional models.

Much remains to be explored related with edge states and their stability. For instance, it should be possible using the ideas presented in this paper to realize a scheme for studying the propagation in time of wave-packets propagating along edges which doesn't require ``hard truncation'' into the bulk. Another interesting direction would be to adapt our method to compute resonances resulting from defects (localized in both directions as well as along one-dimensional ``edges'') in periodic media. Locating such resonances (specifically, computing their imaginary part) would give information about the life-times of approximate bound states created by such defects. Finally, our methods can already be directly applied to the computation of ``surface'' states of three dimensional materials in the presence of defects under analogous assumptions to Assumptions \ref{as:eventually_nodefects} and \ref{as:edge_periodicity}.

\begin{appendices}

\section{Proof of existence of boundary conditions for arbitrary finite range eventually periodic Hamiltonians} \label{sec:method_in_general}

In this section we discuss a more general framework for proving the existence of appropriate boundary conditions (i.e. those which ensure that the Green's function we compute maps $l^2(\mathbb{N};\mathbb{C}^N)$ to itself) for computing the solution of \eqref{eq:G_generall}.

The only assumptions we need on the Hamiltonian $H \in \mathbb{C}^{\mathbb{N} \times \mathbb{N}}$ are made explicit next.
\begin{assumption}[$H$ is banded]
\label{a:generalH_banded}
There exists a positive integer $B$ such that $H_{m,n} = 0$ if $|m-n| > B$.
\end{assumption}
\begin{assumption}[Periodicity of the bulk]
\label{a:generalH_eventually_periodic}
There exists a period $p$ and a non-negative integer $M_+$ such that for all $m \geq M_+$, $H_{m,n} = H_{m+p,n+p}$.
\end{assumption}
Note that we do not even require that $H$ is Hermitian; our method works for non-Hermitian operators as well.  However, if $H$ is not Hermitian, then Theorem \ref{th:Riesz} is no longer guaranteed.  Unfortunately, the conclusion of Theorem \ref{th:Riesz} does not hold for general operators, so one would have to be more careful in justifying that step in the non-Hermitian case.

Let us define some notation.  Under Assumptions \ref{a:generalH_banded} and \ref{a:generalH_eventually_periodic}, the matrix $z-H$ can be split up as shown next.
\begin{equation}
z-H = \left[\begin{array}{c}
(z-H)_\fin \\\hline
(z-H)_\per
\end{array}\right] = \left[\begin{array}{p{.3cm}p{.3cm}p{.3cm}p{.3cm}p{.3cm}p{.3cm}p{.3cm}p{.4cm}}
& & & \multicolumn{1}{c|}{} & & & & \\
\multicolumn{4}{c|}{(z-H)_\text{defect}} &  \\\cline{5-5}
\multicolumn{4}{c|}{} & \multicolumn{1}{c|}{\mathcal{P}}\\\hline
 & & & \multicolumn{3}{|c|}{P} \\\cline{4-7}
 & & & & \multicolumn{3}{|c|}{P} \\\cline{5-8}
 & & & & & \multicolumn{3}{|c|}{P} \\\cline{6-8}
 & & & & & & & $\ddots$
\end{array}\right]. \label{zH_block_structure}
\end{equation}
We call $(z-H)_\per$ the periodic-part of $z-H$; it consists of the matrix $P$ (representing one period) periodically repeating off to infinity.  We call $(z-H)_\text{fin}$ the finite-part; it consists of the defect-part $(z-H)_\text{defect}$ (which is where defects are allowed) and the last portion of a $P$ matrix, which we have denoted $\mathcal{P}$.  We assume that $P$ has no ``extra'' columns of zeros on either the left or right end: It has a nonzero entry in its leftmost and rightmost column.  We now introduce some more notation by defining the sizes of these matrices:
\begin{align*}
(z-H)_\text{fin} &\in \mathbb{C}^{N_d \times \mathbb{N}}, & (z-H)_\per &\in \mathbb{C}^{\mathbb{N} \times \mathbb{N}}, & (z-H)_\text{defect} &\in \mathbb{C}^{N_d \times N_d}, \\
P &\in \mathbb{C}^{p \times w}, & \mathcal{P} &\in \mathbb{C}^{p \times b_+}.
\end{align*}
We finally define one more piece of notation: The first $P$ overlaps with $b_-$ columns of $(z-H)_\text{defect}$.  That is, the leftmost nonzero entry of $(z-H)_\per$ occurs in column $N_d - b_- + 1$.

Now that we have defined notation, let's look at the problem we actually want to solve.  We wish to compute the upper-left $N_d \times N_d$ block of $G(z)$; that is, we wish to compute the first $N_d$ entries of the solution to 
\begin{equation}
(z-H)g_j = e_j
\end{equation}
for $j = 1, 2, ..., N_d$.  First, we remark that this solution is unique in $\ell^2(\mathbb{N})$.

\begin{lemma}
\label{lem:uniqueness_of_g}
If $z$ is not in the spectrum of $H$, then the linear system $(z-H) g_j = e_j$ has a unique solution in $\ell^2(\mathbb{N})$.
\end{lemma}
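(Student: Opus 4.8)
The plan is to reduce the statement to the definition of the spectrum of a bounded operator, after first verifying that $H$ genuinely is bounded. First I would observe that Assumptions \ref{a:generalH_banded} and \ref{a:generalH_eventually_periodic} together force the entries of $H$ to be uniformly bounded: by bandedness every row and every column of $H$ has at most $2B+1$ nonzero entries, and by eventual periodicity there are only finitely many distinct nonzero entries in $H$ (those of the finite defect part $(z-H)_\fin$ together with those of the single repeating block $P$). A Schur-test estimate then shows that $H$ defines a bounded linear operator on $\ell^2(\mathbb{N})$, so that $\sigma(H)$ is the usual spectrum of a bounded operator and the phrase ``$z$ is not in the spectrum'' means precisely that $z$ lies in the resolvent set.

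Next I would simply unwind that definition. By definition of the resolvent set, $z \notin \sigma(H)$ means that $z - H \colon \ell^2(\mathbb{N}) \to \ell^2(\mathbb{N})$ is a bijection with bounded inverse $(z-H)^{-1}$. For existence, note that the right-hand side $e_j$ is a finitely supported standard basis vector and hence lies in $\ell^2(\mathbb{N})$; therefore $g_j := (z-H)^{-1} e_j$ is a well-defined element of $\ell^2(\mathbb{N})$ satisfying $(z-H) g_j = e_j$. For uniqueness, suppose $g$ and $\tilde{g}$ are both solutions in $\ell^2(\mathbb{N})$; then $(z-H)(g - \tilde{g}) = 0$, and injectivity of $z - H$ (which holds because it is invertible) forces $g = \tilde{g}$.

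Since the argument only uses that $z-H$ is a bounded bijection of $\ell^2(\mathbb{N})$, it applies verbatim even when $H$ is non-Hermitian, consistent with the remark preceding the lemma that the method does not require $H$ to be Hermitian. I do not expect any genuine obstacle here: the content of the lemma is essentially the definition of the resolvent, and the only point requiring a (routine) verification is the boundedness of $H$, which guarantees that the standard bounded-operator notion of spectrum is the one in play. The lemma is worth recording separately only because later steps in the construction will repeatedly invoke both the existence and the uniqueness of the $\ell^2$ columns of the Green's function.
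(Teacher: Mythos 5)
Your proposal is correct and follows essentially the same route as the paper: the paper's proof likewise observes that Assumptions \ref{a:generalH_banded} and \ref{a:generalH_eventually_periodic} make $H$ a bounded operator on $\ell^2(\mathbb{N})$ and then concludes directly from the definition of the resolvent set. You have merely filled in the routine details (uniform boundedness of entries, Schur test, existence via $(z-H)^{-1}e_j$, uniqueness via injectivity) that the paper leaves implicit.
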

\begin{proof}
Under Assumptions \ref{a:generalH_banded} and \ref{a:generalH_eventually_periodic}, it is straightforward to show that $H$ is a bounded operator on $\ell^2(\mathbb{N})$.  The conclusion then follows.
\end{proof}

Our goal is to compute the finite-part of $g_j$.  In order to do this, there are three conditions $g_j$ must to satisfy:
\begin{enumerate}
\item $g_j$ must satisfy all the finite-part equations: $(z-H)_\fin g_j = e_j$.
\item $g_j$ must satisfy all the periodic-part equations: $(z-H)_\per g_j = 0$.
\item $g_j$ must be in $\ell^2$.
\end{enumerate}
By Lemma \ref{lem:uniqueness_of_g}, $g_j$ is the \emph{only} vector which satisfies all three of these conditions.

We can reduce these three conditions down to just two by combining Conditions 2 and 3: We see that $g_j$ satisfies both conditions if and only if $g_j$ is in the nullspace of $(z-H)_\per$, considered as an operator on $\ell^2(\mathbb{N})$.  This leaves us with just two requirements:
\begin{enumerate}[1'.]
\item $g_j$ must satisfy all the finite-part equations: $(z-H)_\fin g_j = e_j$.
\item $g_j$ must be in the nullspace of $(z-H)_\per$, considered as an operator on $\ell^2(\mathbb{N})$.  
\end{enumerate}
Note that these requirements are again both necessary and sufficient: A vector satisfies these two conditions if and only if it is $g_j$.

The rest of this section is organized as follows.  First, we characterize the nullspace of $(z-H)_\per$, considered as an operator on $\ell^2(\mathbb{N})$.  Once we have this nullspace, we show how to construct the boundary conditions.  Lastly, we prove that this procedure for incorporating the boundary conditions always results in a system whose unique solution is the finite-part of $g_j$.

To compute the nullspace of $(z-H)_\per$, we proceed in a few steps.  First, Lemma \ref{lem:eigenpairs_of_S_p} characterizes the eigenspaces of the operator that shifts a vector by one period, $S_p: [\psi_1, \psi_2, ... ]^\trans \mapsto [\psi_{p+1}, \psi_{p+2}, ...]^\trans$.  Second, using the fact that $(z-H)_\per$ and $S_p$ commute, we show in Lemma~\ref{lem:nullspace_zH_per} that the nullspace of $(z-H)_\per$ is spanned by a finite set of (generalized) eigenvectors of $S_p$, which fall into two particular types.  (Note that the standard ``commuting operators have a common set of eigenvectors'' theorem does not apply here because neither $(z-H)_\per$ nor $S_p$ is known \emph{a priori} to be diagonalizable; neither operator is normal.)  Using these particular forms for the basis vectors of $\nul(z-H)_\per$, we show how a polynomial eigenvalue problem can be used to actually compute such a basis.
\begin{lemma}
\label{lem:eigenpairs_of_S_p}
\emph{(Characterization of the eigenspaces of $S_p$, as an operator on $\ell^2(\mathbb{N})$)}\\
Let $\eta \in (-1,0) \cup (0, 1)$ and $\xi \in \mathbb{C}^p$, $\xi \ne 0$.  Then, 
\begin{equation}
\psi = [\begin{array}{cccc}
\xi, & \eta\xi, & \eta^2\xi, & ...
\end{array}]^\trans, \label{evecForm}
\end{equation}
is an eigenvector of $S_p$ corresponding to eigenvalue $\eta$.
This characterizes the nonzero eigenspaces.  The generalized eigenspace of the zero-eigenvalue is the space of vectors in $\mathbb{C}^\mathbb{N}$ with only a finite number of nonzero entries.
\end{lemma}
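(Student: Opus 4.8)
The plan is to establish the two assertions of the lemma in turn: the characterization of the nonzero eigenspaces, and the identification of the generalized eigenspace at $\eta = 0$. Throughout I would regard a vector $\psi \in \ell^2(\mathbb{N})$ as a sequence of $p$-blocks, $\psi = [\Psi_1, \Psi_2, \dots]^\trans$ with $\Psi_k \in \mathbb{C}^p$, so that $S_p$ is simply the block left-shift $[\Psi_1, \Psi_2, \dots]^\trans \mapsto [\Psi_2, \Psi_3, \dots]^\trans$.

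For the nonzero eigenspaces I would argue both inclusions. The easy direction is direct substitution: if $\psi = [\xi, \eta\xi, \eta^2\xi, \dots]^\trans$ then $S_p\psi = \eta\psi$, so $\psi$ is a formal eigenvector with eigenvalue $\eta$, and the only genuine content is the $\ell^2$ estimate
\[
\|\psi\|^2 = \|\xi\|^2 \sum_{k \geq 0} |\eta|^{2k},
\]
which is finite precisely when $|\eta| < 1$; together with $\xi \neq 0$ this places $\psi$ in $\ell^2(\mathbb{N}) \setminus \{0\}$. For the converse, suppose $\psi \in \ell^2(\mathbb{N})$ is a nonzero eigenvector with $S_p\psi = \eta\psi$ and $\eta \neq 0$. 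Reading the eigenrelation blockwise yields the recursion $\Psi_{k+1} = \eta\Psi_k$, hence $\Psi_k = \eta^{k-1}\Psi_1$ for all $k$; setting $\xi := \Psi_1$ reproduces exactly the stated form. Here $\xi \neq 0$ (otherwise $\psi = 0$), and square-summability again forces $|\eta| < 1$. This shows that for each admissible $\eta$ the eigenspace is exactly $\{[\xi, \eta\xi, \dots]^\trans : \xi \in \mathbb{C}^p\}$ and that no eigenvalue of modulus $\geq 1$ occurs, which is the claimed characterization.

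For the zero eigenvalue I would compute the kernels of the iterates directly. Since $S_p^n$ deletes the first $np$ scalar entries, $\nul S_p^n$ consists precisely of the vectors supported on their first $np$ entries. By definition the generalized eigenspace at $0$ is $\bigcup_{n \geq 1} \nul S_p^n$, and taking this union yields exactly the vectors in $\mathbb{C}^{\mathbb{N}}$ with only finitely many nonzero entries; conversely any such finitely supported vector is annihilated by $S_p^n$ as soon as $np$ exceeds the largest index in its support. This gives the second assertion.

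The point to watch is not a deep obstacle but the infinite-dimensional, non-normal character of $S_p$: the spectral-theorem intuition is unavailable, so every statement must be read off by hand from the block recursion together with the $\ell^2$ constraint. In particular, one should observe that the lemma asserts a statement only about the \emph{ordinary} eigenspaces at nonzero $\eta$ — this restriction is essential, since Jordan-type generalized eigenvectors do in fact exist there (a polynomial in $k$ times $\eta^{k}$ remains square-summable when $|\eta| < 1$) — so no analysis of generalized eigenvectors is required except at $\eta = 0$, where it is settled by the kernel computation above.
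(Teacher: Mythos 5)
Your proposal is correct and takes essentially the same route as the paper, which simply asserts that the nonzero claim is straightforward and identifies the zero generalized eigenspace via the condition $S_p^k\psi = 0$; you have merely filled in the block recursion $\Psi_{k+1} = \eta\Psi_k$ and the $\ell^2$ bookkeeping that the paper leaves implicit. Your closing observation that genuine Jordan-type generalized eigenvectors exist at nonzero $\eta$ (so the lemma deliberately addresses only ordinary eigenspaces there) is a correct and worthwhile clarification, but it does not change the argument.
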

\begin{proof}
The nonzero claim is straightforward.  For the zero-eigenspace, note that $\psi$ is a generalized eigenvector if and only if $S_p^k \psi = 0$ for some $k$.  The $\psi$ which satisfy this are precisely the ones with only a finite number of nonzero entries.
\end{proof}

\begin{lemma}
\label{lem:nullspace_zH_per}
Let $N_b$ be a positive integer (to be chosen later).  If $z$ is not in the spectrum of $H$, then the nullspace of $(z-H)_\per$ is spanned by a set of $N_d$ vectors which fall into one of the following categories:
\begin{enumerate}[Type 1:, leftmargin=\widthof{Type }+2\parindent]
\item[Type 1a:] Vectors that are only nonzero in the first $N_b$ entries;
\item[Type 1b:] Vectors that have a finite number of nonzero entries, but are zero in the first $N_b$ entries;
\item[Type 2:] Vectors of the form \eqref{evecForm} with $\eta \in (-1,0) \cup (0, 1)$.
\end{enumerate}
\end{lemma}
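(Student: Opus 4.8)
The plan is to exploit the commutation between $(z-H)_\per$ and the block-shift $S_p$ to reduce the claim to a finite-dimensional statement, and then to read off the three types from the generalized-eigenspace decomposition of $S_p$ restricted to the nullspace. Throughout, write $V := \nul(z-H)_\per$, regarded as a subspace of $\ell^2(\mathbb{N})$.

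First I would pin down the dimension of $V$ using only that $z - H$ is invertible (Lemma \ref{lem:uniqueness_of_g}). On the one hand, the map sending $\psi \in V$ to its first $N_d$ entries $(z-H)_\fin\psi$ is injective: if both $(z-H)_\per\psi = 0$ and $(z-H)_\fin\psi = 0$, then $(z-H)\psi = 0$, forcing $\psi = 0$; hence $\dim V \le N_d$. On the other hand, for each $j = 1,\dots,N_d$ the column $g_j := (z-H)^{-1}e_j$ lies in $\ell^2(\mathbb{N})$, satisfies $(z-H)_\per g_j = 0$ (so $g_j \in V$) and $(z-H)_\fin g_j = e_j$; since the $e_j$ span $\mathbb{C}^{N_d}$, the map above is also onto, giving $\dim V \ge N_d$. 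Thus $\dim V = N_d$, and this is precisely where the hypothesis $z \notin \sigma(H)$ enters.

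Next I would verify that $V$ is invariant under $S_p$. By Assumption \ref{a:generalH_eventually_periodic} the coefficients of the periodic part are invariant under simultaneous translation of rows and columns by $p$, so each periodic-part equation satisfied by $S_p\psi$ coincides, after this translation, with a periodic-part equation satisfied by $\psi$; concretely, the block-row-$r$ equation for $S_p\psi$ equals the block-row-$(r+1)$ equation for $\psi$, all of which hold. Hence $(z-H)_\per\psi = 0 \implies (z-H)_\per(S_p\psi) = 0$, so $S_p(V) \subseteq V$. Since $V$ is finite-dimensional and $S_p$-invariant, the restriction $S_p|_V$ admits a decomposition into generalized eigenspaces $V = V_0 \oplus \bigoplus_{\eta \ne 0} V_\eta$.

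Finally I would match each summand to a type using Lemma \ref{lem:eigenpairs_of_S_p}. The generalized $0$-eigenspace $V_0$ consists of finitely supported vectors; taking $N_b$ large enough to separate the defect-localized finite-support modes from the translation-invariant bulk ones, one obtains a basis of $V_0$ splitting into vectors supported in the first $N_b$ entries (Type 1a) and vectors vanishing there (Type 1b), using that past the defect region the recurrence is purely periodic. For $\eta \ne 0$, any eigenvector lying in $V \subset \ell^2$ must have the form \eqref{evecForm}, and $\ell^2$-membership forces $0 < |\eta| < 1$ automatically, since a modulus-one vector $[\xi, \eta\xi, \dots]^\trans$ is never square-summable; thus no separate Weyl-sequence argument is needed to exclude the unit circle. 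The delicate point, and the step I expect to be the main obstacle, is to show that $S_p|_{V_\eta}$ is semisimple for $\eta \ne 0$, so that genuine eigenvectors of the form \eqref{evecForm} already span $V_\eta$ (a nontrivial Jordan chain would produce a decaying generalized eigenvector carrying a polynomial-in-$m$ prefactor, which is not of the form \eqref{evecForm}). I would establish this by relating $S_p|_V$ to the companion/transfer structure of the periodic recurrence, which simultaneously sets up the polynomial eigenvalue problem used to compute the basis in practice; if semisimplicity fails at some $\eta$, the corresponding polynomial-type generalized eigenvectors would have to be adjoined to Type 2.
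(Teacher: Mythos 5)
Your proposal follows the same skeleton as the paper's proof: establish $\dim \nul(z-H)_\per = N_d$, observe that the nullspace is invariant under $S_p$, decompose it into generalized eigenspaces of the restriction of $S_p$, and read off the three types from Lemma \ref{lem:eigenpairs_of_S_p}. The one step you do genuinely differently is the dimension count. The paper proves a general statement (Lemma \ref{lem:dimOfNullspace}) that deleting $n$ rows from an invertible operator on $\ell^2(\mathbb{N})$ leaves a nullspace of dimension exactly $n$, via an argument with ranges of adjoints and projections; you instead show directly that $\psi \mapsto (z-H)_\fin \psi$ is a bijection from $\nul (z-H)_\per$ onto $\mathbb{C}^{N_d}$ \dash injective because $(z-H)$ is injective, surjective because the columns $g_j = (z-H)^{-1} e_j$ are sent to the standard basis. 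Your version is shorter, uses the structure at hand, and makes transparent exactly where $z \notin \sigma(H)$ enters; the paper's lemma is more general but costs more machinery. The invariance and decomposition steps are equivalent to the paper's ``partial Jordan form'' computation, and your handling of the Type 1a/1b split is at the same level of detail as the paper's (both ultimately rely on $N_b$ being chosen as the number of leading zero columns of $(z-H)_\per$, so that $e_1,\dots,e_{N_b}$ themselves lie in the nullspace).

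The step you flag as the main obstacle \dash semisimplicity of $S_p$ on the nonzero generalized eigenspaces \dash is a genuine subtlety, but be aware that the paper does not resolve it either: its proof produces a basis of \emph{(generalized)} eigenvectors of $S_p$ and then cites Lemma \ref{lem:eigenpairs_of_S_p}, which characterizes the eigenvectors for $\eta \neq 0$ and the generalized eigenvectors only for $\eta = 0$. A nontrivial Jordan chain of $S_p$ restricted to the nullspace at some $0 < |\eta| < 1$ would yield decaying solutions carrying polynomial-in-$m$ prefactors, which lie in $\ell^2(\mathbb{N})$ but are not of the form \eqref{evecForm}; the $\lambda \mapsto 1/\overline{\lambda}$ symmetry of the transfer spectrum (Lemma \ref{lem:symp}) pairs eigenvalues across the unit circle but does not forbid Jordan blocks strictly inside it. So either Type 2 should be enlarged to admit such polynomially weighted geometric vectors \dash which does not damage the downstream algorithm, since Lemma \ref{lem:transferMatrix} works with the pencil $R + \lambda S$ and can accommodate its generalized eigenvectors \dash or an additional argument excluding these chains is required. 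Your instinct to isolate this point, and your observation that $\ell^2$-membership alone forces $0 < |\eta| < 1$ (indeed for complex $\eta$, not only the real $\eta$ appearing in the statement) without any Weyl-sequence argument, are both correct; on this step you are being more careful than the source.
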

\begin{proof}
See Appendix \ref{sec:proof_of_lemma_nullspace}.
\end{proof}

Next, we show how to find such a basis for $\nul(z-H)_\per$ by solving a generalized eigenvalue problem.  First, let us define two submatrices $R$ and $S$ of $(z-H)_\per$ as follows.  $R$ and $S$ are square matrices of identical size and have side lengths that are equal to an integer multiple of $p$.  $R$ and $S$ are next to each other (with $S$ immediately to the right of $R$), and both are aligned with the top of $(z-H)_\per$.  Finally, we must choose $R$ and $S$ large enough so that they contain all the nonzero entries of $(z-H)_\per$ in the rows they cover.  Additionally, $R$ must be chosen so that it contains all the all the nonzero entries of Type 1b vectors.  With these definitions, the basis of Lemma \ref{lem:nullspace_zH_per} can be found by solving the generalized eigenvalue problem $R + \lambda S$.  In particular, the eigenvectors of $R + \lambda S$ can be extended (in the sense of the following definition) to basis vectors of $\nul(z-H)_\per$.

\begin{remark}
It is not always known a priori how large $R$ should be chosen.  But if a choice fails (i.e., you do not get enough eigenvectors of $R + \lambda S$), then you simply try again with a larger choice.
\end{remark}

\begin{definition}
Let $x$ be an eigenvector of $R + \lambda S$; that is, for some $\lambda_0$, $(R + \lambda_0 S)x = 0$.  Then, we define the \textbf{\em extension} of $x$ to $\ell^2(\mathbb{N})$ as follows.  If $\lambda_0 = 0$, then the extension is $\xi = [\begin{array}{ccc}
0_{N_b} & x & 0_{\infty}
\end{array}]^\trans$.  If $\lambda_0 \ne 0$, then we define the $N_b+1$ through $N_b + N_R$ entries to be $x$; i.e., in MATLAB notation, $\xi_{N_b+1:N_b+N_R} = x$.  The next $N_R$ entries are defined to be $\lambda_0 x$, then $\lambda_0^2 x$, and so on.  The first $N_b$ entries of $\xi$ are defined by continuing this pattern backwards.  Conversely, we define the \textbf{\em restriction} $v_\textnormal{restr}$ of a vector $v \in \ell^2(\mathbb{N})$ to be $v_\textnormal{restr} = v_{N_b+1:N_b+N_R}$.
\end{definition}

To see that it is enough to solve the generalized eigenproblem $R+\lambda S$, we first note that, since we chose the sides of $R$ to be an integer multiple of $p$ and chose them large enough, restrictions of the Type 1b and Type 2 basis vectors form solutions of $(R + \lambda S)\xi = 0$, corresponding to $|\lambda| < 1$.  Second, we show that these are the \emph{only} such solutions.

\begin{lemma}
\label{lem:transferMatrix}
There are exactly $N_d - N_b$ eigenvectors corresponding to eigenvalues of $R + \lambda S$ with $|\lambda| < 1$.  Combining extensions of these eigenvectors with the vectors $\{e_j\}_{j=1}^{N_b}$ yields a basis for $\nul(z-H)_\per$ of the form in Lemma \ref{lem:nullspace_zH_per}.
\end{lemma}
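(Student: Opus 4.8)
The plan is to realize the \emph{extension} and \emph{restriction} maps as mutually inverse bijections between the eigenvectors of the pencil $R+\lambda S$ with $|\lambda|<1$ and a fixed complement, inside $\nul(z-H)_\per$, of the Type 1a subspace, and then to read off the count from a dimension comparison. Two preliminary facts set this up. By the choice of $N_b$ the first $N_b$ columns of $(z-H)_\per$ vanish, so $e_1,\dots,e_{N_b}\in\nul(z-H)_\per$; since these span the entire space of vectors supported in the first $N_b$ coordinates, the Type 1a subspace is exactly $\spn\{e_1,\dots,e_{N_b}\}$ and has dimension $N_b$. By Lemma~\ref{lem:nullspace_zH_per}, $\dim\nul(z-H)_\per=N_d$. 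Hence the Type 1b and Type 2 vectors span a complement $W$ with $\dim W=N_d-N_b$, and the lemma reduces to showing that extension carries the eigenvectors of $R+\lambda S$ with $|\lambda|<1$ onto a basis of $W$.

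For the forward direction I would check that the extension $\xi$ of any solution of $(R+\lambda_0 S)x=0$ with $|\lambda_0|<1$ lies in $\nul(z-H)_\per\cap\ell^2(\mathbb{N})$. The key structural point is that, since $R$ and $S$ have side length an integer multiple of the period $p$ and were chosen to contain every nonzero entry of $(z-H)_\per$ in the rows they cover, the infinitely many periodic-part equations split into identical blocks of rows, each of which reads $R\,(\text{current block})+S\,(\text{next block})=0$. For the geometric pattern built by the extension this collapses, block-by-block, to the single finite relation $(R+\lambda_0 S)x=0$, so $\xi$ solves all periodic-part equations; decay, hence membership in $\ell^2(\mathbb{N})$, is immediate from $|\lambda_0|<1$. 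When $\lambda_0\neq 0$ the extension is a Type 2 vector whose period-$p$ ratio $\eta$ of Lemma~\ref{lem:eigenpairs_of_S_p} satisfies $\lambda_0=\eta^{N_R/p}$; when $\lambda_0=0$ it has finite support and is a Type 1b vector.

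For the converse I would use the observation recorded just before the lemma: the restrictions of the Type 1b and Type 2 basis vectors of Lemma~\ref{lem:nullspace_zH_per} solve $(R+\lambda S)x=0$ with $|\lambda|<1$ (taking $\lambda=0$ for Type 1b and $\lambda=\eta^{N_R/p}$ for Type 2). Because extension followed by restriction is the identity by construction, extension is injective and its image lies in $W$, which already gives at most $N_d-N_b$ eigenvectors with $|\lambda|<1$. The matching lower bound, and with it the full bijection, amounts to restriction being injective on $W$; granting this, the $N_d-N_b$ basis vectors of $W$ produce $N_d-N_b$ independent such eigenvectors, equality $N_d-N_b$ follows, and adjoining $e_1,\dots,e_{N_b}$ to the extensions yields a basis of $\nul(z-H)_\per$ of the form asserted in Lemma~\ref{lem:nullspace_zH_per}.

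The step I expect to be the main obstacle is exactly this injectivity of restriction on $W$, equivalently the claim that a decaying solution of the periodic recurrence is determined by a single length-$N_R$ block; this is where the hypotheses ``$R$ large enough'' and ``$P$ has nonzero leftmost and rightmost columns'' are essential, together with the uniqueness of the backward continuation into the first $N_b$ entries. The $\lambda_0=0$ regime is the delicate sub-case: there $S$ is typically singular, the finite-support Type 1b vectors encode the Jordan structure of the pencil at the origin (mirroring the generalized zero-eigenspace of $S_p$ in Lemma~\ref{lem:eigenpairs_of_S_p}), so one must extend the correspondence to Jordan chains and invoke the requirement that $R$ contain every nonzero entry of the Type 1b vectors to be sure none is missed. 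Finally, any contribution from $|\lambda|=1$ is excluded because it would furnish a bounded non-decaying solution and hence a Weyl sequence, contradicting $z\notin\sigma(H)$; this is the input to $\dim\nul(z-H)_\per=N_d$ in Lemma~\ref{lem:nullspace_zH_per}.
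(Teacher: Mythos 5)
Your skeleton is the right one (extension/restriction between pencil eigenvectors with $|\lambda|<1$ and a complement of the span of $\{e_j\}_{j=1}^{N_b}$, plus the dimension count $N_d-N_b$ from Lemma \ref{lem:nullspace_zH_per}), and you have correctly located the crux. But you then write ``granting this'' at exactly that crux --- the injectivity of restriction on $W$, i.e.\ the linear independence of the restrictions of the Type 1b and Type 2 basis vectors --- and the ingredients you offer in its place (uniqueness of backward continuation, ``$R$ large enough,'' Jordan chains at $\lambda=0$) do not amount to a proof. The danger is concrete: these restrictions are eigenvectors of a \emph{generalized} eigenproblem, and for a singular pencil every $\lambda\in\mathbb{C}$ is an eigenvalue and eigenspaces of distinct eigenvalues need not be independent, so a dependency among restrictions attached to different $\lambda$'s cannot a priori be ruled out. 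This is a genuine gap, not a routine verification.

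The paper closes it in two steps that are absent from your proposal. First, Lemma \ref{lem:pencilIsRegular} shows the pencil $R+\lambda S$ is \emph{regular}: a singular block $L_j$ or $L_j^\trans$ in its Kronecker canonical form (Theorem \ref{th:KroneckerCanonicalForm}) would produce a nontrivial dependency among the columns (or rows) of $z-H$, contradicting $z\notin\sigma(H)$. Second, regularity implies that eigenspaces of distinct generalized eigenvalues are independent, so any dependency among the restrictions must already hold among restrictions sharing a single eigenvalue; for a fixed eigenvalue extension is linear, so the dependency lifts to the extensions and contradicts the independence of the basis of $\nul(z-H)_\per$. Your ``at most $N_d-N_b$'' direction also leans on extension being a single injective linear map into $W$, which it is not across different eigenvalues; the paper's cleaner route is the one you sketch in your converse paragraph --- any eigenvector with $|\lambda|<1$ extends into $\nul(z-H)_\per$, hence is a combination of the restrictions of the basis vectors --- and needs no such claim. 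In short: without an argument for pencil regularity (or an equivalent substitute), the counting at the heart of the lemma is not established.
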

\begin{proof}
See Appendix \ref{sec:proof_transferMatrix}.
\end{proof}

We note that there could also be generalized eigenvectors of $R+\lambda S$.  However, we do not want these \dash they correspond to generalized zero-eigenvectors of $(z-H)_\per$.

\begin{remark}
Note that the restriction that we choose $R$ large enough to encompass all the nonzero entries of the finitely-many nonzero entry basis vectors is important; it guarantees that the Type 1b vectors will be zero-eigenvectors of $R+\lambda S$.  An example of why we might have to choose $R$ to be large is illustrated with the following example.  Suppose
\begin{equation}
P = \left[\begin{array}{cccccccc}
1 & 0 & 0 & 0 & 0 & 1 & 0 & 0 \\
0 & 1 & 0 & 0 & 0 & 0 & 1 & 0 \\
0 & 0 & 1 & 0 & 0 & 0 & 0 & 1 \\
0 & 0 & 0 & 0 & 1 & 0 & 0 & 0
\end{array}\right].
\end{equation}
Then, there is a Type 1 vector given by
\begin{equation}
[0_{N_b},1,0,0,0,0,-1,0,0,0,0,1,0,0,0,-1, 0_{\infty}]^\trans.
\end{equation}
This vector has nonzero values past the end of $\mathcal{P}$.  We note that this example can easily be generalized to make the last nonzero value go arbitrarily far past the end of $\mathcal{P}$ by choosing $P$ to be
\begin{equation}
P = \left[\begin{array}{cccccccc}
 & & \multicolumn{1}{c|}{} & 0 & 0 & \multicolumn{1}{|c}{} \\
 & I_k & \multicolumn{1}{c|}{} & \vdots & \vdots & \multicolumn{1}{|c}{} & I_k \\
 & & \multicolumn{1}{c|}{} & 0 & 0 & \multicolumn{1}{|c}{} & \\\cline{1-3}\cline{6-8}
0 & \cdots & 0 & 0 & 1 & 0 & \cdots & 0
\end{array}\right]
\end{equation}
for any positive integer $k$, where $I_k$ is the $k \times k$ identity matrix.
\end{remark}

Using Lemma \ref{lem:transferMatrix}, we can now detail how to construct the boundary conditions.  The idea is simple:  We restrict the solution to lie in the nullspace of $(z-H)_\per$.  We do this by requiring that the solution is orthogonal to the orthogonal complement of the nullspace.

First, note that since $\{e_j\}_{j=1}^{N_b} \in \nul(z-H)_\per$, there are no constraints on the first $N_b$ entries of the solution.  Therefore, we only need to enforce the condition that the truncation of the solution is orthogonal to the eigenvectors of $R + \lambda S$.  This is easily done as follows.  Let $V \in \mathbb{C}^{N_R \times (N_R - N_d + N_b)}$ be a basis for the orthogonal complement of the eigenvectors of $R+\lambda S$.  Then we can solve for $g_j$ using the system,
\begin{equation}
\left[\begin{array}{p{.3cm}p{.3cm}p{.3cm}p{.3cm}p{.3cm}p{.3cm}}
& & & \multicolumn{1}{c|}{} & & \\
\multicolumn{4}{c|}{(z-H)_\text{defect}} &  \\\cline{5-5}
\multicolumn{4}{c|}{} & \multicolumn{1}{c|}{\mathcal{P}}\\\hline
 & & & \multicolumn{3}{|c}{V^\hermi}
\end{array}\right] g_j = e_j. \label{systemForgj}
\end{equation}

Finally, we prove that $g_j$ is the unique solution of this system.  

\begin{lemma}
Using our method for constructing the boundary conditions, \eqref{systemForgj} has a unique solution.  That unique solution is the first $N_R + N_b$ entries of $g_j$.
\end{lemma}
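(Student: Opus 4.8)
The plan is to prove the two assertions separately: (i) that the first $N_R + N_b$ entries of the true $\ell^2$ solution $g_j$ actually solve \eqref{systemForgj} (consistency), and (ii) that \eqref{systemForgj} has no solution other than this one (uniqueness). Together these give both claims of the lemma. The whole argument hinges on the fact, established just after Lemma \ref{lem:uniqueness_of_g}, that $g_j$ is the \emph{unique} vector in $\ell^2(\mathbb{N})$ satisfying the necessary-and-sufficient conditions $1'$ (the finite-part equations $(z-H)_\fin g_j = e_j$) and $2'$ ($g_j \in \nul(z-H)_\per$). My strategy is to translate ``solves \eqref{systemForgj}'' into ``extends to a vector satisfying $1'$ and $2'$'' and then invoke this uniqueness.

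For consistency I would argue as follows. The top block of \eqref{systemForgj}, namely $[(z-H)_\text{defect}\mid\mathcal{P}]$, is exactly $(z-H)_\fin$ restricted to its nonzero columns; provided $R$ is taken large enough that $N_d + b_+ \le N_R + N_b$ (which the construction already requires), condition $1'$ for $g_j$ directly yields the top block. For the bottom block I would use Lemma \ref{lem:transferMatrix}: since $\nul(z-H)_\per$ is spanned by $e_1,\dots,e_{N_b}$ (supported in the first $N_b$ entries) together with the extensions of the $|\lambda|<1$ eigenvectors $x_k$ of $R+\lambda S$, the restriction of \emph{any} nullspace element to entries $N_b+1,\dots,N_b+N_R$ lies in $\spn\{x_k\}$ and is therefore annihilated by $V^\hermi$. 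As $g_j$ satisfies condition $2'$, this gives $V^\hermi (g_j)_{N_b+1:N_b+N_R} = 0$, which is precisely the bottom block.

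For uniqueness I would take an arbitrary solution $h\in\mathbb{C}^{N_R+N_b}$ of \eqref{systemForgj} and reconstruct a full vector $\tilde h\in\ell^2(\mathbb{N})$ obeying $1'$ and $2'$; uniqueness of $g_j$ then forces $\tilde h = g_j$ and hence $h = (g_j)_{1:N_R+N_b}$. The bottom block tells us the restriction $w := h_{N_b+1:N_b+N_R}$ is orthogonal to $V$, hence $w = \sum_k c_k x_k$, so I set $\phi := \sum_k c_k\,\mathrm{ext}(x_k)\in\nul(z-H)_\per$, which agrees with $h$ on entries $N_b+1,\dots,N_b+N_R$ by restriction--extension consistency. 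I would then repair the first $N_b$ entries using the Type~1a directions, defining $\tilde h := \phi + \sum_{i=1}^{N_b}(h_i-\phi_i)e_i$; since each $e_i\in\nul(z-H)_\per$ and is supported in the first $N_b$ coordinates, $\tilde h$ still lies in $\nul(z-H)_\per$ (condition $2'$) and now matches $h$ on all of its first $N_R+N_b$ entries. Because $(z-H)_\fin$ involves only those entries, condition $1'$ for $\tilde h$ collapses to the top block of \eqref{systemForgj}, which $h$ satisfies.

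The main obstacle I anticipate is the index bookkeeping around the block sizes: one must verify carefully that $R$ (hence $N_R$) is chosen large enough that the top block of \eqref{systemForgj} really captures all of $(z-H)_\fin$ and that the restriction map is a genuine left inverse of the extension map on the relevant eigenvectors. The conceptual crux is the correction step for the first $N_b$ entries, since the extensions $\mathrm{ext}(x_k)$ generically carry nonzero components among those coordinates; one must exploit the freedom in the Type~1a directions $e_1,\dots,e_{N_b}$ (which lie in $\nul(z-H)_\per$ and leave entries $N_b+1,\dots$ untouched) to reconcile $\tilde h$ with $h$ without leaving the nullspace. Once this is in place, invoking the uniqueness of $g_j$ from Lemma \ref{lem:uniqueness_of_g} closes the argument.
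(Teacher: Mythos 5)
Your proposal is correct and takes essentially the same route as the paper's proof: existence because the truncation of $g_j$ itself solves \eqref{systemForgj}, and uniqueness by extending an arbitrary solution to an $\ell^2(\mathbb{N})$ vector satisfying Conditions $1'$ and $2'$ and then invoking Lemma \ref{lem:uniqueness_of_g}. The paper simply asserts that such an extension exists, whereas you spell out its construction (the bottom block forces the restriction into the span of the $|\lambda|<1$ eigenvectors, whose extensions you then correct on the first $N_b$ entries using the $e_i$ directions), which fills in exactly the detail the paper leaves implicit.
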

\begin{proof}
First, note that existence of a solution is clear: $g_j$ is a solution.  Next, let $v$ be a solution of \eqref{systemForgj}.  Then, $v$ can be extended to $\ell^2(\mathbb{N})$ and would satisfy Conditions 1' and 2'.  Therefore, by Lemma \ref{lem:uniqueness_of_g} and the discussion following it, $v$ must equal $g_j$.
\end{proof}

\section{Proofs for Appendix \ref{sec:method_in_general}} \label{sec:method_in_general_proofs}

\subsection{Proof of Lemma \ref{lem:nullspace_zH_per}}
\label{sec:proof_of_lemma_nullspace}

\begin{lemma}
\label{lem:dimOfNullspace}
Let $A$ be an invertible operator on $\ell^2(\mathbb{N})$.  In the standard basis for $\ell^2(\mathbb{N})$, consider $A$ as an infinite dimensional matrix.  Let $A_n$ be the operator $A$ with $n$ rows removed.  Then, $\dim(\nul(A_n)) = n$.
\end{lemma}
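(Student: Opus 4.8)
The plan is to reduce the statement to the elementary fact that an invertible linear operator maps a subspace isomorphically onto its image, and therefore preserves dimension. First I would fix notation for which rows are deleted. Write $I = \{i_1, \dots, i_n\} \subset \mathbb{N}$ for the set of row indices removed in forming $A_n$, and set $W := \spn\{e_i : i \in I\} \subset \ell^2(\mathbb{N})$, an $n$-dimensional coordinate subspace. Deleting the rows indexed by $I$ means precisely that a vector $x$ lies in $\nul(A_n)$ if and only if $(Ax)_j = 0$ for every $j \notin I$; equivalently, $Ax$ is supported on $I$, i.e. $Ax \in W$. Thus the key identity to establish is
\begin{equation}
\nul(A_n) = \{x \in \ell^2(\mathbb{N}) : Ax \in W\} = A^{-1}(W).
\end{equation}

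Next I would invoke invertibility of $A$. Since $A$ is invertible on $\ell^2(\mathbb{N})$ it is in particular surjective, so $W \subseteq \ran(A)$ and each element of $W$ has a preimage, unique by injectivity; hence $A^{-1}(W)$ is a genuine subspace and $A$ restricts to a linear bijection from $A^{-1}(W)$ onto $W$. A linear bijection preserves dimension, so $\dim \nul(A_n) = \dim A^{-1}(W) = \dim W = n$, which is the desired conclusion. Concretely, $\{A^{-1} e_{i_1}, \dots, A^{-1} e_{i_n}\}$ is a basis of $\nul(A_n)$: it spans because $A^{-1}$ carries the spanning set $\{e_i\}_{i \in I}$ of $W$ to a spanning set of $A^{-1}(W)$, and it is linearly independent because $A^{-1}$ is injective while the $e_i$ are independent.

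I expect no substantial analytic obstacle: dimension here is a purely algebraic notion, and the only property of $A$ that enters the count is that it is a linear bijection (boundedness of $A^{-1}$ is not even required for the dimension argument). The closest thing to a subtlety — and the step I would write most carefully — is the bookkeeping in the first paragraph, namely confirming that the nullspace of the row-deleted operator is exactly the preimage $A^{-1}(W)$ of the coordinate subspace, and that $W$ lies in the range of $A$ so that this preimage is genuinely $n$-dimensional rather than smaller. Both points follow at once from surjectivity of the invertible operator $A$, so I anticipate the full proof to be quite short.
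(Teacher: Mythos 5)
Your proof is correct, and it takes a genuinely different and more elementary route than the paper's. You observe that $\nul(A_n)=\{x: (Ax)_j=0 \ \forall j\notin I\}=A^{-1}(W)$ with $W=\spn\{e_i: i\in I\}$, and then use only that the bijection $A$ restricts to an isomorphism of $A^{-1}(W)$ onto $W$; this even hands you the explicit basis $\{A^{-1}e_i\}_{i\in I}$ of the nullspace, and as you note it uses nothing beyond algebraic bijectivity of $A$. The paper instead works with adjoints: writing $R$ for the $n\times\infty$ matrix of deleted rows, it shows $\ran(R^\hermi)\cap\bar{\ran(A_n^\hermi)}=0$, that these two spaces together decompose $\ell^2(\mathbb{N})$, and that projecting a basis of $\ran(R^\hermi)$ onto $\bar{\ran(A_n^\hermi)}^\perp$ yields $n$ independent vectors, before invoking $\nul(A_n)=\ran(A_n^\hermi)^\perp$. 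That argument requires the closure and orthogonal-projection machinery of Hilbert space (and implicitly boundedness of $A_n$ for the range--nullspace duality), whereas yours sidesteps all of it. The one step you should write carefully is the one you flagged: a vector of $\ell^2(\mathbb{N})$ vanishing outside the finite index set $I$ is indeed a finite linear combination of the $e_i$, $i\in I$, so $Ax\in W$ is exactly equivalent to $A_nx=0$; with that in place the argument is complete and, in my view, preferable in its directness.
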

\begin{proof}
Let $R$ be the $n \times \infty$ matrix of the rows removed from $A$.  We first show that $R^\hermi$ and $A_n^\hermi$ span disjoint subspaces,
\begin{equation}
\ran(R^\hermi) \cap \bar{\ran(A_n^\hermi)} = 0. \label{intersectionOfRanges}
\end{equation}
We can prove this by contradiction: Suppose there is a nonzero vector $v$ in the intersection.  If $\{v_i\}_{i=1}^n$ are the columns of $R^\hermi$ and $\{u_k\}_{k=1}^\infty$ are the columns of $A_n^\hermi$, then 
\begin{equation}
v = \sum_{j=1}^{n} d_j v_j = \sum_{k=1}^\infty c_k u_k.
\end{equation}
This implies that $[\begin{array}{ccccccc} -d_1 & -d_2 & ... & -d_{n} & c_1 & c_2 & ... \end{array}]^\trans$ is in the nullspace of $A^\hermi$; this would then imply that $A$ is not invertible.  Thus, \eqref{intersectionOfRanges} must be true.

Combining \eqref{intersectionOfRanges} with the fact that the columns of $A^\hermi$ must span all of $\ell^2(\mathbb{N})$, we have that any vector in $\ell^2(\mathbb{N})$ can be written uniquely as a sum $v + u$ where $v \in \ran(R^\hermi)$ and $u \in \bar{\ran(A_n^\hermi)}$.

Next, we show that the orthogonal complement of $\bar{\ran(A_n^\hermi)}$ has the same dimension as $\ran(R^\hermi)$.  Let $P$ be the projection onto $\bar{\ran(A_n^\hermi)}^\perp$.  Then, for any vector $f = v+u \in \ell^2(\mathbb{N})$, the projection onto $\bar{\ran(A_n^\hermi)}^\perp$ is
\begin{equation}
Pf = Pv + Pu = Pv.
\end{equation}
Therefore, if $\{v_i\}_{i=1}^{n}$ is a basis for $\ran(R^\hermi)$, then
\begin{equation}
\bar{\ran(A_n^\hermi)}^\perp = \spn(\{Pv_i\}_{i=1}^{n}).
\end{equation}
Note that the set $\{Pv_i\}_{i=1}^{n}$ is linearly independent: If it was not then there would be a nonzero vector $v$ in $\ran(R^\hermi)$ such that $Pv = 0$; but this would imply that $v \in \bar{\ran(A_n^\hermi)}$, which is not possible.

Finally, using a well-known property of linear operators, we obtain the desired conclusion,
\begin{equation}
\dim(\nul(A_n)) = \dim(\ran(A_n^\hermi)^\perp) = n.
\end{equation}
\end{proof}

\begin{proof}[Proof of Lemma \ref{lem:nullspace_zH_per}]
By Lemma~\ref{lem:dimOfNullspace}, $\dim \left(\nul(z-H)_\per\right) = N_d$.  Let $\{v_i\}_{i=1}^{N_d}$ be an orthonormal basis of $\nul(z-H)_\per$.  By the commutativity of $(z-H)_\per$ and $S_p$, 	
\begin{align}
(z-H)_\per S_p v_i = 0, && i = 1, ..., N_d.
\end{align}
This implies that $\nul(z-H)_\per$ is an invariant subspace of $S_p$.  Let $V = [v_1, v_2, ..., v_{N_d}]$.  Then, the matrix $V^\hermi S_p V \in \mathbb{C}^{N_d \times N_d}$ has a Jordan normal form,
\begin{equation}
V^\hermi S_p V = MJM^{-1}
\end{equation}
Since $S_p$ is invariant on the range of $V$ (i.e., on the nullspace of $(z-H)_\per$), we can multiply on the left by $V$ (and the right by $M$) to obtain
\begin{equation}
S_p VM = VMJ. \label{partialJordanSp}
\end{equation}
This is essentially a partial Jordan normal form of $S_p$.  Since the columns of $V$ form a basis for $\nul(z-H)_\per$ and $M$ is nonsingular, \eqref{partialJordanSp} shows that there is a basis for $\nul(z-H)_\per$ that is made up of (generalized) eigenvectors of $S_p$.  In particular, such a basis is given by the columns of $VM$.  Lemma \ref{lem:eigenpairs_of_S_p} then implies that the nullspace of $(z-H)_\per$ is spanned by vectors that are either of form \eqref{evecForm} or that have a finite number of nonzero entries.  The vectors with a finite number of nonzero entries can be split into Type 1a and Type 1b by taking proper linear combinations.
\end{proof}

\subsection{Matrix pencils}

In order to prove Lemma \ref{lem:transferMatrix}, we will use results from the theory of matrix pencils.  We review the definition and relevant properties of matrix pencils here. For more information, see \cite{demmel1993generalized}.

\begin{definition}
A \textbf{\em matrix pencil} is a set of matrices of the form $\{A - \lambda B : \lambda \in \mathbb{C}\}$.  A matrix pencil is said to be \textbf{\em regular} if $\det(A - \lambda B)$ is not identically zero.  If $\det(A - \lambda B)$ is identically zero, then the matrix pencil is said to be \textbf{\em singular}.
\end{definition}

\begin{definition}
The \textbf{\em generalized eigenvalues} of the generalized eigenvalue problem $A - \lambda B = 0$ are the values of $\lambda$ for which $\det(A-\lambda B) = 0$.  If a vector $x$ satisfies $(A-\lambda_0 B)x = 0$ for some $\lambda_0$, then $x$ is said to be an eigenvector of $A-\lambda B$ corresponding to the generalized eigenvalue $\lambda_0$.
\end{definition}

Note that if a matrix pencil is singular, then the corresponding generalized eigenvalue problem has infinitely many generalized eigenvalues (any $\lambda \in \mathbb{C}$ is an eigenvalue).  A concrete example of this phenomenon is given by the generalized eigenvalue problem, 
\begin{equation}
\left(\left[\begin{array}{cc}
1 & 1 \\ 0 & 0
\end{array}\right]
+ \lambda \left[\begin{array}{cc}
1 & 0 \\ 0 & 0
\end{array}\right]\right) x = 0.
\end{equation}
This phenomenon is very different from that encountered in the standard eigenvalue problem, $(A - \lambda I)x = 0$, where there are always precisely $n$ eigenvalues if $A \in \mathbb{C}^{n \times n}$; additionally, the eigenspaces corresponding to distinct eigenvalues are not all independent.  This suggests that the theory of generalized eigenvalue problems is more complicated than that of the standard eigenvalue problem.  However, the generalized eigenproblems corresponding to regular matrix pencils maintain many nice properties that hold for standard eigenvalue problems.  In order to show this, we first introduce a concept integral to our proof of Lemma \ref{lem:transferMatrix} \dash the Kronecker canonical form.

\begin{theorem}[Kronecker canonical form]
\label{th:KroneckerCanonicalForm}
Consider a matrix pencil $A - \lambda B$.  It can be written into the canonical form,
\begin{equation}
A - \lambda B = P(\tilde{A} - \lambda \tilde{B}) Q^{-1},
\end{equation}
where $P$ and $Q$ are invertible square matrices, and $\tilde{A} - \lambda\tilde{B}$ is block diagonal.  Each block of $\tilde{A} - \lambda \tilde{B}$ is in one of four forms,
\begin{align*}
J_j(\alpha) &= \left[\begin{array}{ccccc}
\alpha - \lambda & 1 \\
 & \ddots & \ddots \\
 & & \alpha - \lambda & 1 \\
 & & & \alpha - \lambda
\end{array}\right], & N_j &= \left[\begin{array}{ccccc}
1 & - \lambda \\
& \ddots & \ddots \\
& & 1 & -\lambda \\
& & & 1
\end{array}\right], \\
L_j &= \left[\begin{array}{cccc}
-\lambda & 1 \\
 & \ddots & \ddots \\
 & & -\lambda & 1
\end{array}\right], & L_j^\trans &= \left[\begin{array}{cccc}
-\lambda \\
1 &  \ddots \\
& \ddots & -\lambda \\
& & 1
\end{array}\right],
\end{align*}
where $J_j(\alpha)$ and $N_j$ are $j \times j$ matrices, $L_j$ is $j \times (j+1)$ and $L^\trans_j$ is $(j+1) \times j$.  The value of $j$ may change from block to block.
\end{theorem}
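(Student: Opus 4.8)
The plan is to prove this classical normal form by the method of Kronecker and Weierstrass: we reduce the pencil under \emph{strict equivalence}, $A - \lambda B \mapsto P(A - \lambda B) Q^{-1}$ with $P, Q$ invertible constant matrices, peeling off the four families of canonical blocks one at a time. Strict equivalence is an equivalence relation which respects block-diagonal structure, so it suffices to show that any pencil splits off one canonical block and then to induct on the size. The overall strategy is to first extract all of the singular (rectangular) blocks $L_j$ and $L_j^\trans$, leaving a regular pencil, and then to put the regular pencil into Weierstrass form, from which the $J_j(\alpha)$ and $N_j$ blocks are read off.

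First I would treat the singular part, which is the technical heart of the argument. If the pencil is singular, then $\det(A - \lambda B) \equiv 0$ forces the existence of a nonzero polynomial null vector $x(\lambda) = x_0 + x_1 \lambda + \cdots + x_\varepsilon \lambda^\varepsilon$ with $(A - \lambda B) x(\lambda) = 0$ identically; choose one of minimal degree $\varepsilon$ (the smallest right minimal index). Expanding in powers of $\lambda$ gives the chain of relations
\begin{equation}
A x_0 = 0, \quad A x_{k} = B x_{k-1} \ (1 \le k \le \varepsilon), \quad B x_\varepsilon = 0.
\end{equation}
The crux is a linear-independence lemma: minimality of $\varepsilon$ implies that $x_0, \ldots, x_\varepsilon$ are linearly independent and that $B x_0, \ldots, B x_{\varepsilon - 1}$ are linearly independent. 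Completing these to bases of the domain and codomain produces a strict equivalence exhibiting an $L_\varepsilon$ block in the top-left corner and a strictly smaller pencil in the complementary block. Inducting on the size peels off every $L_j$; transposing the pencil (which interchanges the roles of $L_j$ and $L_j^\trans$) and repeating removes every $L_j^\trans$. What remains is a regular pencil.

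For the regular pencil $A - \lambda B$, with $\det(A - \lambda B) \not\equiv 0$, I would establish the Weierstrass form by a M\"obius-type trick. Since the determinant is a nonzero polynomial, it has only finitely many roots, so I may pick $c \in \mathbb{C}$ with $A - cB$ invertible and write
\begin{equation}
A - \lambda B = (A - cB)\left[ I - (\lambda - c) C \right], \qquad C := (A - cB)^{-1} B.
\end{equation}
Passing $C$ to Jordan form and splitting $\mathbb{C}^n$ into the span of the nonzero-eigenvalue Jordan blocks of $C$ (on which $C$ is invertible) and the span of its nilpotent part yields two diagonal blocks. On the invertible part, multiplying $I - (\lambda - c)C$ by the constant matrix $C^{-1}$ converts each Jordan block of $C^{-1}$ into a block $J_j(\alpha) - \lambda I$ with $\alpha$ a finite eigenvalue; on the nilpotent part one obtains, up to an easy change of basis, exactly the blocks $N_j$. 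Collecting the $L_j$, $L_j^\trans$, $J_j(\alpha)$ and $N_j$ produced in the two stages gives the claimed canonical form.

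The main obstacle is the singular reduction --- specifically the linear-independence lemma that justifies splitting off a single $L_\varepsilon$ block. Everything downstream (the induction over block sizes, the transpose symmetry, and the Weierstrass step, which is essentially the Jordan form of $(A - cB)^{-1}B$) is routine once that lemma is in hand. Since this is a standard result, an alternative would be simply to cite \cite{demmel1993generalized}; I include the sketch to indicate the self-contained route.
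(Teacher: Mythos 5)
The paper does not actually prove this theorem: it is stated as classical background and the reader is referred to \cite{demmel1993generalized}, so there is no in-paper argument to compare against. Your sketch is the standard Kronecker--Weierstrass reduction, and its architecture is sound: peel off the singular blocks using polynomial null vectors of minimal degree, then put the regular remainder into Weierstrass form via the Jordan decomposition of $C = (A-cB)^{-1}B$. The regular half is essentially complete as described (on the nilpotent part, multiplying by $(I+cN_0)^{-1}$, which is a polynomial in $N_0$ and so commutes with it and preserves its Jordan structure, produces the $N_j$ blocks). The one place the sketch understates the work is the extraction of $L_\varepsilon$: completing $x_0,\dots,x_\varepsilon$ and $Bx_0,\dots,Bx_{\varepsilon-1}$ to bases only brings the pencil to block \emph{upper-triangular} form, with $L_\varepsilon$ in the corner and a nonzero coupling block sitting above the complementary pencil $D$; a further strict equivalence is needed to eliminate that coupling, and its solvability (Gantmacher's reduction theorem) rests on a \emph{second} use of minimality of $\varepsilon$, namely that $D$ admits no polynomial null vector of degree less than $\varepsilon$. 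You flag the linear-independence lemma as the technical heart but not this step; both are needed before the induction closes. Since the paper itself treats the result as citable, the citation you offer as an alternative is exactly what the authors do, and is the appropriate level of rigor here.
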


The $J_j(\alpha)$ blocks correspond to finite generalized eigenvalue $\alpha$.  The $N_j$ blocks correspond to infinite generalized eigenvalues (i.e., correspond to generalized eigenvalue zero of the problem $\lambda A - B$).  $L_j$ and $L_j^\trans$ are called singular blocks.

Note that the Kronecker canonical form of a matrix pencil is very similar to the Jordan canonical form of a matrix.  However, in Jordan canonical form, there are only $J_j(\alpha)$ blocks.  The similarity between the two canonical forms becomes even closer when one restricts to regular matrix pencils.

\begin{lemma}
\label{lem:KroneckerFormOfRegularPencil}
If $A-\lambda B$ is a regular matrix pencil, then its Kronecker canonical form does not contain any singular blocks ($L_j$ or $L_j^\trans$).
\end{lemma}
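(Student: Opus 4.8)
The plan is to argue by contradiction, showing that a single singular block in the Kronecker form is incompatible with the nonvanishing of $\det(A - \lambda B)$. First I would record that regularity forces $A - \lambda B$ to be square (otherwise the determinant is undefined), so the block-diagonal pencil $\tilde A - \lambda \tilde B$ is square as well. Since $P$ and $Q$ in Theorem \ref{th:KroneckerCanonicalForm} are invertible and $\lambda$-independent, $\det(A - \lambda B) = \det(P)\det(Q^{-1})\det(\tilde A - \lambda \tilde B)$, so the pencil is regular if and only if $\det(\tilde A - \lambda \tilde B) \not\equiv 0$. It therefore suffices to show that the presence of any $L_j$ or $L_j^\trans$ block forces $\det(\tilde A - \lambda \tilde B)$ to vanish identically.

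The key observation is that the singular blocks are genuinely rectangular and carry a null direction for every value of $\lambda$. For an $L_j$ block (size $j \times (j+1)$), the equations $(L_j x)_i = -\lambda x_i + x_{i+1} = 0$ are solved for all $\lambda$ by $x = (1, \lambda, \lambda^2, \ldots, \lambda^j)^\trans \ne 0$; padding this vector with zeros in the coordinates of the remaining blocks produces a vector $\hat{x}(\lambda)$ with $(\tilde A - \lambda \tilde B)\hat{x}(\lambda) = 0$ for every $\lambda \in \mathbb{C}$. Because $\tilde A - \lambda \tilde B$ is square, this makes it rank-deficient for all $\lambda$, so its determinant, a polynomial in $\lambda$ vanishing everywhere, is identically zero, contradicting regularity. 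Dually, for an $L_j^\trans$ block (size $(j+1) \times j$) the same vector is a left null vector, since $(y^\trans L_j^\trans)_i = -\lambda y_i + y_{i+1}$, and the identical argument applies.

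To make the case analysis airtight I would note that, because the overall pencil is square, the column surplus contributed by the $L_j$ blocks (one extra column each) must exactly balance the row surplus of the $L_j^\trans$ blocks (one extra row each); hence the two families occur in equal number, and any singular block at all guarantees an $L_j$ block to which the null-vector argument above applies. (Alternatively one simply treats the $L_j$ and $L_j^\trans$ cases separately, as above, and no counting is needed.) The only genuinely delicate step is the passage from ``rank-deficient for every $\lambda$'' to ``$\det \equiv 0$,'' which is precisely where squareness of the regular pencil is used; everything else reduces to the direct verification of the explicit null and left-null vectors from the two-diagonal structure of $L_j$ and $L_j^\trans$.
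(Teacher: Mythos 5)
Your proof is correct. Note, however, that the paper offers no proof of this lemma at all: it is stated as a known fact from the theory of matrix pencils, with the reader referred to the cited literature (Demmel--K\r{a}gstr\"{o}m). So there is no in-paper argument to compare against; what you have written is the standard self-contained justification, and it is sound. The key device --- exhibiting the polynomial vector $(1,\lambda,\dots,\lambda^j)^\trans$ as a right null vector of an $L_j$ block for every $\lambda$ (and as a left null vector of $L_j^\trans$), padding with zeros across the other blocks, and using $\det(A-\lambda B)=\det(P)\det(\tilde A-\lambda\tilde B)\det(Q^{-1})$ to conclude that the determinant vanishes identically --- is exactly the same mechanism the paper does use later, in the proof of Lemma \ref{lem:pencilIsRegular}, where the presence of an $L_j$ block is shown to force linear dependence among the columns of $z-H$. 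Your dimension-count remark (each $L_j$ contributes one surplus column, each $L_j^\trans$ one surplus row, so they balance in a square pencil) is correct but, as you say, unnecessary since you treat both block types directly. The only detail worth a passing mention is the degenerate case $j=0$, where $L_0$ is a $0\times 1$ block (an identically zero column) and $L_0^\trans$ a $1\times 0$ block (an identically zero row); your formula for the null vector still covers this, since it reduces to the single entry $1$ sitting in a column of the pencil that is zero for all $\lambda$.
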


From this lemma, it is easy to see that generalized eigenproblems corresponding to regular matrix pencils satisfy two properties familiar from standard eigenproblem theory:  (1) If $A, B \in \mathbb{C}^{n \times n}$, then there are precisely $n$ eigenvalues (including the infinite eigenvalues and counting algebraic multiplicity); and (2) The eigenspaces corresponding to distinct eigenvalues are all independent.

\subsection{Proof of Lemma \ref{lem:transferMatrix}}
\label{sec:proof_transferMatrix}

Before proving Lemma \ref{lem:transferMatrix}, we first show that the matrix pencil considered in that lemma is regular.

\begin{lemma}
\label{lem:pencilIsRegular}
The matrix pencil $R + \lambda S$ in Lemma \ref{lem:transferMatrix} is regular.
\end{lemma}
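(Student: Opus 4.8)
The plan is to argue by contradiction, turning singularity of the pencil into an infinite-dimensional nullspace for $(z-H)_\per$ and contradicting the dimension count of Lemma \ref{lem:dimOfNullspace}. So suppose $R + \lambda S$ were singular, i.e. $\det(R + \lambda S) \equiv 0$ as a polynomial in $\lambda$. Then $R + \lambda S$ has a nontrivial kernel for \emph{every} $\lambda \in \mathbb{C}$. In particular, for each of the countably many distinct values $\lambda = \tfrac12, \tfrac13, \tfrac14, \dots$, all lying in $(0,1)$, I may select a nonzero eigenvector $x_\lambda$ with $(R + \lambda S)x_\lambda = 0$. The goal is to show these produce infinitely many linearly independent decaying nullvectors of the periodic operator.

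First I would verify that each such $x_\lambda$ extends to a nonzero element of $\nul(z-H)_\per$ lying in $\ell^2(\mathbb{N})$. Recall $R$ and $S$ are consecutive blocks of $(z-H)_\per$ whose side lengths are integer multiples of $p$, aligned with the top and chosen large enough to contain every nonzero entry in the rows they cover; consequently $[R\,|\,S]$ is exactly the top band of rows of $(z-H)_\per$ restricted to its nonzero columns. Thus $(R + \lambda S)x_\lambda = 0$ says precisely that these rows annihilate the self-similar extension $\xi_\lambda = [\,x_\lambda,\ \lambda x_\lambda,\ \lambda^2 x_\lambda,\ \dots\,]$ (with the backward continuation filling the first $N_b$ entries). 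By translation invariance of the periodic part under the shift by $N_R$, a multiple of $p$, together with the self-similarity $S_{N_R}$-structure of $\xi_\lambda$, every subsequent block of rows is annihilated as well, so $\xi_\lambda \in \nul(z-H)_\per$. Since $|\lambda| < 1$ the geometric tail decays and $\xi_\lambda \in \ell^2(\mathbb{N})$, and $\xi_\lambda \ne 0$ because its block beginning at index $N_b + 1$ equals $x_\lambda \ne 0$.

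Next I would establish that the family $\{\xi_\lambda\}$ is linearly independent for distinct $\lambda$. Discarding the first $N_b + N_R$ entries, the tail of $\xi_\lambda$ is a clean eigenvector of the shift-by-$N_R$ operator with eigenvalue $\lambda$ — the analogue of the form \eqref{evecForm} in Lemma \ref{lem:eigenpairs_of_S_p} with period $N_R$ — and these tails are nonzero. Eigenvectors of a single linear operator belonging to distinct eigenvalues are linearly independent, so a nontrivial finite relation among the $\xi_\lambda$ would induce one among their tails, which is impossible (equivalently, compare dominant geometric decay rates). Hence $\nul(z-H)_\per$, viewed as an operator on $\ell^2(\mathbb{N})$, would contain an infinite linearly independent family and be infinite-dimensional, contradicting $\dim \nul(z-H)_\per = N_d < \infty$ from Lemma \ref{lem:dimOfNullspace}. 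Therefore $\det(R + \lambda S)$ is not identically zero and the pencil is regular. Note this argument needs only singularity $\Rightarrow$ kernels for all $\lambda$, so it avoids the full Kronecker machinery of Theorem \ref{th:KroneckerCanonicalForm}.

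The step I expect to be the main obstacle is the extension claim in the second paragraph: showing that an arbitrary pencil eigenvector $x_\lambda$ really does extend to a genuine nullvector of the infinite operator, with no edge effects at the top of $(z-H)_\per$ or across consecutive $N_R$-blocks. This is exactly the delicate bookkeeping controlled by the sizing hypotheses on $R$ and $S$ (sides an integer multiple of $p$, aligned with the top, and large enough to capture all nonzero entries in the covered rows as well as all nonzero entries of the Type 1b vectors) together with the $N_b$-offset built into the definition of the extension, so the plan is to lean on those design requirements and the translation invariance of $(z-H)_\per$ rather than re-derive the column alignment from scratch. Everything else — producing kernels at all $\lambda$ from singularity, $\ell^2$ membership from $|\lambda| < 1$, and independence from distinct decay rates — is routine.
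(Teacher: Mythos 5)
Your proof is correct, but it takes a genuinely different route from the paper's. The paper argues that a singular pencil forces an $L_j$ (or $L_j^\trans$) block in the Kronecker canonical form of $R+\lambda S$ (Theorem \ref{th:KroneckerCanonicalForm}); matching coefficients of powers of $\lambda$ in the resulting identity $(\tilde R + \lambda\tilde S)[1,\lambda,\dots,\lambda^j,0]^\trans = 0$ produces a \emph{finite} linear dependence among columns (or rows) of $z-H$, contradicting $z\notin\sigma(H)$ directly. You instead extract a nonzero kernel vector of $R+\lambda S$ at each of infinitely many distinct $\lambda\in(0,1)$, extend each to a decaying element of $\nul(z-H)_\per$, and contradict $\dim\nul(z-H)_\per = N_d$ from Lemma \ref{lem:dimOfNullspace}. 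Your worry about the extension step is unwarranted: the sizing hypotheses on $R$ and $S$ (sides a multiple of $p$, aligned with the top, containing all nonzero entries in the rows they cover) plus the vanishing of the first $N_b$ columns of $(z-H)_\per$ are exactly what make each block of rows act on $\xi_\lambda$ as $\lambda^k(R+\lambda S)x_\lambda$, and the paper itself invokes this same extension fact as ``straightforward'' inside the proof of Lemma \ref{lem:transferMatrix}, so there is no circularity. Your independence argument (distinct eigenvalues of the shift-by-$N_R$ operator, or equivalently distinct geometric decay rates) is also sound. What your route buys is that this particular lemma no longer needs the Kronecker machinery at all, reusing only ingredients (Lemma \ref{lem:dimOfNullspace} and the extension construction) already present; what it costs is that the contradiction is less localized, passing through an infinite-dimensional nullspace rather than a finite column dependence, and the Kronecker form is still required later in the proof of Lemma \ref{lem:transferMatrix} (via Lemma \ref{lem:KroneckerFormOfRegularPencil}) to get independence of eigenspaces for distinct eigenvalues, so the machinery is not eliminated from the paper as a whole.
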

\begin{proof}
The idea is as follows:  If $R+\lambda S$ is singular, then its Kronecker canonical form contains an $L_j$ (or $L_j^\trans$) block; this fact can then be used to show that columns (or rows) of $(z-H)$ must be linearly dependent, which contradicts the assumption that $z \notin \sigma(H)$.  The details now follow.

First, we will only consider the case that the Kronecker canonical form
\begin{equation}
R + \lambda S = P(\tilde{R} + \lambda \tilde{S})Q^{-1}
\end{equation}
has an $L_j$ block; the $L_j^\trans$ case is similar.  Assume without loss of generality that the top-left block of $\tilde{R} + \lambda \tilde{S}$ is an $L_j$ block.  This implies that 
\begin{align}
(\tilde{R} + \lambda \tilde{S})\left[\begin{array}{c}
1\\ \lambda \\ \vdots \\ \lambda^j \\ \vec{0}
\end{array}\right] &= 0, & \text{for all $\lambda \in \mathbb{C}$.}
\end{align}
By considering the structure of $\tilde{R}$ and $\tilde{S}$ in the upper-left block, one can rewrite this as
\begin{equation}
\tilde{S} e_j \lambda^{j+1} + (\tilde{R} e_j + \tilde{S} e_{j-1}) \lambda^j + ... + (\tilde{R} e_2 + \tilde{S} e_1) \lambda + \tilde{R} e_1 = 0.
\end{equation}
Since this must be true for all $\lambda$, the coefficient in front of each $\lambda^k$ must be zero.  This then gives the set of equations,
\begin{equation}
\left[\begin{array}{ccccc}
\tilde{S} \\
\tilde{R} & \ddots \\
 & \ddots & \tilde{S} \\
 & & \tilde{R}
\end{array}\right] \left[\begin{array}{c}
e_j \\ e_{j-1} \\ \vdots \\ e_1
\end{array}\right] = 0.
\end{equation}
Finally, we multiply each block equation on the left by $P$ and multiply a $Q^{-1}Q$ in between the matrix-vector multiplications to obtain
\begin{equation}
\left[\begin{array}{ccccc}
S \\
R & \ddots \\
 & \ddots & S \\
 & & R
\end{array}\right] \left[\begin{array}{c}
Qe_j \\ Qe_{j-1} \\ \vdots \\ Qe_1
\end{array}\right] = 0.
\end{equation}
Since $Q$ has full rank, the vector is nonzero.  The columns of the matrix on the left are identical to the columns of $(z-H)$ as the matrix goes off to infinity (with the non-relevant zero-rows above and below removed).  This equation then implies that the columns of $(z-H)$ are linearly dependent.
\end{proof}

Using this lemma and the nice properties that follow from it (as mentioned below Lemma \ref{lem:KroneckerFormOfRegularPencil}), we can now prove Lemma \ref{lem:transferMatrix}.

\begin{proof}[Proof of Lemma \ref{lem:transferMatrix}]
As noted before the statement of the lemma, it is simple to check that restrictions of the Type 1b and Type 2 basis vectors satisfy $(R+\lambda S)\xi = 0$ with $|\lambda| < 1$.  Since $R + \lambda S$ is a regular pencil (Lemma \ref{lem:pencilIsRegular}), the Kronecker canonical form implies that eigenspaces corresponding to different eigenvalues are independent.  Now suppose that the set of restrictions of the Type 1b and Type 2 basis vectors of $\nul(z-H)_\per$ is linearly dependent.  This would imply that there is a dependent set would corresponds to the same eigenvalue.  However, this then implies that the extensions of these vectors are linearly dependent \dash contradicting the assumption that they were not.  Therefore, the restrictions of Type 1b and Type 2 vectors form a linearly independent set of $N_d - N_b$ eigenvectors of $R+\lambda S$.

Let $\xi$ be an eigenvector of $R+\lambda S$ with $|\lambda| < 1$.  Then, it is straightforward to show that the extension of $\xi$ is in $\nul(z-H)_\per$.  Therefore, the extension of $\xi$ must be a linear combination of the basis vectors for $\nul(z-H)_\per$.  Hence, $\xi$ is a linear combination of the restrictions of these basis vectors.  Therefore, there are precisely $N_d-N_b$ independent eigenvectors of $R + \lambda S$ with $|\lambda| < 1$.

Finally, it is straightforward to show that the extensions of any $N_d-N_b$ independent eigenvectors of $R+\lambda S$ (combined with $\{e_j\}_{j=1}^{N_b}$) gives a basis for $\nul(z-H)_\per$.
\end{proof}

\end{appendices}

\begingroup
\sloppy
\printbibliography
\endgroup

\end{document}